\newcommand\tsup[2][2]{%
 \def\useanchorwidth{T}%
  \ifnum#1>1%
    \stackon[-1.3ex]{\tsup[\numexpr#1-1\relax]{#2}}{\mathchar"307E}%
  \else%
    \stackon[-1ex]{#2}{\mathchar"307E}%
  \fi%
}
\newtheorem{theorem}{Theorem}
\newtheorem{lemma}[theorem]{Lemma}
\newtheorem{definition}[theorem]{Definition}
\newtheorem{obs}[theorem]{Observation}
\newtheorem{claim}[theorem]{Claim}
\newtheorem{problem}[theorem]{Problem}
\newcommand{\tOh}{\widetilde{O}}
\newcommand{\wmax}{w_{\textup{max}}}
\newcommand{\pmax}{p_{\textup{max}}}
\DeclareMathOperator{\supp}{supp}
\newcommand{\cW}{{\cal W}}
\newcounter{sideremark}
\title{Knapsack with Small Items in Near-Quadratic Time}
\date{}
\author{
  Karl Bringmann\thanks{Saarland University and Max-Planck-Institute for Informatics, Saarland Informatics Campus, Saarbr\"ucken, Germany.
   \texttt{bringmann@cs.uni-saarland.de}. This work is part of the project TIPEA that has received funding from the European Research Council (ERC) under the European Unions Horizon 2020 research and innovation programme (grant agreement No. 850979).
  }
}
\begin{document}

\maketitle

\begin{abstract}

The Knapsack problem is one of the most fundamental NP-complete problems at the intersection of computer science, optimization, and operations research.
A recent line of research worked towards understanding the complexity of pseudopolynomial-time algorithms for Knapsack parameterized by the maximum item weight $\wmax$ and the number of items~$n$. 
A conditional lower bound rules out that Knapsack can be solved in time $O((n+\wmax)^{2-\delta})$ for any $\delta > 0$ [Cygan, Mucha, Wegrzycki, Wlodarczyk'17, Künnemann, Paturi, Schneider'17].
This raised the question whether Knapsack can be solved in time $\tOh((n+\wmax)^2)$. This was open both for 0-1-Knapsack (where each item can be picked at most once) and Bounded Knapsack (where each item comes with a multiplicity).
The quest of resolving this question lead to algorithms that solve Bounded Knapsack in time $\tOh(n^3 \wmax^2)$ [Tamir'09], $\tOh(n^2 \wmax^2)$ and $\tOh(n \wmax^3)$ [Bateni, Hajiaghayi, Seddighin, Stein'18], $O(n^2 \wmax^2)$ and $\tOh(n \wmax^2)$ [Eisenbrand and Weismantel'18], $O(n + \wmax^3)$ [Polak, Rohwedder, Wegrzycki'21], and very recently $\tOh(n + \wmax^{12/5})$ [Chen, Lian, Mao, Zhang'23]. 

In this paper we resolve this question by designing an algorithm for Bounded Knapsack with running time $\tOh(n + \wmax^2)$, which is conditionally near-optimal. This resolves the question both for the classic 0-1-Knapsack problem and for the Bounded Knapsack problem.
\end{abstract}


\section{Introduction}

%

Knapsack is one of the most fundamental problems at the intersection of computer science, optimization, and operations research. It appeared as one of Karp's original 21 NP-hard problems~\cite{Karp72} and has been subject to an extensive amount of research, see, e.g., the book~\cite{KPP04book}. 

In the 0-1-Knapsack problem we are given a weight budget $W$ and $n$ items, and for each item~$i$ we are given its weight $w_i$ and its profit $p_i$. The goal is to select a set of items of total weight at most $W$ and maximum total profit. Formally, 0-1-Knapsack is the following problem:
$$ \max\{ p^T x : w^T x \le W,\, x \in \{0,1\}^n \}. $$

A generalization of 0-1-Knapsack is the Bounded Knapsack problem, in which each item $i$ also comes with a multiplicity $u_i$, and item $i$ can be selected up to $u_i$ times. This can be viewed as a compressed representation of a 0-1-Knapsack instance. In particular, any algorithm for Bounded Knapsack also solves 0-1-Knapsack. Formally, Bounded Knapsack is the following problem:
$$ \max\{ p^T x : w^T x \le W,\, 0 \le x \le u,\, x \in \mathbb{Z}^n \}. $$

There is a vast amount of literature on both problems.
When the input integers are small, of particular importance are pseudopolynomial-time algorithms, i.e., algorithms whose running time depends polynomially on $n$ and the input integers. 
A well-known example is Bellman's dynamic programming algorithm from 1957~\cite{bellman1957dynamic} that solves 0-1-Knapsack in time $O(n W)$ and Bounded Knapsack in time $\tOh(n W)$.\footnote{By $\tOh$-notation we hide logarithmic factors, i.e., $\tOh(T) = \bigcup_{c \ge 0} O(T \log^c T)$.} 
In the last few years, research on pseudopolynomial-time algorithms for Knapsack was driven by developments in fine-grained complexity theory (e.g., \cite{CyganMWW19,KunnemannPS17,AbboudBHS22}) and in proximity bounds for integer programming (e.g., \cite{EisenbrandW20,JansenR19}).
In particular, fine-grained complexity contributed conditional lower bounds for 0-1-Knapsack and Bounded Knapsack that rule out time $O((n+W)^{2-\delta})$~\cite{CyganMWW19,KunnemannPS17} and $W^{1-\delta} \cdot 2^{o(n)}$~\cite{AbboudBHS22} for any constant $\delta > 0$. That is, by now we have evidence that the running time $\tOh(n W)$ is near-optimal.

To cope with these hardness results, recent work studied the maximum weight of any item $\wmax$ as a parameter. This is especially interesting for Bounded Knapsack, where the parameter $W$ can be much larger than $\wmax$ and $n$ due to the multiplicities. 
For this reason, it is far from obvious that Bounded Knapsack can be solved in polynomial time in terms of $n$ and $\wmax$. The first such algorithm was designed by Tamir~\cite{Tamir09}, achieving time $\tOh(n^3 \wmax^3)$.
This raised the question: \emph{What is the optimal running time for Bounded Knapsack in terms of $n$ and $\wmax$?} More precisely, what is the smallest constant $c$ such that Bounded Knapsack can be solved in time $\tOh((n + \wmax)^c)$?
%
%
The same conditional lower bound that rules out time $O((n+W)^{2-\delta})$ also rules out time $O((n+\wmax)^{2-\delta})$ for any constant $\delta > 0$~\cite{CyganMWW19,KunnemannPS17}. Thus, we have $c \ge 2$, and the driving question becomes:


\begin{center}
\emph{Can Bounded Knapsack be solved in time $\tOh((n+\wmax)^2)$?}
\end{center}

This question motivated a line of research developing Bounded Knapsack algorithms with better and better dependence on $n$ and $\wmax$, see Table~\ref{tab:running-times}. In particular, after Tamir's first polynomial algorithm with time complexity $\tOh(n^3 \wmax^2)$~\cite{Tamir09}, a line of research developed and refined proximity bounds by exchange arguments leading to running times $\tOh(n^2 \wmax^2)$ and $\tOh(n \wmax^3)$~\cite{BateniHSS18}, $O(n^2 \wmax^2)$ and $\tOh(n \wmax^2)$~\cite{EisenbrandW20}, and $O(n + \wmax^3)$~\cite{PolakRW21}. Very recently, Chen et al.\ used exchange arguments based on additive combinatorics to obtain time $\tOh(n + \wmax^{2.4})$~\cite{ChenLMZ23}. 
However, despite extensive research our understanding of the optimal running time in terms of $n$ and $\wmax$ is still incomplete. In particular, the above driving question remained open.

\setlength{\tabcolsep}{10pt}
\bgroup
\def\arraystretch{1.3}
\begin{table}[ht]
    \caption{Pseudopolynomial-time algorithms for 0-1 Knapsack and Bounded Knapsack parameterized by the number of items $n$ and the maximum weight of any item $\wmax$.} \label{tab:running-times}
    \centering
    \begin{tabular}{|p{8cm}ll|}
        \hline
        Reference & Running Time & Problem Variant \\
        \hline
        Bellman~\cite{bellman1957dynamic} & $O(n^2 \wmax)$ & 0-1-Knapsack  \\
        Tamir~\cite{Tamir09} & $\tOh(n^3 \wmax^2)$ & Bounded Knapsack \\
        Bateni, Hajiaghayi, Seddighin and Stein~\cite{BateniHSS18} & $\tOh(n^2 \wmax^2)$ & Bounded Knapsack  \\
        Bateni, Hajiaghayi, Seddighin and Stein~\cite{BateniHSS18} & $\tOh(n \wmax^3)$ & Bounded Knapsack  \\
        \parbox{8cm}{Bateni, Hajiaghayi, Seddighin and Stein~\cite{BateniHSS18}, \\ Axiotis and  Tzamos~\cite{AxiotisT19} (and implicit in \cite{KellererP04})} & $O(n \wmax^2)$ & 0-1-Knapsack  \\
        Eisenbrand and Weismantel~\cite{EisenbrandW20} & 
$O(n^2 \wmax^2)$ & Bounded Knapsack \\
        Eisenbrand and Weismantel~\cite{EisenbrandW20} & 
$\tOh(n \wmax^2)$ & Bounded Knapsack \\
        Polak, Rohwedder and Wegrzycki~\cite{PolakRW21} & $O(n + \wmax^3)$ & Bounded Knapsack  \\
        Jin~\cite{Jin23} & $\tOh(n + \wmax^{2.5})$ & 0-1-Knapsack  \\
        Chen, Lian, Mao and Zhang~\cite{ChenLMZ23} & $\tOh(n + \wmax^{2.4})$ & Bounded Knapsack  \\
        \textbf{This work} & $\tOh(n + \wmax^{2})$ & Bounded Knapsack  \\
        Independent work by Jin~\cite{JinArxiv23} & $\tOh(n + \wmax^{2})$ & 0-1-Knapsack  \\
        \hline
    \end{tabular}
\end{table}
\egroup

For 0-1-Knapsack the state of the art is very similar to its generalization Bounded Knapsack. In addition to the algorithms listed above, the following algorithms were developed specifically for 0-1-Knapsack: Note that instances with $W \ge n \wmax$ are trivial for 0-1-Knapsack. Instances with $W < n \wmax$ can be solved in time $O(n W) \le O(n^2 \wmax)$ by Bellman's classic dynamic programming algorithm~\cite{bellman1957dynamic}, in time $O(n + \wmax W) \le O(n \wmax^2)$~\cite{BateniHSS18,AxiotisT19} (also implicit in~\cite{KellererP04}), and since recently in time $\tOh(n + \wmax^{2.5})$~\cite{Jin23}.
The same conditional lower bound as for Bounded Knapsack also works for 0-1-Knapsack, and thus for both problems our understanding had the same gap. 

In particular, the driving question posed above was open not only for Bounded Knapsack but also for 0-1-Knapsack. We remark that this driving question has been asked repeatedly (for Bounded Knapsack or 0-1-Knapsack), see, e.g.~\cite{PolakRW21,BringmannC22,Jin23,ChenLMZ23}.

%
%
%
%

\subsection{Our Results}

In this paper, we resolve the driving question by designing an algorithm for Bounded Knapsack that runs in time $\tOh(n + \wmax^2)$.


\begin{restatable}{theorem}{thmmain}
\label{thm:main}
  Bounded Knapsack can be solved by a deterministic algorithm in time $\tOh(n + \wmax^2)$.
\end{restatable}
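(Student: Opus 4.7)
The plan is to continue the proximity-plus-convolution line of work and to push both components to near-optimality. First, I would solve the LP relaxation and extract a greedy integer vector $\tx \in \ZZ_{\ge 0}^n$ in time $\tOh(n)$; after this preprocessing the parameter $n$ no longer appears in subsequent running times, and the problem reduces to finding an integer correction $y$ such that $x^* = \tx + y$ is optimal. Proximity bounds in the style of Eisenbrand--Weismantel and Polak--Rohwedder--Wegrzycki let me assume $\|y\|_1 \le \tOh(\wmax)$ and $|w^T y| \le \tOh(\wmax^2)$, so after truncating multiplicities the remaining ``core'' instance has only $\tOh(\wmax^2)$ relevant item slots spread over the $\wmax$ possible weight values.

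Next, I would exploit the concave structure inside each weight class. For each weight $w \in \{1, \dots, \wmax\}$, define $f_w(jw)$ to be the maximum extra profit obtained by changing the multiplicity of items of weight $w$ by exactly $j$ units ($j > 0$ corresponds to inserting the top-$j$ currently unused items, $j < 0$ to removing the bottom-$|j|$ currently chosen items). Each $f_w$ is concave on the arithmetic progression $w \cdot \ZZ$ and, after restricting to $|j| \le \tOh(\wmax)$, has $\tOh(\wmax)$ nontrivial entries. The optimal correction profit is the value at the appropriate target of the aggregated $(\max,+)$-convolution
\[
  F \;:=\; f_1 \oplus f_2 \oplus \cdots \oplus f_{\wmax},
\]
restricted to the window $[-\tOh(\wmax^2),\, \tOh(\wmax^2)]$.

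The hardest and most novel step is evaluating $F$ in time $\tOh(\wmax^2)$: a balanced binary merge tree spends $\tOh(\wmax^3)$, because each of the $\log \wmax$ levels processes sequences of length up to $\wmax^2$. To save the remaining factor I plan to use two ingredients: (i) since each $f_w$ is \emph{concave on an arithmetic progression}, its $(\max,+)$-convolution with a \emph{general} sequence reduces to column-maxima queries on a concave Monge matrix and can be solved via SMAWK in time near-linear in the length of the general sequence rather than in the full $\wmax^2$ window; and (ii) a weighted, Huffman-like merge order that keeps the running accumulator short whenever a bulky $f_w$ is folded in, amortizing the total cost against the $\wmax^2$ output budget. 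The main obstacles I anticipate are (a) arguing that this telescoping really delivers $\tOh(\wmax^2)$ deterministically rather than only in an amortized/randomized sense, and (b) simultaneously handling the positive and negative halves of every $f_w$ (insertions and removals within the same weight class), which forces the convolution machinery to operate on signed weights throughout and requires care when stitched together with the Monge subroutine and with the proximity window of size $\tOh(\wmax^2)$.
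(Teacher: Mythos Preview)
Your preprocessing and your use of per-weight concave functions together with SMAWK are exactly the standard ingredients, and they match what the paper does up to and including the $\tOh(n+\wmax^3)$ baseline of Polak--Rohwedder--Wegrzycki. The gap is precisely the step you yourself flag as obstacle~(a): a merge-order trick alone cannot bring the convolution cost down from $\tOh(\wmax^3)$ to $\tOh(\wmax^2)$.

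Here is why. After truncation the accumulator lives in a window of length $\Theta(\wmax^2)$, and each fold of a concave $f_w$ into the accumulator via SMAWK costs $\Theta(|\text{accumulator}|+w\cdot\wmax)$. No matter which order you choose, the accumulator's support reaches length $\Theta(\wmax^2)$ after you have processed weight classes whose indices sum to $\Theta(\wmax)$; with weights $1,\dots,\wmax$ this happens after at most $O(\sqrt{\wmax})$ folds in the best case. From that point on, each of the remaining $\Theta(\wmax)$ folds costs $\Theta(\wmax^2)$, so the total is $\Theta(\wmax^3)$ regardless of the Huffman schedule. Merging concave leaves among themselves first does not help either, because the convolution of concave sequences supported on \emph{different} arithmetic progressions is not concave, so you lose the SMAWK speedup at every internal node of a balanced tree.

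What the paper does instead is orthogonal to merge order: it proves a \emph{refined proximity bound} via additive combinatorics. The items are partitioned into $k=O(\log^2 n)$ parts $I_1,\dots,I_k$, and for each part one shows that any optimal solution deviates from the greedy prefix by total weight at most $\Delta_j$ inside $I_j$, with the crucial guarantee
\[
  |\supp(w(I_j))|\cdot \Delta_j \;\le\; \tOh(\wmax^2).
\]
In other words, on a part containing $s$ distinct weights the convolution window shrinks to $\tOh(\wmax^2/s)$, so folding in the $s$ concave pieces of that part costs $s\cdot \tOh(\wmax^2/s)=\tOh(\wmax^2)$. Summing over the $O(\log^2 n)$ parts stays $\tOh(\wmax^2)$. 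This trade-off between the number of weight classes and the proximity window is the missing idea in your plan; the classic proximity bound you invoke gives only the single-part estimate $|\supp(w([n]))|\cdot \Delta \le \wmax\cdot\wmax^2=\wmax^3$, which is exactly the barrier you are hitting.
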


This resolves the driving question of a long line of research~\cite{Tamir09,BateniHSS18,AxiotisT19,
EisenbrandW20,PolakRW21,Jin23,ChenLMZ23}. As mentioned above, our running time $\tOh(n + \wmax^2)$ is conditionally near-optimal; more precisely an algorithm with running time $O((n+\wmax)^{2-\delta})$ for any constant $\delta > 0$ would violate the $(\min,+)$-Convolution Conjecture~\cite{CyganMWW19,KunnemannPS17}. 
Our result adds to a small, but growing number of conditionally near-optimal pseudopolynomial-time algorithms for classic optimization problems, see the related work below.

\medskip
Alternatively, in Theorem~\ref{thm:main} we can replace $\wmax$ by the maximum profit of any item $\pmax$. 


\begin{restatable}{theorem}{thmmainprofit}
\label{thm:mainprofit}
  Bounded Knapsack can be solved by a deterministic algorithm in time $\tOh(n + \pmax^2)$.
\end{restatable}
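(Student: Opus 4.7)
The plan is to reduce the $\pmax$-parameterized version of Bounded Knapsack to the $\wmax$-parameterized version established by Theorem~\ref{thm:main}, using a complementation $y = u - x$ that swaps the roles of weights and profits.

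If $W \ge w^T u$, the answer is trivially $p^T u$ (take every available item). Otherwise set $W' := w^T u - W > 0$ and substitute $y = u - x$; both constraints $0 \le x \le u$ and $0 \le y \le u$ are then equivalent, and one obtains the identity
\[
\max\{p^T x : w^T x \le W,\, 0 \le x \le u,\, x \in \mathbb{Z}^n\} \;=\; p^T u \;-\; \min\{p^T y : w^T y \ge W',\, 0 \le y \le u,\, y \in \mathbb{Z}^n\}.
\]
It therefore suffices to compute the quantity $M := \min\{p^T y : w^T y \ge W',\, 0 \le y \le u,\, y \in \mathbb{Z}^n\}$ and return $p^T u - M$.

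For each budget $B \ge 0$, I introduce the auxiliary Bounded Knapsack instance $\mathcal{I}_B$ in which item $i$ has weight $p_i$, profit $w_i$, and multiplicity $u_i$, with weight budget $B$; write $h(B) := \max\{w^T y : p^T y \le B,\, 0 \le y \le u,\, y \in \mathbb{Z}^n\}$ for its optimum. A short two-direction argument gives $M = \min\{B \ge 0 : h(B) \ge W'\}$, since any $y$ witnessing $M$ is feasible for $h(p^T y)$ and, conversely, any $y$ witnessing $h(B) \ge W'$ is feasible for the min problem defining $M$. The key observation is that the maximum item weight in $\mathcal{I}_B$ equals $\max_i p_i = \pmax$, so Theorem~\ref{thm:main} computes $h(B)$ deterministically in time $\tOh(n + \pmax^2)$.

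Because $h$ is monotone nondecreasing in $B$ and $M$ lies in $[0, p^T u]$, I would binary search on $B$ for the smallest value satisfying $h(B) \ge W'$. The resulting $\Oh(\log(n \pmax))$ multiplicative overhead is absorbed into $\tOh$, and the final answer is $p^T u - M$. The main subtlety to verify is that Theorem~\ref{thm:main} still applies to $\mathcal{I}_B$ when the budget $B$ is as large as $p^T u$, which may far exceed $\pmax$; this is fine because the running-time bound in Theorem~\ref{thm:main} depends only on $n$ and the maximum item weight, and is independent of the weight budget of the instance.
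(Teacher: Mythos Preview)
Your reduction by complementation $y = u - x$ followed by a profit--weight swap and binary search is sound, and it is in the spirit of the reduction from \cite[Section~4]{PolakRW21} that the paper simply cites. There is, however, a genuine gap in the running-time analysis of the binary search. You state that $M \in [0,\, p^T u]$ and then claim only an $O(\log(n\,\pmax))$ overhead. For \emph{Bounded} Knapsack these two statements are inconsistent: $p^T u$ can be as large as $n\,\pmax\,u_{\max}$ with $u_{\max} = \max_i u_i$, so the search actually costs $\Theta(\log(p^T u))$ iterations. The extra $\log u_{\max}$ factor is \emph{not} absorbed by the paper's definition $\tOh(T) = O(T\log^c T)$ with $T = n + \pmax^2$; you have implicitly treated the instance as if all $u_i = 1$.

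The fix is short. First compute the maximal prefix solution $g$ of the \emph{original} instance in time $\tOh(n)$. By the standard fractional-LP bound for Bounded Knapsack one has $\OPT \in [\,p^T g,\ p^T g + \pmax\,]$, hence $M = p^T u - \OPT$ lies in the interval $[\,p^T u - p^T g - \pmax,\ p^T u - p^T g\,]$ of length $\pmax$. Restricting the binary search to this interval uses only $O(\log \pmax)$ calls to Theorem~\ref{thm:main}, and the total time is then genuinely $\tOh(n + \pmax^2)$.
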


%
%
%

\paragraph{Independent Work}
Jin~\cite{JinArxiv23} independently also achieved running time $\tOh(n + \wmax^2)$ as well as $\tOh(n + \pmax^2)$. Let us give a brief comparison: In Jin's favor, his $\tOh$ hides only 4 logfactors, while our $\tOh$ hides at least 7 logfactors. In our favor, (1) we handle Bounded Knapsack, while Jin's algorithm only works for 0-1-Knapsack, and (2) our algorithm seems to be significantly simpler.

\subsection{Technical Overview}


Here we describe the tools used in our algorithm and their history. 

\paragraph{Classic Proximity Bound}
Let $(w,p,W)$ be a 0-1-Knapsack instance sorted by non-increasing profit-to-weight ratios $p_1/w_1 \ge \ldots \ge p_n/w_n$. Let $g$ be the \emph{maximal prefix solution}, i.e., $g$ picks the maximal prefix of the items that fits into the weight budget $W$.  
Intuitively, an optimal solution~$x^*$ should not deviate too much from $g$, since any deviation means replacing an item with higher profit-to-weight ratio by an item with lower profit-to-weight ratio. Formally, a classic proximity bound shows that there exists an optimal solution $x^*$ that differs from the maximal prefix solution~$g$ in $O(\wmax)$ entries. That is, we have
\begin{align} \label{eq:standardproximity}
  \sum_{i \in [n]} |x^*_i - g_i| = O(\wmax) \quad\quad \text{and thus} \quad\quad \sum_{i \in [n]} w_i \cdot |x^*_i - g_i| = O(\wmax^2). \tag{CP}
\end{align}
This proximity bound was pioneered by Eisenbrand and Weismantel~\cite{EisenbrandW20}. Their proof also works for Integer Linear Programs with more than one constraint and is based on Steinitz' lemma; for Knapsack a simplified proof using a simple exchange argument can be found in~\cite{PolakRW21}.
This proximity bound has been used in almost all recent work on pseudopolynomial-time and approximation algorithms for Knapsack. 
We also make use of it, in order to reduce the Bounded Knapsack problem to the 0-1-Knapsack problem, as we explain next. 


\paragraph{Reduction from Bounded Knapsack to 0-1-Knapsack}
We show the following reduction between the two Knapsack problems: \emph{If 0-1-Knapsack on $O(\wmax^2)$ items can be solved in time $\tOh(\wmax^2)$, then Bounded Knapsack can be solved in time $\tOh(n + \wmax^2)$.} 
After establishing this reduction, to show our main result that Bounded Knapsack can be solved in time $\tOh(n + \wmax^2)$ it suffices to show that 0-1-Knapsack can be solved in time $\tOh(n + \wmax^2)$. 

We obtain this reduction along the lines of previous Bounded Knapsack algorithms, e.g.~\cite{EisenbrandW20,PolakRW21}. 
In the following description for simplicity think of the starting point being a 0-1-Knapsack instance (the same reduction can also be efficiently implemented when starting from Bounded Knapsack).
Given a Knapsack instance, we construct the maximal prefix solution $g$. 
For any weight $\hat w \in [\wmax]$, among the items of weight $\hat w$ that are picked by~$g$, we greedily pick all but the $\Theta(\wmax)$ least profitable items, and we remove the picked items from the instance. 
Similarly, among the items of weight~$\hat w$ that are not picked by $g$, we remove all but the $\Theta(\wmax)$ most profitable items from the instance. 
The classic proximity bound (\ref{eq:standardproximity}) implies that some optimal solution is consistent with these choices.
The remaining instance has at most $\Theta(\wmax)$ items of each weight in $[\wmax]$, so we reduced the number of items to $O(\wmax^2)$. So from now on we focus on 0-1-Knapsack with~few~items.



\paragraph{Additive Combinatorics}
A fundamental result in additive combinatorics shows that for any set $A \subseteq [n]$ of size $|A| \gg \sqrt{n}$ the set of all subset sums of $A$ contains an arithmetic progression of length~$n$. This result was pioneered by Freiman~\cite{Freiman93} and Sárközy~\cite{sarkozy94} and further improved by Szemerédi and Vu~\cite{szemeredi2006finite} and recently by Conlon, Fox, and Pham~\cite{conlon2021subset}. 
A long line of work used variations of this result to design algorithms for dense cases of the Subset Sum problem~\cite{chaimovich1999new,ChaimovichFG89,
freiman1988extremal,galil1991almost,GalilM91,
BringmannW21}. Notably, the majority of this line of research assumes that the input is a set, and only recently this work has been generalized to multi-sets~\cite{BringmannW21}. 
In this paper we distill the following tool from~\cite{BringmannW21}, which roughly states that two multi-sets of small integers $X,Y$ have two subsets of equal sum if both the size of $X$ and the sum of all elements of $Y$ are sufficiently large.
\begin{lemma}[Informal Version of Lemma~\ref{lem:addcomb}] \label{lem:addcombinformal}
  Let $X,Y$ be multi-sets consisting of integers in $[\wmax]$, where each number in $X$ has multiplicity at most $\mu$. If $|X| \gg (\mu \cdot \wmax)^{1/2}$ and $\sum_{y \in Y} y \gg \mu \cdot \wmax^2 / |X|$ then there exist non-empty subsets of $X$ and $Y$ that have the same sum.
\end{lemma}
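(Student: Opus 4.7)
The plan is to combine an additive-combinatorial density theorem for the subset-sum set $\cS(X) = \{\sum_{i \in S} X_i : \emptyset \neq S \subseteq X\}$ with a greedy pigeonhole argument on prefix sums of $Y$.

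First, I would use the density hypothesis $|X| \gg \sqrt{\mu \wmax}$ to extract structure from $\cS(X)$ via a S\'ark\"ozy--Freiman-type theorem for multi-sets, in the spirit of~\cite{szemeredi2006finite,conlon2021subset,BringmannW21}. The goal is to produce an integer $d$ with $1 \le d \le \wmax$ and an arithmetic progression $P = \{a_0, a_0+d, \ldots, a_0+(L-1)d\} \subseteq \cS(X)$ of length $L = \Omega(|X|^2/\mu)$ starting at some $a_0 = O(\mu \wmax^2/|X|)$; moreover, analogous APs should exist in multiple residue classes modulo $d$. A standard route is: (i) pigeonhole with the multiplicity bound on $X$ locates a long AP $A \subseteq X$ of common difference $d$; (ii) for a midrange $j$, the $j$-element subset sums of $A$ form an AP of common difference $d$ and length $\Omega(|A|^2)$ inside $\cS(X)$; (iii) combine with the remaining elements of $X$ to translate the AP and cover additional residue classes.

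Second, I would construct $T \subseteq Y$ with $\sum T \in P$ for some AP $P$ from the first step. Sort $Y$ arbitrarily as $y_1, \ldots, y_m$ and consider the prefix sums $\sigma_i := y_1 + \cdots + y_i$, which step by at most $\wmax$. Applying pigeonhole to $\sigma_0, \sigma_1, \ldots$ modulo $d$ extracts many disjoint contiguous ``blocks'' $B_1, B_2, \ldots \subseteq Y$, each of size at most $d$ (and hence sum at most $d \wmax$) and with sum divisible by $d$. I would pick a small nonempty ``seed'' $T_0 \subseteq Y$ whose sum has the right residue modulo $d$ to lie in the residue class of some AP $P$ from above, and then greedily append blocks $B_j$ to $T_0$. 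Each step raises the running sum by a multiple of $d$ bounded by $d \wmax$. The hypothesis $|X| \gg \sqrt{\mu \wmax}$ gives $L \gg \wmax$, so these skips are small compared to $P$'s span $Ld$; the hypothesis $\sum Y \gg \mu \wmax^2/|X| \gg a_0$ guarantees enough blocks are available to reach $P$'s range. Consequently, the running sum lands in $P$ at some step, yielding $\sum T \in P \subseteq \cS(X)$, which pairs $T$ with some nonempty $S \subseteq X$ of equal sum.

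The main obstacle is the \emph{residue alignment} modulo $d$: the greedy union of divisible-by-$d$ blocks of $Y$ preserves the seed's residue, whereas each AP in $\cS(X)$ sits in a specific residue class determined by $X$'s structure. Overcoming this requires the refined multi-residue form of the density theorem above (establishing which is the main technical content, following~\cite{BringmannW21}), together with showing that the seed residue can be realized as a subset sum of $Y$; the hypothesis $\sum Y \gg \mu \wmax^2 / |X|$ is precisely what ensures $Y$ has enough mass to produce both the seed and the full range of blocks needed.
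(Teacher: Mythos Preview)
Your two-step plan---a structure theorem for $\cS(X)$, then a greedy pigeonhole on prefix sums of $Y$---is exactly the paper's approach, and your block construction in step~2 is essentially identical to what the paper does. Two points where you diverge deserve comment.

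First, your ``standard route'' (i)--(iii) for step~1 is not actually standard: simple pigeonhole on a multi-set $X \subseteq [\wmax]$ with multiplicity $\le \mu$ does not locate a long arithmetic progression \emph{inside~$X$}. The support has size $\ge |X|/\mu$, but a dense subset of $[\wmax]$ need not contain a long AP. The Freiman--S\'ark\"ozy--type results (and the multi-set version in~\cite{BringmannW21}) produce a long AP in the \emph{subset-sum set} $\cS(X)$ by substantially more work than an AP in $X$ followed by $j$-element sums. You do eventually fall back on~\cite{BringmannW21}, so this is more a presentational issue than a gap.

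Second---and this is the substantive difference---the residue-alignment obstacle you flag is an artifact of your formulation and vanishes in the paper's version. The paper applies Theorems~4.1 and~4.2 of~\cite{BringmannW21} as a black box: they yield a divisor $d$ such that \emph{every} multiple of~$d$ in a long interval $[d\lambda,\, d(\Sigma'-\lambda)]$ lies in $\cS(X)$. In other words, the AP in $\cS(X)$ has offset~$0 \bmod d$, not an arbitrary offset. Since your pigeonhole blocks of~$Y$ already have sums divisible by~$d$ and bounded by $d\wmax$, their running partial sums are all multiples of~$d$ and step by at most $d\wmax$; the hypotheses ensure the interval has length $\ge d\wmax$ and starts below $\Sigma(Y)-d\wmax$, so some partial sum lands inside. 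No seed and no multi-residue extension are needed. Your route would work, but it manufactures and then labors to overcome an obstacle the paper never encounters.
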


\paragraph{Proximity Bounds based on Additive Combinatorics}
Results from additive combinatorics such as Lemma~\ref{lem:addcombinformal} can be used in exchange arguments to show proximity bounds. For Knapsack, this was first used by Deng, Jin and Mao to design approximation schemes~\cite{DengJM23}, and recently by Jin~\cite{Jin23} and Chen, Lian, Mao and Zhang~\cite{ChenLMZ23} to design pseudopolynomial-time algorithms. Our arguments follow the approach of Chen et al.~\cite{ChenLMZ23}. 

The basic exchange argument by Chen et al.\ is as follows. 
Assume for simplicity that all items have distinct profit-to-weight ratios, i.e., $p_1/w_n > \ldots > p_n/w_n$. Fix an optimal solution $x^*$, and consider a set of items $I_X$ that are picked by $x^*$ and a set of items $I_Y$ that are not picked by~$x^*$. Consider the multi-set of all weights of items in $I_X$, i.e., $X = \{ w_i : i \in I_X \}$, and similarly define $Y = \{ w_i : i \in I_Y \}$. If $X$ and $Y$ satisfy the constraints of Lemma~\ref{lem:addcombinformal}, then there exist subsets $I'_X \subset I_X$ and $I'_Y \subseteq I_Y$ of equal total weight. Then removing $I'_X$ and adding $I'_Y$ to $x^*$ maintains the total weight, and thus yields a new feasible solution $x'$. Now suppose that all items in $I_Y$ have a higher profit-to-weight ratio than all items in $I_X$; since we sorted by decreasing profit-to-weight ratio this condition is equivalent to $\max(I_Y) < \min(I_X)$. Then this exchange strictly increases the profit, contradicting the choice of $x^*$ as an optimal solution. We can use this contradiction to argue that one of the assumptions of Lemma~\ref{lem:addcombinformal} must not be satisfied, and by ensuring that $|X| = |I_X|$ is large, we can conclude that the sum of all weights of the items in $I_Y$ must be small. By picking $I_Y$ as a set of items picked by $g$ but not by $x^*$, we obtain that these items have a small total weight. By a symmetric argument on the items picked by $x^*$ but not by $g$, we obtain that $x^*$ and $g$ differ by a small total weight. This can yield a proximity bound similar to (\ref{eq:standardproximity}).

The question now is how to choose the set $I_Y$ and ensure that there always exists a corresponding set $I_X$ so that the basic exchange argument can be applied. 
To this end, Chen et al.\ partition the items into 6 parts, and show applicability on each part.
We devise a more intricate partitioning into a polylogarithmic number of parts, proving the following novel proximity bound. 
\begin{theorem}[Informal Version of Theorem~\ref{thm:partitioning}] \label{thm:partitioninginformal}
  Given a 0-1-Knapsack instance $(w,p,W)$ of size~$n$ with maximal prefix solution $g$, in time $\tOh(n)$ we can construct a partitioning $I_1 \cup \ldots \cup I_k = [n]$ into $k = O(\log^2 n)$ parts together with a proximity bound $\Delta_j$ for each part $I_j$ such that for every~$j$ we have (1) any optimal solution $x^*$ differs from $g$ by total weight at most $\Delta_j$ among the items $I_j$, and (2) the $\Delta_j$'s are small, more precisely the product of $\Delta_j$ and the number of distinct weights in~$I_j$ is $\tOh(\wmax^2)$.
\end{theorem}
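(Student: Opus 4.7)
The plan is to build a hierarchical partition of $[n]$ via two geometric (doubling) scales and then establish each $\Delta_j$ by the basic exchange argument from the overview, instantiated with a carefully chosen pair $(X,Y)$ in Lemma~\ref{lem:addcombinformal}. First I would sort the items by non-increasing profit-to-weight ratio and compute the prefix solution $g$, letting $m$ denote its last picked index. The partition is produced by intersecting two families of index sets: a weight-based family $W^{(a)} = \{i : w_i \in [2^a, 2^{a+1})\}$ for $a = 0, 1, \ldots, O(\log \wmax)$, and an index-based family $P^{(b)}$ consisting of intervals at distance roughly $2^b$ from the prefix boundary $m$, for $b = 0, 1, \ldots, O(\log n)$. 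Intersecting these gives $k = O(\log \wmax \cdot \log n) = O(\log^2 n)$ parts. Within a single part $I_j$ all items have comparable weight, so the number of distinct weights $d_j$ is at most $2^a$, and the items are localised in a well-defined ``depth'' band around $m$ in the sorted order.

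For the proximity bound on one part $I_j$ I would argue symmetrically in two directions. Let $A_j = \{i \in I_j : g_i = 1,\, x^*_i = 0\}$ and $B_j = \{i \in I_j : g_i = 0,\, x^*_i = 1\}$. For $A_j$, set $Y$ to be the multiset of weights $\{w_i : i \in A_j\}$ and take $X$ to be the multiset of weights of a suitable set of items picked by $x^*$ at indices strictly above $\max(I_j)$ (so with strictly smaller profit-to-weight ratio than any item of $I_j$). The band structure of the partition together with the classic proximity bound~(\ref{eq:standardproximity}) guarantees that this candidate $X$ has size $\tOmega(\mu \cdot d_j)$, where $\mu$ bounds the per-weight multiplicity inside $X$; the crude bound~(\ref{eq:standardproximity}) is what lets me convert a count that is known for $g$ into a matching count for the unknown $x^*$. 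Lemma~\ref{lem:addcombinformal} then implies that whenever $\sum_{y \in Y} y$ exceeds the threshold $\tOh(\mu \wmax^2 / |X|) = \tOh(\wmax^2 / d_j)$, there are non-empty subsets $I_X' \subseteq X$ and $I_Y' \subseteq Y$ with $\sum I_X' = \sum I_Y'$. Swapping out $I_X'$ and swapping in $I_Y'$ keeps $x^*$ feasible (equal weight) and strictly increases its profit, because every added item has smaller index and hence higher profit-to-weight ratio than every removed item; this contradicts optimality, so $\sum_{i \in A_j} w_i = \tOh(\wmax^2 / d_j)$.

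The symmetric half of the argument bounds $\sum_{i \in B_j} w_i$ by taking $Y$ to be the weights of $B_j$ and $X$ to be weights of a suitable set of items picked by $g$ but not $x^*$ at indices strictly below $\min(I_j)$, performing the exchange in the reverse direction. Adding the two bounds yields $\Delta_j = \tOh(\wmax^2 / d_j)$ as required. The partition itself is computable in $\tOh(n)$ time because sorting and building the geometric interval families dominate.

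The hard part will be engineering the two-level doubling so that for every part $I_j$ the candidate $X$-set has the right size on both sides of $I_j$ simultaneously, without any dependence on the unknown $x^*$. Since the partition must be built from $g$ and the weight data alone, the argument has to be robust to which items $x^*$ chooses within each depth band; this is where the classic proximity bound and the weight-class doubling interact, since the weight-class cut controls $\mu$ while the depth doubling controls $|X|$, and they must be balanced so that the precondition $|X| \gg (\mu \wmax)^{1/2}$ of Lemma~\ref{lem:addcombinformal} remains valid in every part. I expect this balancing, together with carefully matching indices across depth bands, to be the main technical obstacle.
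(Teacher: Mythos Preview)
Your proposal has a genuine gap at precisely the point you yourself flag as ``the hard part'': controlling the ratio $|X|/\mu(X)$ so that Lemma~\ref{lem:addcombinformal} yields $\Sigma(Y) = \tOh(\wmax^2/d_j)$. Your partition by weight range $[2^a,2^{a+1})$ and index depth is \emph{fixed}, independent of how often each individual weight occurs. But the set $X$ you need consists of items picked by $x^*$ at indices beyond $I_j$, and nothing in your construction forces those items to spread over $\Omega(d_j)$ distinct weights. Concretely, it is perfectly possible that all items in the region you designate for $X$ share a single weight value, giving $\mu(X)=|X|$ and hence $|X|/\mu(X)=1$, while $d_j = |\supp(w(I_j))|$ could be as large as $2^a$. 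The lemma then only gives $\Sigma(Y) = \tOh(\wmax^2)$, not $\tOh(\wmax^2/d_j)$, and property~(2) fails. Restricting $X$ to the same weight range $W^{(a)}$ does not help: the weight range bounds the \emph{number} of possible distinct weights but says nothing about how the items outside $I_j$ are actually distributed among them. Your appeal to the classic bound~(\ref{eq:standardproximity}) lets you pass a cardinality count from $g$ to $x^*$, but it cannot manufacture weight diversity that is not there.

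The paper's construction is fundamentally different and is designed exactly to guarantee $|X|/\mu(X) \ge \Omega(d_j)$. It does not use a fixed grid; instead it iteratively extracts the \emph{densest multiplicity class}: at each step it finds the largest $m$ (a power of $2$) for which some weight has multiplicity in $(m/2,m]$, lets $J$ be all items whose weight lies in that multiplicity band, splits $J$ into the $g$-picked half $J^-$ and the unpicked half $J^+$, and takes $I$ to be the outer half of the larger of these (smallest indices of $J^-$, or largest indices of $J^+$). The crucial point is that $X$ is then taken from the \emph{complementary half} $J^\pm \setminus I^\pm$ of the \emph{same} multiplicity class, so $\mu(X) \le m$ automatically, while $|X| \ge |J|/4 \ge \tfrac{m}{8}|\cW| \ge \tfrac{m}{8}\,d_j$. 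Thus $|X|/\mu(X) \ge d_j/8$ comes for free from how the part was chosen, not from any property of $x^*$. The index ordering needed for the exchange (Lemma~\ref{lem:noequalsubsets}) is also built in, since $I^+$ is the later half of $J^+$ and $I^-$ the earlier half of $J^-$. The $O(\log^2 n)$ bound on the number of parts then follows from a potential argument tracking $m$ and $|J|$. In short, the partition must be adaptive to the multiplicity profile of the instance; a static weight-range $\times$ depth grid cannot deliver the required control on $\mu(X)$.
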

This theorem is the main novelty of our algorithm. 
For comparison, note that the unique partitioning consisting of $k=1$ part satisfies $|\supp(w(I_1))| \cdot \Delta_1 \le \tOh(\wmax^{3})$, since $|\supp(w(I_1))| \le \wmax$ holds trivially and $\Delta_1 := \Theta(\wmax^2)$ works by the classic proximity bound (\ref{eq:standardproximity}). 
This bound for $k=1$ was recently improved to $\tOh(\wmax^{2.5})$ by Jin~\cite{Jin23}. 
Chen et al.\ were the first to study partitionings into $k>1$ parts, constructing a partitioning into 6 parts with $|\supp(w(I_j))| \cdot \Delta_j \le \tOh(\wmax^{2.4})$~\cite[Lemmas 13, 16 and 17]{ChenLMZ23}. We improve this bound to $\tOh(\wmax^2)$, while using more parts.



\paragraph{The Knapsack Algorithm}
Given Theorem~\ref{thm:partitioninginformal}, designing an algorithm that solves 0-1-Knapsack in time $\tOh(n + \wmax^2)$ follows from standard techniques. 
We split each part $I_j$ into the items $I_j^-$ picked by $g$ and the items $I_j^+$ not picked by $g$. 
By Theorem~\ref{thm:partitioninginformal}, any optimal solution $x^*$ for the whole instance $(w,p,W)$ uses weight at most $\Delta_j$ in the set $I^+_j$, and thus we can restrict our attention to weights $0 \le W' \le \Delta_j$ in set $I^+_j$. Similarly, $g - x^*$ uses weight at most $\Delta_j$ in the set~$I^-_j$, and thus we can restrict our attention to weights $0 \le W' \le \Delta_j$ in set $I^-_j$. For simplicity, in this overview we focus on the set $I^+_j$.
We further split each set $I^+_j$ according to item weights; this splits $I^+_j$ into sets $I^+_{j,1},\ldots,I^+_{j,\ell_j}$ where $\ell_j$ is the number of distinct weights among the items in $I^+_j$. 
All items in $I^+_{j,\ell}$ have the same weight, so a standard greedy algorithm computes the optimal profit for each weight $0 \le W' \le \Delta_j$ in time $O(\Delta_j)$. 
The total time to run the greedy algorithm on all sets $I^+_{j,1},\ldots,I^+_{j,\ell_j}$ is $O(\Delta_j)$ times the number of distinct weights of items in $I_j^+$, which by guarantee (2) of Theorem~\ref{thm:partitioninginformal} is $\tOh(\wmax^2)$. Repeating this for all $1 \le j \le k = O(\log^2 n)$ still runs in time $\tOh(\wmax^2)$.

It remains to combine the results of the greedy algorithm. Here we use that the result of the greedy algorithm is a concave sequence, and the $(\max,+)$-convolution of an arbitrary sequence and a concave sequence can be computed in linear time. It follows that combining the results of the greedy algorithm can be done in roughly the same running time as it took to compute them, up to logarithmic factors.
Hence, in total the running time is still $\tOh(n + \wmax^2)$.

We remark that (apart from Theorem~\ref{thm:partitioninginformal}) this algorithm is quite standard:
For $k = 1$ the same algorithm was described in~\cite[Lemma 2.2]{PolakRW21}, for larger $k$ essentially the same algorithm was described in~\cite[Lemma~6]{ChenLMZ23}, and its ingredients (i.e., solving equal weights by a greedy algorithm and combining concave sequences) have been used repeatedly, see, e.g.,~\cite{KellererP04,CH2218a,AxiotisT19,PolakRW21,
ChenLMZ23}.


\paragraph{Summary}
We described an algorithm that solves Bounded Knapsack in time $\tOh(n + \wmax^2)$ and thus resolves the driving question of a long line of research~\cite{Tamir09,BateniHSS18,AxiotisT19,
EisenbrandW20,PolakRW21,Jin23,ChenLMZ23}.
Our main novelty is a proximity bound that splits the set of items into a polylogarithmic number of parts, so that within each part the optimal solution cannot deviate much from the maximal prefix solution.
%
We highlight that additive combinatorics is a crucial tool in our algorithm, thus showing once more that the recent trend of using additive combinatorics for algorithm design is very fruitful.

\subsection{Further Related Work}


\paragraph{Further Parametrizations}
So far we discussed Knapsack algorithms that parametrize by the number of items $n$ and the weight parameters $W$ or $\wmax$. 
The analogous parameters for profits are $\textup{OPT}$ (the profit of an optimal solution) and $\pmax$ (the maximum profit of any item). It is essentially symmetric to parametrize by weight or by profit parameters, see, e.g., \cite[Section 4]{PolakRW21}. However, interesting new algorithms are possible when parametrizing by both weight and profit parameters, see the line of work~\cite{Pisinger99,BateniHSS18,BringmannC22,
BringmannC23}. To mention an exemplary result, 0-1-Knapsack can be solved in time $\tOh(n \wmax \pmax^{2/3})$~\cite{BringmannC23}. What we present in this paper is superior in case $\wmax \ll \pmax$, but worse in case $\wmax \approx \pmax \gg n$.


\paragraph{Unbounded Knapsack}
The Unbounded Knapsack problem is the same as Bounded Knapsack but with $u_i = \infty$ for each item $i$. That is, every item can be selected any number of times, which makes the problem much simpler. 
The same conditional lower bound as for Bounded Knapsack ruling out time $O((n+\wmax)^{2-\delta})$ for any $\delta > 0$ also works for this problem~\cite{CyganMWW19,KunnemannPS17}.
Running time $\tOh(n + \wmax^2)$ has been achieved for Unbounded Knapsack independently by Jansen and Rohwedder~\cite{JansenR19} and Axiotis and Tzamos~\cite{AxiotisT19}. Their algorithms are significantly simpler than our algorithm for Bounded Knapsack, in particular they do not require additive combinatorics. 
Chan and He further improved the running time to $\tOh(n \wmax)$~\cite{ChanH22}.


\paragraph{Subset Sum}
The Subset Sum problem is the special case of 0-1-Knapsack with $w_i = p_i$ for each item $i$. Here the weight bound $W$ is typically called the target $t$. Bellman's classic dynamic programming algorithm solves Subset Sum in time $O(nt)$~\cite{bellman1957dynamic}, which has been improved to $\tOh(n+t)$ \cite{Bring17,JW19}. A conditional lower bound rules out time $t^{1-\delta} \cdot 2^{o(n)}$ for any constant $\delta > 0$~\cite{AbboudBHS22}. 

Subset Sum with parameter $\wmax$ has also been studied recently. The problem can be solved in time $\tOh(n + \wmax^{1.5})$, which was first known when the input is a set (by combining \cite{BringmannW21} and \cite{Bring17}), and was recently shown when the input is allowed to be a multi-set~\cite{Jin23,ChenLMZ23}. There remains a gap to the best known conditional lower bound ruling out time $\wmax^{1-\delta} \cdot 2^{o(n)}$ for any constant $\delta > 0$~\cite{AbboudBHS22}.

\paragraph{Approximation Algorithms}
A recent line of work uses similar tools as for pseudopolynomial-time algorithms to study approximation schemes for Knapsack, Subset Sum, and Partition from a fine-grained perspective, see, e.g.,~\cite{BringmannC22,BringmannN21,CH2218a,DengJM23,
Jin19,MuchaW019,WuChen22}.

\subsection{Organization}

After a short preliminaries section, in Section~\ref{sec:addcomb} we present tools from additive combinatorics. 
The main novelty of this paper can be found in Section~\ref{sec:proximity}, where we develop our proximity bounds. Then in Section~\ref{sec:algorithms} we present our algorithmic ingredients. Note that Sections~\ref{sec:proximity} and~\ref{sec:algorithms} are written for 0-1-Knapsack. We justify this by presenting a reduction from Bounded Knapsack to 0-1-Knapsack in Section~\ref{sec:reduction}. Finally, in Section~\ref{sec:puttingtogether} we combine these ingredients to prove our main result.

\section{Preliminaries}

Throughout the paper we use the notation $\mathbb{N} = \{1,2,\ldots\}$ and $\mathbb{N}_0 = \{0,1,2,\ldots\}$. For $n \in \mathbb{N}$ we write $[n] = \{1,2,\ldots,n\}$. We use $\tOh$-notation to hide logarithmic factors, where $\tOh(T) := \bigcup_{c \ge 0} O(T \log^c T)$. Formally, the problems studied in this paper are defined as follows.

\begin{problem}[Bounded Knapsack]
  Given profits $p \in \mathbb{N}^n$, weights $w \in \mathbb{N}^n$, and multiplicities $u \in \mathbb{N}^n$, as well as a weight budget $W \in \mathbb{N}$, compute $\max\{ p^T x : w^T x \le W, x \in \mathbb{Z}^n, 0 \le x \le u\}$.  
\end{problem}

0-1-Knapsack is the same as Bounded Knapsack but all multiplicites are 1, i.e., $u_i = 1$ for all $i$.

\begin{problem}[0-1-Knapsack]
  Given profits $p \in \mathbb{N}^n$, weights $w \in \mathbb{N}^n$, and a weight budget $W \in \mathbb{N}$, compute $\max\{ p^T x : w^T x \le W, x \in \{0,1\}^n\}$.  
\end{problem}

Throughout the paper, we denote by $\wmax$ the maximum weight $\wmax = \max_i w_i$.

For a given 0-1-Knapsack instance $(w,p,W)$ of size $n$ we usually assume that the items are sorted according to non-increasing profit-to-weight ratio $p_1/w_1 \ge \ldots \ge p_n/w_n$.
Having fixed such an ordering, an important object in this paper is the \emph{maximal prefix solution} $g \in \{0,1\}^n$ that selects a maximal feasible prefix of all items. 
More precisely, $g$ is defined as follows. Let $t \in [n+1]$ be maximal such that $w_1+w_2+\ldots+w_{t-1} \le W$. Then let $g_1 = \ldots = g_{t-1} = 1$ and $g_t = \ldots = g_n = 0$.\footnote{We think of $g$ as being a greedy solution, therefore we use the letter $g$ to denote it. However, note that a reasonable greedy algorithm would try to pick additional items and thus not necessarily pick a prefix of items, so greedy solutions are generally not the same as the maximal prefix solution.}
Note that $g$ is a feasible solution, since it has weight $w^T g = w_1+w_2+\ldots+w_{t-1} \le W$. It is also maximal, in the sense that adding item~$t$ would make it infeasible. In particular, the maximal prefix solution has weight $w^T g \in (W - \wmax, W]$ (unless $W \ge w_1+\ldots+w_n$, in which case we have $t = n+1$). 
Clearly, the maximal prefix solution can be computed in time $\tOh(n)$. 

For a given Bounded Knapsack instance $(w,p,u,W)$ of size $n$ we also usually assume non-increasing profit-to-weight ratios $p_1/w_1 \ge \ldots \ge p_n/w_n$.
The maximal prefix solution $g \in \mathbb{N}_0^n$ can then be defined analogously to the 0-1 case, by letting $t \in [n+1]$ be maximal such that $\sum_{i=1}^{t-1} u_i \cdot w_i \le W$ and setting $g_1 := u_1,\ldots, g_{t-1} := u_{t-1}$ and $g_{t+1} = \ldots = g_n := 0$ and $g_t := \lfloor (W - \sum_{i=1}^{t-1} u_i \cdot w_i) / w_t \rfloor$ (if $t \le n$). Again, $g$ is feasible and maximal and can be easily computed in time $\tOh(n)$.

\section{Tools from Additive Combinatorics}
\label{sec:addcomb}

A long line of work used additive combinatorics to design algorithms for dense cases of the Subset Sum problem~\cite{chaimovich1999new,ChaimovichFG89,
freiman1988extremal,galil1991almost,GalilM91,
BringmannW21}. The majority of this line of research works under the assumption that the input is a set; only recently this work has been generalized to multi-sets~\cite{BringmannW21}. In this section, we distill a tool from~\cite{BringmannW21} that we will later use in an exchange argument.
In what follows, after introducing the necessary notation, we phrase this tool in Lemma~\ref{lem:addcomb}, and then we show how to obtain Lemma~\ref{lem:addcomb} from~\cite{BringmannW21}.

\medskip
Let $X$ be a \emph{finite multi-set of positive integers}, or \emph{multi-set} for short. For an integer $x$ we denote by $\mu(x;X)$ the multiplicity of $x$ in $X$, indicating that the number $x$ appears $\mu(x;X)$ times in $X$. The usual set notation generalizes naturally to multi-sets, e.g., we write $X \subseteq Y$ if $\mu(x;X) \le \mu(x;Y)$ holds for all $x$. 
Furthermore, we define the following concepts:
\begin{itemize}
\item \emph{Support $\supp(X)$:} The set of all distinct integers contained in $X$, i.e., $\supp(X) = \{x \in \mathbb{N} \mid \mu(x;X) \ge 1\}$.
\item \emph{Size $|X|$:} The number of elements of $X$, counted with multiplicity, i.e., $|X| = \sum_{x \in \mathbb{N}} \mu(x;X)$.
\item \emph{Maximum $\max(X)$:} The maximum number in $X$, i.e., $\max(X) = \max\{x \in \mathbb{N} \mid \mu(x;X) \ge 1\}$.
\item \emph{Multiplicity $\mu(X)$:} The maximum multiplicity of $X$, i.e., $\mu(X) = \max_{x \in \mathbb{N}} \mu(x;X)$.
\item \emph{Sum $\Sigma(X)$:} The sum of all elements of $X$, i.e., $\Sigma(X) = \sum_{x \in \mathbb{N}} x \cdot \mu(x;X)$. 
\end{itemize}

\begin{lemma}[Additive Combinatorics Tool] \label{lem:addcomb}
  Let $X,Y$ be multi-sets consisting of integers in $[\wmax]$.~If 
  $$ |X| \ge 1500 \cdot (\log^3(2|X|) \mu(X) \wmax)^{1/2} \quad \text{and} \quad \Sigma(Y) \ge 340000 \log(2 |X|) \mu(X) \wmax^2 / |X|, $$ 
  then there exist non-empty $X' \subseteq X$ and $Y' \subseteq Y$ with equal sum $\Sigma(X') = \Sigma(Y')$.
\end{lemma}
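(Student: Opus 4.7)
My plan is to reduce Lemma~\ref{lem:addcomb} to the main structural result on subset sums of multi-sets from \cite{BringmannW21}, which is essentially tailored for this use: if $X$ is a multi-set of integers in $[\wmax]$ with $|X|$ sufficiently large compared to $\sqrt{\mu(X)\,\wmax}$ up to polylogarithmic factors, then the set of non-empty subset sums $\cS(X)$ contains an arithmetic progression $P=\{a, a+d, a+2d,\ldots, a+Ld\}$ with small step $d$, small start $a$, and $L\cdot d$ comparable to $\Sigma(X)$ (with $d$ and $a$ both bounded by $\tOh(\mu(X)\wmax^2/|X|^2)$ and $\tOh(\mu(X)\wmax^2/|X|)$, respectively). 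The first hypothesis $|X| \ge 1500\cdot(\log^3(2|X|)\mu(X)\wmax)^{1/2}$ is tuned precisely so that $X$ meets this size threshold with enough slack for the log factors.

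Given such a progression $P\subseteq \cS(X)$, the task becomes to produce a non-empty $Y'\subseteq Y$ with $\Sigma(Y')\in P$. I would do this in two phases. First, list the elements of $Y$ in any order as $y_1, y_2,\ldots$ and form prefix sums $S_0=0, S_1, S_2,\ldots$; since each $y_i \le \wmax$, consecutive prefix sums differ by at most $\wmax$. Because the second hypothesis forces $\Sigma(Y)$ to exceed $a$ by a margin of at least $Ld$, the sequence of prefix sums enters and traverses the interval $[a, a+Ld]$. Second, I would fix the residue of the chosen prefix sum modulo $d$: the smallness of $d$ combined with the size of $\Sigma(Y)$ guarantees that a short sub-multiset of $Y$ (specifically, a difference of two close prefix sums that share a residue modulo $d$, obtained by pigeonhole on any $d+1$ prefix sums) sums to a multiple of $d$. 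Swapping such a sub-multiset in or out of the chosen prefix lets us shift into the correct residue class while remaining inside $[a, a+Ld]$, hence inside $P$.

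\textbf{Main obstacle.} The quantitatively delicate step is phase two: one needs to bound $d$ sharply via the \cite{BringmannW21} theorem and argue that the modular adjustment stays inside the window $[a, a+Ld]$ without exhausting $Y$. This is precisely what forces the combined form of the two hypotheses: the first controls $d$ and the length of $P$ through the structure theorem on $X$, while the second provides the slack in $Y$ to realize an arbitrary residue modulo $d$ and simultaneously reach the range $[a, a+Ld]$. Non-emptiness of $X'$ is automatic from the conclusion of \cite{BringmannW21} (each element of $P$ is the sum of a non-empty subset of $X$), and non-emptiness of $Y'$ is built into the greedy prefix construction. The explicit constants $1500$ and $340000$ are bookkeeping of the polylogarithmic factors from the \cite{BringmannW21} bound combined with the pigeonhole/swap loss in the residue adjustment; I would simply propagate them through the two phases.
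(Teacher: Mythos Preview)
Your overall plan---invoke the Bringmann--Wellnitz structure theorem on $X$ to get an arithmetic progression $P=\{a,a+d,\ldots,a+Ld\}\subseteq\cS(X)$, then build a subset sum of $Y$ that lands in $P$ via prefix sums and a pigeonhole argument modulo $d$---is exactly the paper's approach. The paper invokes Theorems~4.1 and~4.2 of \cite{BringmannW21} in combination to obtain that every multiple of $d$ in a long interval is a subset sum of $X$, and then uses the same pigeonhole-on-$d$-elements idea you mention to handle~$Y$.

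However, your Phase~2 as written does not work. You propose to first pick a raw prefix sum $S_k\in[a,a+Ld]$ and then ``shift into the correct residue class'' by swapping in or out a sub-multiset $Z\subseteq Y$ with $\Sigma(Z)\equiv 0\pmod d$. But adding or removing a multiple of $d$ leaves the residue of $S_k$ modulo $d$ unchanged, so this operation cannot move you into the residue class $a\bmod d$ unless $S_k$ was already there. The paper avoids this by reversing the order of the two phases: first it repeatedly extracts from $Y$ disjoint sub-multisets $Y_1,Y_2,\ldots$, each obtained by your pigeonhole argument on $d$ consecutive elements, so that each $\Sigma(Y_\ell)$ is a multiple of $d$ and at most $d\wmax$; then it takes prefix sums $\Sigma(Y_1)+\cdots+\Sigma(Y_\ell)$, which are automatically multiples of $d$, increase in steps of at most $d\wmax$, and eventually exceed $a$ by the hypothesis on $\Sigma(Y)$. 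Since the target interval has length at least $d\wmax$ (this is the content of the paper's Claim~\ref{cla:intervalbounds}), some such prefix sum must land in it, and it already has the right residue. Once you swap your two phases in this way, your sketch becomes the paper's proof.
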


We remark that \cite[Lemmas 11 and 12]{ChenLMZ23} are similar to Lemma~\ref{lem:addcomb} and are also derived from~\cite{BringmannW21}, but the specific formulations differ. 
The result is also similar in spirit to \cite[Lemma 3.4]{Jin23}.

The proof builds on the following definitions and theorems by Bringmann and Wellnitz~\cite{BringmannW21}. 

\begin{definition}[Definitions 3.1 and 3.2 in \cite{BringmannW21}]
  Let $X$ be a multi-set.
  Write $X(d) := X \cap d \mathbb{N}$ to denote the multi-set of all numbers in $X$
that are divisible by $d$. Further, write $\overline{X(d)} := X \setminus X(d)$ to denote the multi-set of all numbers in $X$
not divisible by $d$. An integer $d > 1$ is called an \emph{$\alpha$-almost divisor} of $X$ if $|\overline{X(d)}| \le \alpha \cdot \mu(X) \cdot \Sigma(X) / |X|^2$.

  If all integers in $X$ are divisible by $d$, we denote by $X/d$ the multi-set where we divide each element of $X$ by $d$, i.e., $\mu(x;X/d) = \mu(d \cdot x; X)$ for all $x$.

  The multi-set $X$ is called \emph{$\delta$-dense} if $|X|^2 \ge \delta \cdot \mu(X) \cdot \max(X)$.
\end{definition}

\begin{theorem}[Theorem 4.1 in \cite{BringmannW21}] \label{thm:BWone}
  Let $X$ be a multi-set and let $\delta \ge 1$ and $0 < \alpha \le \delta/16$.
  If $X$ is $\delta$-dense, then there exists an integer $d \ge 1$ such that $X' := X(d)/d$ is $\delta$-dense and has no $\alpha$-almost divisor. 
  Moreover, we have the following additional properties: 
  \begin{enumerate}
    \item $d \le 4 \mu(X) \Sigma(X) / |X|^2$,
    \item $|X'| \ge 0.75\, |X|$, 
    \item $\Sigma(X') \ge 0.75\, \Sigma(X)/d$.
  \end{enumerate}
\end{theorem}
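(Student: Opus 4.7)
The plan is to prove Theorem~\ref{thm:BWone} iteratively. Set $X_0 := X$ and $d_0 := 1$; at each step $i$, if the current multi-set $X_i$ admits an $\alpha$-almost divisor $e_i$, update $X_{i+1} := X_i(e_i)/e_i$ and $d_{i+1} := d_i \cdot e_i$. Stop when $X_i$ has no $\alpha$-almost divisor and return $X' := X_i$, $d := d_i$. Since every $e_i \ge 2$ the accumulator $d_i$ at least doubles per step while being bounded above by $\max(X)$, so the procedure terminates. By construction the output $X'$ has no $\alpha$-almost divisor, so what remains is to establish the quantitative properties.

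The easy book-keeping invariants $\mu(X_i) \le \mu(X)$, $\max(X_i) \le \max(X)/d_i$, and $\Sigma(X_i) \le \Sigma(X)/d_i$ follow directly from the update rule. The crucial invariant I would prove by induction on $i$ is $|X_i| \ge 0.75\,|X|$. At step $i$ the loss $|Y_i| := |\overline{X_i(e_i)}|$ is at most $\alpha\mu(X_i)\Sigma(X_i)/|X_i|^2$. Plugging in the book-keeping together with the inductive hypothesis and summing over $i$ using the geometric bound $\sum_i 1/d_i \le 2$ gives $\sum_i |Y_i| \le O(\alpha\mu(X)\Sigma(X)/|X|^2)$, which via the $\delta$-density hypothesis $|X|^2 \ge \delta\mu(X)\max(X) \ge \delta\mu(X)\Sigma(X)/|X|$ simplifies to $O(\alpha|X|/\delta) \le |X|/4$ whenever $\alpha \le \delta/16$. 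This establishes property (2). An entirely parallel calculation, replacing $|Y_i|$ with $\Sigma(Y_i) \le |Y_i|\max(X_i)$ in the sum, gives $\sum_i d_i\,\Sigma(Y_i) \le 0.25\,\Sigma(X)$, which rearranges into property (3).

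For property (1), the bound $d \le 4\mu(X)\Sigma(X)/|X|^2$, I would use the elementary observation that any multi-set $X'$ of positive integers with multiplicity at most $\mu$ satisfies $\Sigma(X') \ge |X'|^2/(2\mu)$, minimized by packing into the smallest integers $1, 2, 3, \ldots$ each with multiplicity $\mu$. Combining with $|X'| \ge 0.75\,|X|$ and the identity $d\,\Sigma(X') = \Sigma(X(d)) \le \Sigma(X)$ gives $d \le (32/9)\,\mu(X)\Sigma(X)/|X|^2 < 4\mu(X)\Sigma(X)/|X|^2$. The $\delta$-density of $X'$ is trivial if $d = 1$, and easy if $d \ge 2$: there $\max(X')$ drops by a factor of $d$ while $|X'|^2$ drops by at most $(0.75)^{-2} < 2$, so the density inequality only improves.

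The main obstacle will be tracking constants tightly enough: the two geometric-series arguments each lose small factors, and obtaining the clean bounds $0.75$ and $4$ stated in the theorem hinges on $\alpha \le \delta/16$ exactly absorbing these losses with just a bit of slack. A secondary subtlety is laying out the induction so as to avoid circularity -- the loss bound at step $i$ depends on $|X_j| \ge 0.75\,|X|$ for all $j \le i$, so one needs a clean double induction (or a careful sequencing) showing that the sum of losses up to step $i$ is already strictly below $0.25\,|X|$ before invoking the bound at step $i+1$.
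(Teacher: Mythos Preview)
The paper does not contain a proof of this theorem: it is quoted verbatim as ``Theorem~4.1 in~\cite{BringmannW21}'' and used as a black box in the proof of Lemma~\ref{lem:addcomb}. There is therefore no proof in the present paper to compare your attempt against.

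That said, your sketch is essentially correct and is indeed the natural approach. A few comments. Your constants work out with a bit of room: using $|X_j|\ge 0.75|X|$, $\mu(X_j)\le\mu(X)$, $\Sigma(X_j)\le\Sigma(X)/d_j$ and $\sum_j 1/d_j\le 2$ gives $\sum_j|Y_j|\le \tfrac{2\alpha}{(0.75)^2}\cdot\tfrac{\mu(X)\Sigma(X)}{|X|^2}\le \tfrac{2\alpha}{(0.75)^2\delta}|X|$, and with $\alpha\le\delta/16$ this is $|X|/4.5<0.25|X|$. For property~(3) your phrase ``replacing $|Y_i|$ with $\Sigma(Y_i)\le|Y_i|\max(X_i)$'' is slightly dangerous if read as literally bounding $\sum_i d_i\Sigma(Y_i)\le\max(X)\sum_i|Y_i|$, since $|X|\max(X)$ need not be $\le\Sigma(X)$; but the parallel calculation you intend (plugging the almost-divisor bound for $|Y_i|$ together with $\max(X_i)\le\max(X)/d_i$ and then invoking density) does go through and yields $\sum_i d_i\Sigma(Y_i)\le\Sigma(X)/4.5$. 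The circularity you worry about is just strong induction: assume $|X_j|\ge 0.75|X|$ for all $j\le i$, bound the cumulative loss up to step~$i$, deduce $|X_{i+1}|\ge 0.75|X|$. Your argument for property~(1) via $\Sigma(X')\ge|X'|^2/(2\mu(X'))$ is clean, and the $\delta$-density of $X'$ follows as you say since for $d\ge 2$ the drop in $\max$ beats the loss in $|X'|^2$.
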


\begin{theorem}[Theorem 4.2 in \cite{BringmannW21}] \label{thm:BWtwo}
  Let $X'$ be a multi-set and define the following parameters:
  \begin{align*}
    C_\delta(X') &:= 1699200 \cdot \log (2|X'|) \log^2(2\mu(X')), \\
    C_\alpha(X') &:= 42480 \cdot \log(2\mu(X')), \\
    C_\lambda(X') &:= 169920\cdot \log(2\mu(X')), \\
    \lambda(X') &:= C_\lambda(X') \cdot \mu(X') \max(X') \Sigma(X') / |X'|^2.
  \end{align*}
  If $X'$ is $C_\delta(X')$-dense and has no $C_\alpha(X')$-almost divisor,
  then for every integer $s$ in the interval $[\,\lambda(X'),\,\Sigma(X') - \lambda(X')\,]$ there exists $X'' \subseteq X'$ with $\Sigma(X'') = s$.
\end{theorem}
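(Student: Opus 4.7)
The plan is to extract a ``cleaned'' sub-multi-set of $X$ via Theorem~\ref{thm:BWone}, apply Theorem~\ref{thm:BWtwo} to show it has subset sums filling a long arithmetic progression, and then use the $\Sigma(Y)$ lower bound to match a subset of $Y$ to that progression via a pigeonhole-plus-greedy argument.

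Concretely, I will invoke Theorem~\ref{thm:BWone} on $X$ with $\delta := 1.7 \cdot 10^{6}\log^3(2|X|)$ and $\alpha := 42480 \log(2|X|)$. The density hypothesis $|X|^2 \ge \delta\,\mu(X)\max(X)$ follows from $|X|^2 \ge 1500^2 \log^3(2|X|)\mu(X)\wmax$ because $1500^2 > 1.7 \cdot 10^{6}$ and $\max(X) \le \wmax$; the condition $\alpha \le \delta/16$ is routine. This yields $d \ge 1$ and $X_1 := X(d)/d$ with $|X_1| \ge 0.75|X|$, $\Sigma(X_1) \ge 0.75\Sigma(X)/d$, $\delta$-dense, and with no $\alpha$-almost divisor. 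Since $|X_1| \le |X|$ and $\mu(X_1) \le \mu(X) \le |X|$, my $\delta$ dominates $C_\delta(X_1)$ and $\alpha$ dominates $C_\alpha(X_1)$, so Theorem~\ref{thm:BWtwo} applies to $X_1$ and shows every integer in $[\lambda(X_1), \Sigma(X_1) - \lambda(X_1)]$ is a subset sum of $X_1$. Rescaling by $d$, every multiple of $d$ inside the interval $J := [d\lambda(X_1),\, d\Sigma(X_1) - d\lambda(X_1)]$ is a subset sum of $X(d) \subseteq X$. Using $\max(X_1) \le \wmax/d$, $\Sigma(X_1) \le |X|\wmax/d$, and $|X_1| \ge 0.75|X|$, I get the key bound $d\lambda(X_1) \le 302080\log(2|X|)\mu(X)\wmax^2/|X|$.

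Next I will find $Y' \subseteq Y$ with $\Sigma(Y')$ equal to a multiple of $d$ inside $J$. For this I pre-partition a prefix of $Y$ (in any order) into chunks $C_1,C_2,\ldots$, each of total weight at most $d\wmax$ and summing to a positive multiple of $d$, leaving a remainder of fewer than $d$ elements: inside every consecutive block of $d+1$ partial sums, pigeonhole on residues mod $d$ produces a contiguous sub-block summing to a multiple of $d$, and iterating extracts the chunks. Accumulating these chunks greedily until the running sum first crosses $d\lambda(X_1)$ yields $Y'$ with $\Sigma(Y') \in [d\lambda(X_1),\,d\lambda(X_1) + d\wmax)$ and $\Sigma(Y') \equiv 0 \pmod d$. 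Matching $\Sigma(Y')$ against the subset sums of $X(d)$ from the preceding step produces the desired $X'$.

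The main obstacle is bookkeeping the slack between the two hypotheses to make the above fit into $J$. I need (i) $\Sigma(Y) \ge d\lambda(X_1) + (d-1)\wmax$ so that the chunks alone can reach $d\lambda(X_1)$, and (ii) $2\lambda(X_1) + \wmax \le \Sigma(X_1)$ so that the overshoot stays inside $J$. For (i), the hypothesis $\Sigma(Y) \ge 3.4\cdot 10^{5}\log(2|X|)\mu(X)\wmax^2/|X|$ exceeds my bound on $d\lambda(X_1)$ by roughly $3.7\cdot 10^{4}\log(2|X|)\mu(X)\wmax^2/|X|$, which easily absorbs $d\wmax \le 4\mu(X)\wmax^2/|X|$ (using $d \le 4\mu(X)\Sigma(X)/|X|^2 \le 4\mu(X)\wmax/|X|$). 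For (ii), the first hypothesis forces $\lambda(X_1)/\Sigma(X_1) \le 302080\log(2|X|)\mu(X)\wmax/|X|^2 \le 0.14/\log^2(2|X|)$, so $\Sigma(X_1) - 2\lambda(X_1) \ge 0.72\Sigma(X_1)$; and $\Sigma(X_1) \ge 0.1875|X|^2/\mu(X) \gg \wmax$ (combining $\Sigma(X_1)\ge 0.75\Sigma(X)/d$ with $d\le 4\mu(X)\Sigma(X)/|X|^2$) closes the check.
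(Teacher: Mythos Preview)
Your proposal does not prove Theorem~\ref{thm:BWtwo}. That theorem is a result quoted from~\cite{BringmannW21} and is used in the present paper as a black box; the paper contains no proof of it. Your argument \emph{assumes} Theorem~\ref{thm:BWtwo} and combines it with Theorem~\ref{thm:BWone} to establish something else, namely Lemma~\ref{lem:addcomb} (the Additive Combinatorics Tool). As a proof of Theorem~\ref{thm:BWtwo} this is circular: in your second paragraph you explicitly invoke the very statement you were asked to prove (``so Theorem~\ref{thm:BWtwo} applies to $X_1$\ldots''). A genuine proof of Theorem~\ref{thm:BWtwo} requires the additive-combinatorics machinery developed in~\cite{BringmannW21} and is out of scope for this paper.

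If the intended target was Lemma~\ref{lem:addcomb}, then your proposal is correct and follows essentially the same route as the paper's own proof: apply Theorem~\ref{thm:BWone} to pass from $X$ to $X_1=X(d)/d$, invoke Theorem~\ref{thm:BWtwo} on $X_1$ to fill the interval $[\lambda(X_1),\Sigma(X_1)-\lambda(X_1)]$ with subset sums, rescale by $d$, and then pigeonhole $Y$ modulo $d$ into chunks each of sum at most $d\wmax$ and divisible by $d$, accumulating them until the running total lands in the target interval. Your bookkeeping in (i) and (ii) plays exactly the role of the paper's Claim~\ref{cla:intervalbounds}, with the same numerics.
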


We are now ready to prove our main additive combinatorics tool, Lemma~\ref{lem:addcomb}.

\begin{proof}[Proof of Lemma~\ref{lem:addcomb}]
  Note that the assumption $|X| \ge 1500 \cdot (\log^3(2|X|) \mu(X) \wmax)^{1/2}$ implies that $|X| \ge \sqrt{C_\delta(X) \mu(X) \wmax}$, since $\sqrt{1699200} < 1500$ and $\mu(X) \le |X|$. Hence, $X$ is $C_\delta(X)$-dense. By Theorem~\ref{thm:BWone}, there exists an integer $d \ge 1$ such that $X' := X(d)/d$ is $C_\delta(X)$-dense and has no $C_\alpha(X)$-almost divisor. Since $|X'| \le |X|$ and $\mu(X') \le \mu(X)$, it follows that $X'$ is also $C_\delta(X')$-dense and has no $C_\alpha(X')$-almost divisor.
  It then follows from Theorem~\ref{thm:BWtwo} that every integer in the range $[\,\lambda(X'),\,\Sigma(X') - \lambda(X')\,]$ is a subset sum of $X'$. Hence, every multiple of $d$ in the range $[\,d \cdot \lambda(X'),\,d(\Sigma(X') - \lambda(X'))\,]$ is a subset sum of $X$. In what follows we need bounds on $d \cdot \lambda(X')$ and $d(\Sigma(X') - \lambda(X'))$, which we prove in the next claim.

  \begin{claim} \label{cla:intervalbounds}
    We have $d \cdot \lambda(X') \le \Sigma(Y) - d \wmax$ and $d(\Sigma(X') - 2 \lambda(X')) \ge d \wmax$.
  \end{claim}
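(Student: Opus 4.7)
The plan is to prove both inequalities by unfolding the definition $\lambda(X') = C_\lambda(X')\mu(X')\max(X')\Sigma(X')/|X'|^2$ and combining the bounds from Theorem~\ref{thm:BWone} (namely $d \le 4\mu(X)\Sigma(X)/|X|^2$, $|X'| \ge 0.75|X|$, and $\Sigma(X') \ge 0.75\Sigma(X)/d$) with the elementary observations that $d\cdot\max(X') \le \wmax$ (since $X'$ is obtained from $X(d)$ by dividing through by $d$), $\mu(X') \le \mu(X)$, $C_\lambda(X') \le 169920\log(2|X|)$, and $\Sigma(X) \le |X|\wmax$. Both target inequalities then reduce to bookkeeping of numerical constants.

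For the first inequality, I would substitute these bounds into $d\lambda(X')$ and simplify. The factor of $d$ on the outside cancels one copy of $d$ in $\max(X')$, and $\Sigma(X')$ is absorbed by $\Sigma(X)/d$; since $d \ge 1$, the remaining factor of $1/d$ is harmless. The resulting bound is of the form $C_1\log(2|X|)\mu(X)\wmax^2/|X|$, where $C_1$ depends only on $169920$ and the factor $0.75^{-2}$ from the lower bound on $|X'|$. Similarly $d\wmax \le 4\mu(X)\wmax^2/|X|$ follows directly from the bound on $d$ and $\Sigma(X) \le |X|\wmax$. Summing and comparing with the hypothesis $\Sigma(Y) \ge 340000\log(2|X|)\mu(X)\wmax^2/|X|$, the constants balance by a comfortable margin, yielding the desired inequality $d\lambda(X')+d\wmax \le \Sigma(Y)$.

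For the second inequality, it suffices to prove $\Sigma(X') \ge 2\lambda(X') + \wmax$. I would first control the ratio $\lambda(X')/\Sigma(X')$: cancelling $\Sigma(X')$ leaves $C_\lambda(X')\mu(X')\max(X')/|X'|^2$, which can be bounded using $\max(X') \le \wmax/d$, $|X'|^2 \ge 0.5625|X|^2$, and the density hypothesis $|X|^2 \ge 1500^2\log^3(2|X|)\mu(X)\wmax$. The resulting bound is of order $1/(d\log^2(2|X|))$, well below $1/4$. Hence $\Sigma(X') - 2\lambda(X') \ge \tfrac{1}{2}\Sigma(X') \ge \tfrac{3}{8}\Sigma(X)/d$, and the remaining requirement $\tfrac{3}{8}\Sigma(X)/d \ge \wmax$ reduces, via $d \le 4\mu(X)\Sigma(X)/|X|^2$, to $|X|^2 = \Omega(\mu(X)\wmax)$, which the density hypothesis delivers with enormous slack.

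The main obstacle is purely the numerical bookkeeping: the large constants $340000$ and $1500$ in the statement of Lemma~\ref{lem:addcomb} have been hand-tuned precisely so that the two inequalities hold by clear margins after plugging in all of the above bounds. Conceptually nothing beyond Theorem~\ref{thm:BWone} and elementary manipulations is required; the only subtlety is to make sure, in the second inequality, that one cancels $\Sigma(X')$ from the ratio $\lambda(X')/\Sigma(X')$ \emph{before} plugging in the lower bound $\Sigma(X') \ge 0.75\Sigma(X)/d$, so that the density hypothesis on $|X|$ can do the work.
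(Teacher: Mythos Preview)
Your proposal is correct and follows essentially the same approach as the paper: unfold $\lambda(X')$, feed in the bounds from Theorem~\ref{thm:BWone} together with $\max(X')\le \wmax/d$, $\mu(X')\le\mu(X)$, and $\Sigma(X)\le |X|\wmax$, and check that the constants $340000$ and $1500$ leave ample room. The only cosmetic difference is in the second inequality: the paper bounds $d(\Sigma(X')-2\lambda(X'))$ directly by subtracting the estimate~(\ref{eq:sfhohg}) for $d\lambda(X')$ from $0.75\,\Sigma(X)$, whereas you first bound the ratio $\lambda(X')/\Sigma(X')$ and then use $\Sigma(X')\ge 0.75\,\Sigma(X)/d$; both routes are just rearrangements of the same elementary computation.
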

  \begin{proof}
    From Theorem~\ref{thm:BWone} we obtain $C_\lambda(X') \le C_\lambda(X)$ and $\mu(X') \le \mu(X)$ and $\max(X') \le \max(X)/d \le \wmax/d$ and $\Sigma(X') \le \Sigma(X)/d$, and furthermore $|X'| \ge 0.75 |X|$. We thus have
    $$ \lambda(X') = C_\lambda(X') \cdot \mu(X') \max(X') \Sigma(X') / |X'|^2 \le C_\lambda(X) \cdot \mu(X) \wmax \Sigma(X) / (0.75 |X| \cdot d)^2. $$
    Since $0.75^2 > 1/2$ and $d \ge 1$ and $\Sigma(X) \le \wmax |X|$ we obtain
    \begin{align} \label{eq:sfhohg}
      d \cdot \lambda(X') \le 2 C_\lambda(X) \cdot \mu(X) \wmax \Sigma(X) / |X|^2 \le 2 C_\lambda(X) \cdot \mu(X) \wmax^2 / |X|.
    \end{align}
    By Theorem~\ref{thm:BWone} we have $d \le 4 \mu(X) \Sigma(X) / |X|^2 \le 4 \mu(X) \wmax / |X|$, and thus $d \wmax \le 4 \mu(X) \wmax^2 / |X|$. Hence, we have
    \begin{align*}
    d \cdot \lambda(X') + d \wmax 
    &\le (2 C_\lambda(X) + 4) \cdot \mu(X) \wmax^2 / |X|   \\
    &\le 340000 \log(2|X|) \mu(X) \wmax^2 / |X| \le \Sigma(Y), 
    \end{align*}
    where the last inequality holds by assumption of the theorem statement. This proves the first claim $d \cdot \lambda(X') \le \Sigma(Y) - d \wmax$.
    
    By Theorem~\ref{thm:BWone} we have $\Sigma(X') \ge 0.75 \Sigma(X) / d$. Using the first inequality of (\ref{eq:sfhohg}) now yields
    $$ d(\Sigma(X') - 2 \lambda(X')) \ge 0.75 \Sigma(X) - 4 C_\lambda(X) \cdot \mu(X) \wmax \Sigma(X) / |X|^2. $$
    Using $|X|^2 \ge 1500^2 \cdot \log^3(2|X|) \mu(X) \wmax$ (from the theorem assumption) now yields
    $$ d(\Sigma(X') - 2 \lambda(X')) \ge (0.75 \cdot 1500^2 \cdot \log^3(2|X|) - 4 C_\lambda(X)) \mu(X) \wmax \Sigma(X) / |X|^2. $$
    Since $C_\lambda(X) \le 170000 \log(2|X|)$ we obtain
    $$ d(\Sigma(X') - 2 \lambda(X')) \ge 1000000 \log^3(2|X|) \mu(X) \wmax \Sigma(X) / |X|^2 \ge 4 \mu(X) \wmax \Sigma(X) / |X|^2. $$
    Finally, Theorem~\ref{thm:BWone} yields $d \le 4 \mu(X) \Sigma(X) / |X|^2$ and thus $d(\Sigma(X') - 2 \lambda(X')) \ge d \wmax$.
  \end{proof}
  
  Now consider the multi-set $Y$. 
  Suppose that $|Y| \ge d$ and pick any elements $y_1,\ldots,y_d \in Y$. Consider the prefix sums $s_i := \sum_{j=1}^i y_j$ for $i=0,1,\ldots,d$. By the pigeonhole principle, two prefix sums are equal modulo $d$; say we have $s_i = s_j$ for $i < j$. Then $s_j - s_i = y_{i+1} + y_{i+2} + \ldots + y_j$ is divisible by $d$. Moreover, we have $y_{i+1} + y_{i+2} + \ldots + y_j \le (j-i) \wmax \le d \wmax$. We have thus shown that there exists $Y' \subseteq Y$ whose sum $\Sigma(Y')$ is divisible by $d$ and bounded from above by $d \wmax$, assuming that $|Y| \ge d$.
  
  Repeatedly find such a multi-set $Y' \subseteq Y$, remove its elements from $Y$, and continue on the remaining multi-set $Y$. This works until $Y$ has less than $d$ elements remaining. This process yields multi-sets $Y_1,\ldots,Y_k \subseteq Y$ whose sums $\Sigma(Y_1),\ldots,\Sigma(Y_k)$ are divisible by $d$ and bounded from above by $d \wmax$. Moreover, since less than $d$ elements remain, their total sum is $\Sigma(Y_1) + \ldots + \Sigma(Y_k) \ge \Sigma(Y) - d \wmax \ge d \cdot \lambda(X')$ by Claim~\ref{cla:intervalbounds}. 
  Since by Claim~\ref{cla:intervalbounds} the interval $[\,d \cdot \lambda(X'),\,d(\Sigma(X') - \lambda(X'))\,]$ has length at least $d \wmax$, 
  it follows that one of the prefix sums $\Sigma(Y_1) + \ldots + \Sigma(Y_\ell)$ for some $\ell$ lies in the interval $[\,d \lambda(X'),\,d(\Sigma(X') - \lambda(X'))\,]$ and is divisible by $d$. Since $\Sigma(Y_1) + \ldots + \Sigma(Y_\ell)$ is a subset sum of $Y$, and since every integer in $[\,d \cdot \lambda(X'),\,d(\Sigma(X') - \lambda(X'))\,]$ that is divisible by $d$ is a subset sum of $X$, it follows that there exists a number that is a subset sum of both $X$ and $Y$.
\end{proof}

\section{Proximity Bound} \label{sec:proximity}

The main result of this section is the following theorem, which provides a partitioning of the items of a 0-1-Knapsack instance into few parts, along with a certain proximity bound for each part.
In more detail, the theorem states that we can partition the items into a polylogarithmic number of sets $I_1,\ldots,I_k$ with corresponding proximity bounds $\Delta_1,\ldots,\Delta_k$ such that (1) inside each part $I_j$ the optimal solution deviates from the maximal prefix solution by a total weight of at most $\Delta_j$, and (2) the $\Delta_j$'s are not too large, in particular the product of $\Delta_j$ and the number of distinct weights in $I_j$ is bounded by $\tOh(\wmax^2)$.

In what follows for a 0-1-Knapsack instance $(w,p,W)$ and an index set $I \subseteq [n]$ we denote by $w(I)$ the multi-set $\{w_i : i \in I\}$, and by $\supp(w(I))$ the support of this multi-set, i.e., the set of distinct weights appearing among the items in $I$.

\begin{theorem}[Item Partitioning with Proximity Bounds] \label{thm:partitioning}
  For a 0-1-Knapsack instance $(w,p,W)$ of size $n$ with distinct profit-to-weight ratios $p_1/w_1 > \ldots > p_n/w_n$, let $g$ be the maximal prefix solution, and let $x^*$ be an arbitrary optimal solution.
  Given $w,p,W$, in time $\tOh(n)$ we can compute a partitioning $I_1 \cup \ldots \cup I_k = [n]$ and proximity bounds $\Delta_1,\ldots,\Delta_k$ with $k = O(\log^2 n)$ such that for each $1 \le j \le k$ we have
  $$ \sum_{i \in I_j} w_i |g_i - x^*_i| \le \Delta_j \quad \text{and} \quad |\supp(w(I_j))| \cdot \Delta_j \le 300000000 \log^3(2n) \cdot \wmax^2. $$
\end{theorem}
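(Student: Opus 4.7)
\textit{Setup and exchange principle.} Fix any optimal solution $x^*$ and let $t$ be the split index of $g$; write $A := \{i < t : x^*_i = 0\}$ and $B := \{i \ge t : x^*_i = 1\}$ for the disagreements between $g$ and $x^*$. Maximality of $g$ relates $\Sigma(w(A))$ and $\Sigma(w(B))$ up to an additive $\wmax$ slack, so the two sides can be handled symmetrically. The central principle I use is that for any $I_Y \subseteq A$ and $I_X \subseteq B$, the two hypotheses of Lemma~\ref{lem:addcomb} (applied to $X = w(I_X)$ and $Y = w(I_Y)$) cannot both hold: if they did, the equal-sum subsets the lemma produces would yield a weight-preserving swap---discard items of matching weights from $I_X$, currently in $x^*$, in favor of items from $I_Y$, currently outside $x^*$---that strictly raises profit because the strict ratio ordering places every $I_Y$-item above every $I_X$-item, contradicting optimality.

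\textit{Partition.} I partition $[n]$ with two independent dyadic refinements: by distance from $t$ in the sorted order, and by weight magnitude. Concretely, define position layers $L_h^- := \{i < t : 2^h \le t - i < 2^{h+1}\}$ and $L_h^+ := \{i \ge t : 2^h \le i - t + 1 < 2^{h+1}\}$ for $h \ge 0$, and weight buckets $W_j := \{i : 2^j \le w_i < 2^{j+1}\}$ for $j \ge 0$. The partition is the family of nonempty intersections $L_h^\pm \cap W_j$; there are $k = O(\log^2 n)$ of them, they are computable in $\tOh(n)$ time, and any part contained in $W_j$ satisfies $|\supp(w(\cdot))| \le 2^j$.

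\textit{Per-part proximity.} For a prefix part $P := L_h^- \cap W_j$ (the suffix case is symmetric), the deviation is $\Sigma(w(A \cap P))$. I apply the exchange principle with $I_Y := A \cap P$ and a non-adaptively defined matching set $I_X \subseteq B$ drawn from a compatible suffix position layer and restricted to the same weight bucket $W_j$. If the first hypothesis of Lemma~\ref{lem:addcomb} fails, $|I_X|$ is small, so $\Sigma(w(I_X)) \le \wmax \cdot |I_X|$ is small and the weight balance localized to $W_j$ bounds $\Sigma(w(A \cap P))$. If the second hypothesis fails, one obtains $\Sigma(w(A \cap P)) = \tOh(\mu(w(I_X)) \wmax^2 / |I_X|)$; the restriction $I_X \subseteq W_j$ forces at most $2^j$ distinct weights, and a spread-structure argument brings $\mu(w(I_X))/|I_X|$ down to order $1/2^j$ up to log factors, yielding $\Delta_{h,j}^- = \tOh(\wmax^2/2^j)$ and hence $|\supp(w(P))| \cdot \Delta_{h,j}^- = \tOh(\wmax^2)$.

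\textit{Main obstacle.} The hard part is that $I_X$ must be chosen without knowledge of $x^*$, yet deliver a useful bound against every adversarial $B$. Problematic $B$'s include those concentrated on few weights inside $W_j$ (which inflates $\mu(w(I_X))$) and those sparse across positions (which invalidates the first hypothesis of Lemma~\ref{lem:addcomb}). I expect the proof to maintain a logarithmically small family of candidate matching sets per part---varying over the choice of suffix position layer and over sub-ranges inside $W_j$---and to argue that at least one candidate delivers the desired bound in each structural regime, with the localized weight-balance identity closing the residual case. Carefully tracking the $\log^3(2n)$ factor from Lemma~\ref{lem:addcomb} through the two dyadic refinements then delivers the final product bound $|\supp(w(I_j))| \cdot \Delta_j = \tOh(\wmax^2)$.
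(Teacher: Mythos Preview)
Your proposal has a genuine gap at the heart of the argument: control of the multiplicity parameter $\mu(X)$ in Lemma~\ref{lem:addcomb}. Restricting $I_X$ to a weight bucket $W_j$ bounds $|\supp(w(I_X))|$ by $2^j$, but it does \emph{not} bound $\mu(w(I_X))$; if the items in $I_X$ share a single weight (nothing in your construction forbids this), then $\mu(w(I_X)) = |I_X|$ and the failure of the second hypothesis of Lemma~\ref{lem:addcomb} yields only $\Sigma(Y) = \tOh(\wmax^2)$, not $\tOh(\wmax^2/2^j)$. Your ``spread-structure argument'' is asserting precisely the multiplicity control that is the crux of the theorem, without any mechanism to enforce it. Relatedly, there is no ``weight balance localized to $W_j$'': the identity $|\Sigma(w(A)) - \Sigma(w(B))| < \wmax$ is global, and $A \cap W_j$ and $B \cap W_j$ can be arbitrarily unbalanced, so your first-hypothesis-fails branch also collapses.

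The paper's partition is organised around the parameter you are missing. It does not slice by weight magnitude or by position; instead, each step of an iterative procedure \textsc{SingleStep} extracts items whose weights lie in the currently maximal dyadic multiplicity class (weights appearing $\Theta(m)$ times among the remaining items). This directly forces $\mu(X) \le m$ in every application of Lemma~\ref{lem:addcomb}, and because those weights each occur $\Theta(m)$ times one gets $|\supp(w(I))| \le O(|I|/m)$, so the product $|\supp(w(I))| \cdot \Delta$ telescopes to $\tOh(\wmax^2)$. A second misdiagnosis in your sketch: the exchange set $I_X$ is \emph{not} required to lie inside $B$. In the paper it is carved from a buffer $J^+ \setminus I^+$ of items with $g_i = 0$ that may well satisfy $x^*_i = 0$ as well; its large size is guaranteed by the partition's construction, independent of the adversarial $x^*$, and a simple case split on how $x^*$ intersects this buffer replaces your ``logarithmic family of candidate matching sets''. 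So your stated ``main obstacle'' (choosing $I_X$ robustly against every $B$) is not the real difficulty---the real difficulty is arranging $\mu(X)/|X|$ to be small, and that requires partitioning by multiplicity rather than by weight.
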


We remark that this theorem is the main novelty of our algorithm. The classic proximity bound (\ref{eq:standardproximity}) can be used to show that the unique partitioning consisting of $k=1$ part satisfies $|\supp(w(I_j))| \cdot \Delta_j \le \tOh(\wmax^{3})$. This bound was recently improved to $\tOh(\wmax^{2.5})$~\cite{Jin23}. 
Chen et al.\ were the first to study partitionings into $k>1$ parts. They constructed a partitioning into at most 6 parts with $|\supp(w(I_j))| \cdot \Delta_j \le \tOh(\wmax^{2.4})$~\cite[Lemmas 13, 16 and 17]{ChenLMZ23}. We improve this construction to $\tOh(\wmax^2)$.

\medskip
The rest of this section is devoted to the proof of Theorem~\ref{thm:partitioning}. 
We start by presenting algorithm \textsc{SingleStep}$(U)$, which is given a set $U \subseteq [n]$ and constructs a set $I \subseteq U$ and a distance bound~$\Delta$ with the same guarantees as in Theorem~\ref{thm:partitioning} (see Lemma~\ref{lem:constronestepguarantee} below). Later we will repeatedly apply algorithm \textsc{SingleStep} to obtain the partitioning promised by Theorem~\ref{thm:partitioning}. 
Algorithm \textsc{SingleStep}$(U)$ works as follows:
%
%
%
%

\begin{algorithm}[H]
    \caption{\textsc{SingleStep}$(U)$, with input $U \subseteq [n]$}\label{alg:SingleStep}
    \begin{algorithmic}[1]
        \State We consider the class of weights of largest multiplicity, defined as follows.
For every $\hat m \in [2n]$ that is a power of 2, we define 
$$ \cW_{\hat m} := \big\{\hat w \in [\wmax] : \hat m / 2 < |\{i \in U : w_i = \hat w\}| \le \hat m \big\}. $$
Let $m = m_U$ be the largest power of 2 such that $\cW_m$ is non-empty, and let $\cW := \cW_m$. \label{step:singlestep_one}

\State We let $J = J_U = \{i \in U:  w_i \in \cW \}$ be the items with weights in $\cW$. We split $J$ into the items picked by $g$ and the items unpicked by $g$: \label{step:singlestep_two}
$$ J^- := \{i \in J : g_i = 1 \}, \quad J^+ := \{i \in J : g_i = 0 \}. $$


\State We define $I^- \subseteq J^-$ to consist of the smallest $\lceil |J^-|/2 \rceil$ indices in $J^-$ (i.e., the items with largest profit-to-weight ratio), and $I^+ \subseteq J^+$ to consist of the largest $\lceil |J^+|/2 \rceil$ indices in $J^+$ (i.e., the items with smallest profit-to-weight ratio). Finally, we define $I = I_U$ to be the larger of the sets $I^-$ and $I^+$.

\State We set the distance bound to $\Delta = \Delta_U := 36000000 \log^3(2n) \cdot m \, \wmax^2 / |I|$.
    \end{algorithmic}
\end{algorithm}

In what follows we analyze this algorithm. 
We will use the notation introduced in the above algorithm throughout the rest of this section.
We start with a simple observation on the size of $I$.

\begin{obs} \label{obs:sizeI}
  We have $|I| \ge |J|/4$.
\end{obs}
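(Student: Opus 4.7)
The plan is to unwind the definitions and apply a pigeonhole-style inequality on $|J^-|$ and $|J^+|$. First I would note that by construction $|I^-| = \lceil |J^-|/2 \rceil$ and $|I^+| = \lceil |J^+|/2 \rceil$, so in particular $|I^-| \ge |J^-|/2$ and $|I^+| \ge |J^+|/2$. Since $I$ is defined as the larger of the two sets, this gives
\[
  |I| \;\ge\; \max\!\bigl(|J^-|/2,\,|J^+|/2\bigr) \;=\; \tfrac{1}{2}\max\!\bigl(|J^-|,\,|J^+|\bigr).
\]

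Next I would use that the maximum of two nonnegative numbers is at least their average: $\max(|J^-|,|J^+|) \ge (|J^-|+|J^+|)/2$. Since $J = J^- \cup J^+$ is a disjoint union (the items of $J$ are split according to whether $g_i=1$ or $g_i=0$), we have $|J^-|+|J^+| = |J|$, and therefore $\max(|J^-|,|J^+|) \ge |J|/2$. Plugging this into the previous displayed inequality yields $|I| \ge |J|/4$, as claimed.

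There is no real obstacle here; the observation is essentially a two-line calculation from the definitions of $I^-$, $I^+$, and $I$. The only thing to be careful about is that $J^-$ and $J^+$ genuinely partition $J$, which is immediate from the definition in Step~\ref{step:singlestep_two} of the algorithm since each $i \in J$ satisfies either $g_i = 0$ or $g_i = 1$.
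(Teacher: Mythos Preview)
Your proof is correct and follows essentially the same approach as the paper: both use that $I$ is the larger of $I^-,I^+$ (hence at least their average), that $|I^\pm| \ge |J^\pm|/2$, and that $J^-,J^+$ partition $J$. The paper just writes it as the single chain $|I| \ge (|I^-|+|I^+|)/2 \ge (|J^-|+|J^+|)/4 = |J|/4$.
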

\begin{proof}
  Note that
  $$ |I| \ge (|I^-| + |I^+|)/2 \ge (|J^-| + |J^+|)/4 = |J|/4, $$
  where we first used that $I$ is the larger of $I^-,I^+$, then that $I^-$ is at least half of $J^-$ and $I^+$ is at least half of $J^+$, and finally that $J^- \cup J^+ = J$ is a partitioning.
\end{proof}

We now observe the following upper bound on $\Delta$ times the number of distinct weights among~$I$.

\begin{lemma} \label{lem:easyproximity}
  We have $|\supp(w(I))| \cdot \Delta \le 300000000 \log^3(2n) \cdot \wmax^2$.
\end{lemma}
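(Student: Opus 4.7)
The plan is to chain together three simple size inequalities connecting $|\supp(w(I))|$, $|\cW|$, $|J|$, and $|I|$, and then substitute into the definition of $\Delta$.

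First I would observe that every item in $I \subseteq J$ has weight in $\cW$, so $|\supp(w(I))| \le |\supp(w(J))| = |\cW|$. Next, by the definition of $\cW = \cW_m$ in Step~\ref{step:singlestep_one}, each weight $\hat w \in \cW$ has more than $m/2$ items in $U$ with that weight. Summing this count over $\hat w \in \cW$ yields $|J| > (m/2) \cdot |\cW|$. Combining with Observation~\ref{obs:sizeI}, which gives $|I| \ge |J|/4$, I get
\[
|I| \ge \frac{|J|}{4} > \frac{m \cdot |\cW|}{8}.
\]

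Now I would plug this into the definition $\Delta = 36000000 \log^3(2n) \cdot m \wmax^2 / |I|$ to obtain
\[
\Delta < 36000000 \log^3(2n) \cdot m \wmax^2 \cdot \frac{8}{m \cdot |\cW|} = \frac{288000000 \log^3(2n) \cdot \wmax^2}{|\cW|}.
\]
Multiplying both sides by $|\supp(w(I))| \le |\cW|$ yields
\[
|\supp(w(I))| \cdot \Delta \le 288000000 \log^3(2n) \cdot \wmax^2 \le 300000000 \log^3(2n) \cdot \wmax^2,
\]
which is exactly the desired bound.

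There is no real obstacle here; the statement is essentially a bookkeeping check that the constant $36000000$ and the factor of $m$ inside $\Delta$ are calibrated so that, after losing a factor of $|\cW|/|I| \le 8/m$ in the denominator and paying a factor of $|\cW|$ in the numerator, the $m$'s cancel and the constant $8 \cdot 36000000 = 288000000$ stays under $3 \cdot 10^8$. The one place that deserves care is the lower bound $|J| > (m/2)|\cW|$: it is crucial that the definition of $\cW_m$ uses a strict lower bound $m/2 < |\{i \in U : w_i = \hat w\}|$, so that $m$ cancels cleanly with the $m$ in the numerator of $\Delta$, rather than leaving a spurious factor.
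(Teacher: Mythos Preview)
Your proof is correct and follows essentially the same approach as the paper: both arguments use $\supp(w(I)) \subseteq \cW$, the lower bound $|J| \ge (m/2)|\cW|$, and Observation~\ref{obs:sizeI} to get $|\supp(w(I))| \le 8|I|/m$, then substitute into the definition of $\Delta$ so that $m$ and $|I|$ cancel. The only cosmetic difference is the order in which you combine the inequalities; your final remark about the strict inequality being ``crucial'' is a slight overstatement, since the non-strict bound $|J| \ge (m/2)|\cW|$ (which the paper uses) already suffices for the cancellation.
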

\begin{proof}
  Note that $|J| \ge \frac m2 \cdot |\cW|$ holds since all weights from $\cW$ appear at least $m/2$ times in $U$, and thus also in $J$. 
  Together with Observation~\ref{obs:sizeI} we obtain $|I| \ge |J|/4 \ge \frac m8 |\cW|$. 
  Since $\supp(w(I)) \subseteq \cW$, this yields $|\supp(w(I))| \le 8 |I| / m$. By the definition of $\Delta$ we now have
  $$ |\supp(w(I))| \cdot \Delta \le \frac {8 |I|}{m} \cdot 36000000 \log^3(2n) \cdot \frac {m \, \wmax^2}{|I|} \le 300000000 \log^3(2n) \cdot \wmax^2. \qedhere $$
\end{proof}

The following auxiliary lemma prepares the proof of the main proximity guarantee.


\begin{lemma} \label{lem:noequalsubsets}
  There are no non-empty sets $A \subseteq \{ i \in [n] : x^*_i = 0 \}$ and $B \subseteq \{ i \in [n] : x^*_i = 1 \}$ with 
  $\max(A) < \min(B)$ that have the same sum of weights $\Sigma(w(A)) = \Sigma(w(B))$.
\end{lemma}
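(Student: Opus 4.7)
The plan is a straightforward exchange argument. Suppose for contradiction that such non-empty sets $A$ and $B$ exist. Define a new vector $x' \in \{0,1\}^n$ by setting $x'_i := 1$ for $i \in A$, $x'_i := 0$ for $i \in B$, and $x'_i := x^*_i$ for all other $i$. Because $A \subseteq \{i : x^*_i = 0\}$ and $B \subseteq \{i : x^*_i = 1\}$, the sets $A$ and $B$ are disjoint and $x'$ is a well-defined 0-1 vector. Its weight is
$$w^T x' \;=\; w^T x^* - \Sigma(w(B)) + \Sigma(w(A)) \;=\; w^T x^* \;\le\; W,$$
so $x'$ is feasible.

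The key step is then to show that $p^T x' > p^T x^*$, which contradicts the optimality of $x^*$. Since the items are sorted by strictly decreasing profit-to-weight ratio and $\max(A) < \min(B)$, for every $a \in A$ and $b \in B$ we have $p_a/w_a > p_b/w_b$. Set $r_A := \min_{a \in A} p_a/w_a$ and $r_B := \max_{b \in B} p_b/w_b$; the above gives $r_A > r_B$. Then
$$\sum_{i \in A} p_i \;=\; \sum_{i \in A} \tfrac{p_i}{w_i}\, w_i \;\ge\; r_A \cdot \Sigma(w(A)), \qquad \sum_{i \in B} p_i \;=\; \sum_{i \in B} \tfrac{p_i}{w_i}\, w_i \;\le\; r_B \cdot \Sigma(w(B)).$$
Using $\Sigma(w(A)) = \Sigma(w(B)) > 0$ (which holds since $A$ is non-empty and all weights are positive integers), we conclude that $\sum_{i \in A} p_i \ge r_A \cdot \Sigma(w(A)) > r_B \cdot \Sigma(w(B)) \ge \sum_{i \in B} p_i$. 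Therefore
$$p^T x' \;=\; p^T x^* + \sum_{i \in A} p_i - \sum_{i \in B} p_i \;>\; p^T x^*,$$
contradicting the optimality of $x^*$ and completing the proof.

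There is essentially no obstacle here; the only point that requires the hypothesis of \emph{distinct} profit-to-weight ratios (rather than merely non-increasing ones) is the strict inequality $r_A > r_B$. If ratios were only non-strictly ordered, the swap could be profit-neutral rather than profit-increasing, and the argument would not yield a contradiction — which is presumably why the distinctness assumption is made in Theorem~\ref{thm:partitioning}.
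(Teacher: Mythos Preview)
Your proof is correct and follows essentially the same exchange argument as the paper: swap $A$ in and $B$ out of $x^*$, observe that the weight is preserved, and use the strict ordering of profit-to-weight ratios implied by $\max(A)<\min(B)$ to obtain a strictly more profitable feasible solution, contradicting optimality. Your version is slightly more explicit in bounding the profit change via $r_A$ and $r_B$, but the idea and structure are identical.
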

\begin{proof}
  Towards a proof by contradiction, assume that sets $A,B$ as in the lemma statement exist.
  Let $x'$ be the solution obtained from~$x^*$ by removing the items $B$ and adding the items $A$, i.e., $\{ i \in [n] : x'_i = 1\} = (\{ i \in [n] : x^*_i = 1\} \setminus B) \cup A$.
  Since $\Sigma(w(A)) = \Sigma(w(B))$, we maintain the total weight $w^T x^* = w^T x'$, so the solution $x'$ is feasible.
  Since $\max(A) < \min(B)$, the profit-to-weight ratio $p_i/w_i$ of any item $i \in A$ is strictly larger than the profit-to-weight ratio $p_j/w_j$ of any item in $j \in B$, which yields $p^T x' > p^T x^*$. This contradicts $x^*$ being an optimal solution.
\end{proof}

We can now prove the main guarantee of algorithm \textsc{SingleStep}$(U)$: On the items in $I$ the optimal solution $x^*$ cannot deviate too much from the maximal prefix solution $g$. The proof of this proximity result is based on the additive combinatorics tool from the previous section.

\begin{lemma} \label{lem:constronestepguarantee}
  We have $\sum_{i \in I} w_i |g_i - x^*_i| \le \Delta$.
\end{lemma}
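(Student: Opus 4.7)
I would prove Lemma~\ref{lem:constronestepguarantee} by contradiction, combining Lemma~\ref{lem:addcomb} with Lemma~\ref{lem:noequalsubsets}. By a symmetric argument that reverses the index order (interchanging ``picked'' with ``unpicked'' by $g$), it suffices to treat the case $I = I^-$, where every $i \in I$ has $g_i = 1$, and the target sum equals $\Sigma(w(A_0))$ for $A_0 := \{i \in I : x^*_i = 0\}$. If $\sum_i w_i \le W$ then $g = x^* = \mathbf{1}$ and $A_0 = \emptyset$, so assume $\sum_i w_i > W$. Set $\tau := \max(I) \le t-1$ and $B_0 := \{i : i > \tau,\ x^*_i = 1\}$.

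Assume for contradiction $\Sigma(w(A_0)) > \Delta$. I would apply Lemma~\ref{lem:addcomb} to $X := w(A_0)$ and $Y := w(B_0)$, using three ingredients. First, since $A_0 \subseteq I \subseteq U$ and by Step~\ref{step:singlestep_one} every weight value appears at most $m$ times in $U$, we have $\mu(X) \le m$. Second, from $\Sigma(w(A_0)) > \Delta$ and each weight $\le \wmax$, we get $|A_0| > \Delta/\wmax = 36000000 \log^3(2n)\, m\, \wmax/|I|$. Third, optimality of $x^*$ forces $w^T x^* > W - \wmax$ (because $\sum_i w_i > W$ some item $i$ has $x^*_i = 0$, and if $w^T x^* + w_i \le W$ then adding $i$ would improve $x^*$), so together with $w^T g > W - \wmax$ we get $|w^T x^* - w^T g| < \wmax$, whence
\[
\sum_{g_i=0,\,x^*_i=1} w_i \;\ge\; \sum_{g_i=1,\,x^*_i=0} w_i - \wmax \;\ge\; \Sigma(w(A_0)) - \wmax \;>\; \Delta - \wmax,
\]
and since $\{i : g_i = 0,\, x^*_i = 1\} \subseteq B_0$ (as $\tau \le t-1$), we have $\Sigma(Y) > \Delta - \wmax$. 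The constant $36000000$ and the $\log^3(2n)$ factor in $\Delta$ are calibrated precisely so that the two hypotheses of Lemma~\ref{lem:addcomb}, namely $|X| \ge 1500\sqrt{\log^3(2|X|)\mu(X)\wmax}$ and $\Sigma(Y) \ge 340000 \log(2|X|)\mu(X)\wmax^2/|X|$, both follow from the displayed chain. Lemma~\ref{lem:addcomb} then yields non-empty $X' \subseteq X$, $Y' \subseteq Y$ with $\Sigma(X') = \Sigma(Y')$, which lift to non-empty $A' \subseteq A_0 \subseteq \{x^*_i = 0\}$ and $B' \subseteq B_0 \subseteq \{x^*_i = 1\}$ with $\max(A') \le \tau < \min(B')$ and equal weight sums, contradicting Lemma~\ref{lem:noequalsubsets}.

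The main obstacle is that the chain ``$|A_0| > \Delta/\wmax$ implies $|A_0|^2 \ge 1500^2 \log^3(2|A_0|)\, m\, \wmax$'' only passes in the regime $|I| \lesssim \sqrt{m\wmax}\log^{1.5}(2n)$; for larger $|I|$ the lower bound $\Delta/\wmax$ becomes too small. For the complementary large-$|I|$ regime I would swap the roles of $X$ and $Y$ in Lemma~\ref{lem:addcomb}, taking $X := w(B_0 \cap U)$ and $Y := w(A_0)$. Now $\mu(X) \le m$ again, and the classic proximity bound~(\ref{eq:standardproximity}) supplies $|X| \ge |I|/2$ once $|I|$ exceeds a small constant multiple of $\wmax$: indeed $J^- \setminus I \subseteq U \cap \{i > \tau\}$ has size $\ge |I|-1$, and only $O(\wmax)$ of its members can have $x^*_i = 0$ by~(\ref{eq:standardproximity}). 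The two choices of $X, Y$ together cover every $|I| \in [1, m\wmax]$, and this careful regime-splitting together with the precise tuning of the constants and polylogarithmic factors in $\Delta$ is the most technical part of the argument.
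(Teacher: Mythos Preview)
There is a genuine gap in your regime-splitting. Your first choice $(X,Y)=(w(A_0),w(B_0))$ requires $|A_0|^2 \ge 1500^2 \log^3(2n)\,m\,\wmax$, and from $|A_0|>\Delta/\wmax$ this forces $|I|\le 24000\sqrt{\log^3(2n)\,m\,\wmax}$, as you observe. Your second choice $(X,Y)=(w(B_0\cap U),w(A_0))$ relies on~(\ref{eq:standardproximity}) to argue that at most $O(\wmax)$ members of $J^-\setminus I^-$ can have $x^*_i=0$; this yields $|X|\ge |I|-1-O(\wmax)$, which is useful only when $|I|$ exceeds a constant multiple of~$\wmax$. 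But these thresholds need not overlap: when $m$ is small relative to $\wmax$ (take $m=1$ and $\wmax\gg\log^3(2n)$), the interval $\bigl(24000\sqrt{\log^3(2n)\,m\,\wmax},\,4\wmax\bigr)$ is nonempty, and for $|I|$ in this interval the bound from~(\ref{eq:standardproximity}) is vacuous since $2\wmax-1$ can exceed $|J^-\setminus I^-|\approx|I|-1$. So neither choice of $X$ is guaranteed to satisfy the first hypothesis of Lemma~\ref{lem:addcomb}, and your claim that the two regimes ``together cover every $|I|\in[1,m\wmax]$'' fails.

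The paper closes precisely this gap by a different split in the large-$|I|$ case: it does \emph{not} invoke~(\ref{eq:standardproximity}) at all. Instead it splits on the \emph{majority behavior} of $x^*$ on $J^-\setminus I^-$: if at least half of these items have $x^*_i=1$, take $X=w(\{i\in J^-\setminus I^-:x^*_i=1\})$ and $Y=w(\{i\in I^-:x^*_i=0\})$; otherwise at least half have $x^*_i=0$, and one takes $X=w(\{i\in J^-\setminus I^-:x^*_i=0\})$ and $Y=w(\{i\in[n]:g_i=0,\,x^*_i=1\})$. In either sub-case the pigeonhole alone gives $|X|\ge|J^-\setminus I^-|/2\ge|I|/4$, so the size hypothesis of Lemma~\ref{lem:addcomb} follows directly from $|I|>6000\sqrt{\log^3(2n)\,m\,\wmax}$, with no dependence on the classic proximity bound. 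This majority split---replacing your appeal to~(\ref{eq:standardproximity})---is the missing idea.
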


\begin{proof}
  We split the proof into several cases.
  
  \medskip
  \emph{Case 1: $|I| \le 6000 (\log^3(2n) \cdot m \, \wmax)^{1/2}$.}
  We square this inequality and rearrange it to $|I| \le 36000000 \log^3(2n) \cdot m \, \wmax / |I|$. Then we observe the trivial inequality $\sum_{i \in I} w_i |g_i - x^*_i| \le \wmax |I|$. Combining these two inequalities yields $\sum_{i \in I} w_i |g_i - x^*_i| \le 36000000 \log^3(2n) \cdot m \, \wmax^2 / |I| = \Delta$.
  
  \medskip
  \emph{Case 2: $|I| > 6000 (\log^3(2n) \cdot m \, \wmax)^{1/2}$.} We split this case into further subcases as follows.
  
  \medskip
  \emph{Case 2.1: $I = I^+$.} 
  
  \medskip
  \emph{Case 2.1.1: $\sum_{i \in J^+ \setminus I^+} |g_i - x^*_i| \le |J^+ \setminus I^+|/2$.} 
  Let $I_X = \{ i \in J^+ \setminus I^+ : x^*_i = 0 \}$ and consider its multi-set of weights $X = w(I_X)$. 
  We claim that $X$ satisfies the assumption of Lemma~\ref{lem:addcomb}, i.e., $|X| \ge 1500 (\log^3(2 |X|) \cdot \mu(X) \, \wmax)^{1/2}$. We prove this claim in the remainder of this paragraph. Note that $|X| = |I_X| \ge |J^+ \setminus I^+|/2$, since all items $i \in J^+ \setminus I^+$ have $g_i = 0$ and thus by the assumption of Case 2.1.1 at least half of these items have $x^*_i = 0$. Since $|I^+| = \lceil |J^+|/2 \rceil$ we have $|J^+ \setminus I^+| \ge |I^+|-1 = |I|-1 \ge |I|/2$, where we used $|I| > 1$ (by Case 2). 
  This yields 
  $$ |X| \ge |I|/4 \ge 1500 (\log^3(2n) \cdot m \, \wmax)^{1/2} \ge 1500 (\log^3(2 |X|) \cdot \mu(X) \, \wmax)^{1/2}, $$
  where we used the assumption of Case 2, $|X| \le n$, and $\mu(X) \le m$. Thus, $X$ satisfies the assumption of Lemma~\ref{lem:addcomb}.
  
  Let $I_Y := \{ i \in I^+ : x^*_i = 1 \}$ and consider its multi-set of weights $Y = w(I_Y)$. 
  Suppose that $\Sigma(Y)$ satisfies the assumption of Lemma~\ref{lem:addcomb}. Then we can apply Lemma~\ref{lem:addcomb} to $X$ and $Y$, which yields non-empty $X' \subseteq X$ and $Y' \subseteq Y$ with equal sum $\Sigma(X') = \Sigma(Y')$. 
  The corresponding subsets $I_{X'} \subseteq I_X$ and $I_{Y'} \subseteq I_Y$ have equal total weight $\Sigma(w(I_{X'})) = \Sigma(w(I_{Y'}))$. Since $\max(I_X) < \min(I_Y)$, this contradicts Lemma~\ref{lem:noequalsubsets}. 
  
  Therefore, $\Sigma(Y)$ cannot satisfy the assumption of Lemma~\ref{lem:addcomb}, i.e., we have 
  $$ \Sigma(Y) \le 340000 \log(2|X|) \mu(X) \wmax^2 / |X|. $$
  Now we again use $|I|/4 \le |X| \le n$ and $\mu(X) \le m$ to obtain
  $$ \Sigma(Y) \le 1360000 \log(2n) \cdot m \, \wmax^2 / |I| \le \Delta. $$
  Note that since $I = I^+$, all items $i \in I$ have $g_i = 0$ and thus $\Sigma(Y) = \Sigma(w(I_Y)) = \sum_{i \in I} w_i |g_i - x^*_i|$.
  We thus obtain $\sum_{i \in I} w_i |g_i - x^*_i| \le \Delta$, as desired. 
  
  \medskip
  \emph{Case 2.1.2: $\sum_{i \in J^+ \setminus I^+} |g_i - x^*_i| > |J^+ \setminus I^+|/2$.} 
  Let $I_X = \{ i \in J^+ \setminus I^+ : x^*_i = 1 \}$ and consider its multi-set of weights $X = w(I_X)$. 
  We claim that $X$ satisfies the assumption of Lemma~\ref{lem:addcomb}, i.e., $|X| \ge 1500 (\log^3(2 |X|) \cdot \mu(X) \, \wmax)^{1/2}$. We prove this claim in the remainder of this paragraph. Note that $|X| = |I_X| \ge |J^+ \setminus I^+|/2$, since all items $i \in J^+ \setminus I^+$ have $g_i = 0$ and thus by the assumption of Case 2.1.2 at least half of these items have $x^*_i = 1$. 
  The rest of the argument is exactly as in the previous case, and we arrive at $|X| \ge 1500 (\log^3(2 |X|) \cdot \mu(X) \, \wmax)^{1/2}$, so $X$ satisfies the assumption of Lemma~\ref{lem:addcomb}.
  
  Let $I_Y := \{ i \in [n] : g_i = 1, x^*_i = 0 \}$ and consider its multi-set of weights $Y = w(I_Y)$. 
  Suppose that $\Sigma(Y)$ satisfies the assumption of Lemma~\ref{lem:addcomb}. Then we can apply Lemma~\ref{lem:addcomb} to $X$ and $Y$, which yields $X' \subseteq X$ and $Y' \subseteq Y$ with equal sum $\Sigma(X') = \Sigma(Y')$. 
  The corresponding subsets $I_{X'} \subseteq I_X$ and $I_{Y'} \subseteq I_Y$ have equal total weight $\Sigma(w(I_{X'})) = \Sigma(w(I_{Y'}))$. Since $\max(I_Y) < \min(I_X)$, this contradicts Lemma~\ref{lem:noequalsubsets}. 
  
  Therefore, $\Sigma(Y)$ cannot satisfy the assumption of Lemma~\ref{lem:addcomb}. By the same calculations as in the previous case we arrive at
  $$ \Sigma(Y) \le 1360000 \log(2n) \cdot m \, \wmax^2 / |I| \le \Delta. $$
  It remains to relate $\Sigma(Y)$ to $\sum_{i \in I} w_i |g_i - x^*_i|$. Note that $x^*$ is maximal in the sense that no item can be added to it, and $g$ is maximal in the sense that it selects a maximal prefix of the items. It follows that either $w^T x^* = w^T g = \sum_{i \in [n]} w_i$ or
  $w^T x^*, w^T g \in (W-\wmax, W]$. Both yield
  $|w^T x^* - w^T g| < \wmax$. Moreover, we have
  $$ w^T x^* - w^T g = \sum_{i \in [n], g_i = 0} w_i |g_i - x^*_i| - \sum_{i \in [n], g_i = 1} w_i |g_i - x^*_i|,  $$
  and thus
  $$ \sum_{i \in I} w_i |g_i - x^*_i| \le \sum_{i \in [n], g_i = 0} w_i |g_i - x^*_i| < \sum_{i \in [n], g_i = 1} w_i |g_i - x^*_i| + \wmax. $$
  Finally, we note that
  $$ \Sigma(Y) = \Sigma(w(I_Y)) = \sum_{i \in [n], g_i = 1} w_i |g_i - x^*_i|, $$
  which implies the desired
  $\sum_{i \in I} w_i |g_i - x^*_i| \le \Sigma(Y) + \wmax \le \Delta$ (where we used $m \, \wmax \ge |I|$).

  \medskip
  \emph{Case 2.2: $I = I^-$.} This is symmetric to Case 2.1. To ensure that there are no subtle different details, we provide the full proof in Appendix~\ref{sec:appproofproximity}. 
\end{proof}

\begin{lemma} \label{lem:constrtime}
  Algorithm \textsc{SingleStep}$(U)$ runs in time $\tOh(n)$.
\end{lemma}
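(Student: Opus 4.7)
The plan is to argue that every step of \textsc{SingleStep}$(U)$ runs in $\tOh(|U|) \subseteq \tOh(n)$ time, assuming the customary global preprocessing has been done once up front. Specifically, I would assume that the items are already sorted by non-increasing profit-to-weight ratio and that the maximal prefix solution $g$ has been computed (both in $\tOh(n)$ time, as stated in the Preliminaries), so that $g_i$ and the relative index ordering of any $i$ are available in $O(1)$.

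For Step~\ref{step:singlestep_one}, first I would scan $U$ once to tally the multiplicity of each distinct weight, using a hash table keyed by weight (or, equivalently, a reusable length-$\wmax$ array with explicit cleanup of only the entries touched, to avoid a $\Theta(\wmax)$ initialization cost per call). With the multiplicity of each distinct weight in hand, the class index $\hat m = 2^{\lceil \log_2 \mathrm{mult}(\hat w)\rceil}$ is computed in $O(1)$ per distinct weight, and the largest such $\hat m$ gives $m$ and hence $\cW$. In addition I would mark every weight that belongs to $\cW$ so that membership queries $w_i \in \cW$ cost $O(1)$ later. All of this runs in $\tOh(|U|)$.

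For Step~\ref{step:singlestep_two}, I would make one more pass over $U$ and, for every $i \in U$ whose weight carries the $\cW$-flag, push $i$ into $J^-$ or $J^+$ according to the precomputed bit $g_i$. This is $O(|U|)$. To build $I^-$ and $I^+$, I need the $\lceil |J^-|/2 \rceil$ smallest indices of $J^-$ and the $\lceil |J^+|/2 \rceil$ largest indices of $J^+$; I would obtain these by linear-time selection in $O(|J|)$, or equivalently by sorting $J^-$ and $J^+$ by index in $O(|J| \log |J|)$, either way within the budget. Comparing the sizes of $I^-$ and $I^+$ to pick $I$, and then evaluating $\Delta$, is $O(1)$ arithmetic.

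The only subtlety will be the weight-indexed bookkeeping, where a lazy implementation could pay $\Theta(\wmax)$ per call; I would sidestep this via hashing or the explicit array cleanup described above, so that the cost of all weight-indexed operations is controlled by $|U|$ rather than $\wmax$. Summing the costs of the four steps yields total time $\tOh(|U|) \le \tOh(n)$, which establishes the lemma.
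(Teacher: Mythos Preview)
Your proposal is correct and follows essentially the same approach as the paper's own proof, which simply says that a few linear scans over $U$ suffice to compute $\cW$, $J^-$, $J^+$, and then $I^-$, $I^+$, $I$, $\Delta$. You supply considerably more implementation detail than the paper (linear-time selection for $I^\pm$, and the explicit care to avoid a $\Theta(\wmax)$ initialization cost via hashing or array cleanup), but the underlying argument is the same.
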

\begin{proof}
  This is straightforward. With one pass over all items in $U$ we can determine the set of distinct weights of all items. With another pass we can count for each distinct weight the number of items having that weight. This yields the set $\cW$. With another pass over all items in $U$ we can then determine $J^-, J^+$, and from these sets we can easily read off $I^-, I^+$, which yields $I$ and $\Delta$.
\end{proof}

We are now ready to prove our main proximity result Theorem~\ref{thm:partitioning}.

\begin{proof}[Proof of Theorem~\ref{thm:partitioning}]
  Starting with $U_1 = [n]$, we repeatedly run $\textsc{SingleStep}(U_j)$ to obtain $(I_j,\Delta_j)$ and remove $I_j$ from $U_j$ to obtain $U_{j+1}$, until $U_{j+1} = \emptyset$. For details, see the following pseudocode.

  
  \begin{algorithm}[H]
    \caption{Construction of the partitioning  guaranteed by Theorem~\ref{thm:partitioning}}\label{alg:partitioning}
    \begin{algorithmic}[1]
    \State $U_1 := [n]$
    \For{$j=1,2,\ldots$}:
      \State $(I_j,\Delta_j) := \textsc{SingleStep}(U_j)$
      \State $U_{j+1} := U_j \setminus I_j$
      \State If $U_{j+1} = \emptyset$ then return $(I_1,\Delta_1),\ldots,(I_j,\Delta_j)$
    \EndFor
    \end{algorithmic}
  \end{algorithm}
  
  To see that this procedure terminates, observe that for non-empty $U$ algorithm $\textsc{SingleStep}(U)$ computes a non-empty set $I$. Thus, the procedure terminates after at most $n$ iterations. We denote the number of iterations until termination by $k$.
  
  It is easy to see that $I_1, \ldots, I_k$ form a partitioning of $[n]$, since (1) in each iteration $j$ the items $I_j$ are selected from the remaining items $U_j = [n] \setminus (I_1 \cup \ldots \cup I_{j-1})$, and (2) in the end there are no more remaining items, so we have $\emptyset = U_{k+1} = [n] \setminus (I_1 \cup \ldots \cup I_k)$.
  
  The proximity bounds for each $I_j$ follow directly from Lemmas~\ref{lem:easyproximity} and~\ref{lem:constronestepguarantee}. 
  
  By Lemma~\ref{lem:constrtime} the total running time of this procedure is $\tOh(k n)$. 
  
  It remains to show that $k = O(\log^2 n)$.
  To this end, for any non-empty set $U \subseteq [n]$ we define the potential $\phi(U) := \log_2(m_U) \cdot 2 \lceil \log_{4/3}(n) \rceil + \lceil \log_{4/3}(|J_U|) \rceil$ (where $m_U, J_U$ are defined in lines~\ref{step:singlestep_one} and~\ref{step:singlestep_two} of algorithm $\textsc{SingleStep}(U)$).
  Observe that $\phi(U) \in \mathbb{N}_0$, since $m_U$ is a power of 2.
  Also observe that $\phi(U) \le O(\log^2 n)$, since $m_U \le 2n$ and $|J_U| \le |U| \le n$. 
  We claim that $\phi(U_{j+1}) < \phi(U_j)$ holds for any $1 \le j < k$, see Claim~\ref{cla:phiU}.
  This yields $0 \le \phi(U_k) < \phi(U_{k-1}) < \ldots < \phi(U_1) \le O(\log^2 n)$, and thus $k \le O(\log^2 n)$. 
  It remains to prove the claim.
  
  \begin{claim} \label{cla:phiU}
    We have $\phi(U_{j+1}) < \phi(U_j)$ for any $1 \le j < k$. 
  \end{claim}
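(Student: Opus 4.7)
The plan is to distinguish two cases based on whether $m_{U_{j+1}} < m_{U_j}$ or $m_{U_{j+1}} = m_{U_j}$; these are exhaustive, since removing items can only decrease the multiplicity of each weight, so $m_{U_{j+1}} \le m_{U_j}$. In the first case, both values are powers of $2$, hence $\log_2 m_{U_{j+1}} \le \log_2 m_{U_j} - 1$, and the first summand of $\phi$ decreases by at least $2 \lceil \log_{4/3}(n) \rceil$. The second summand always lies in $\{0,1,\dots,\lceil \log_{4/3}(n) \rceil\}$ since $1 \le |J_U| \le n$ (note that for $j < k$ algorithm \textsc{SingleStep} applied to the non-empty set $U_{j+1}$ produces a non-empty $\cW$, hence a non-empty $J_{U_{j+1}}$), so it can change by at most $\lceil \log_{4/3}(n) \rceil$ in absolute value. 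It follows that $\phi(U_{j+1}) \le \phi(U_j) - \lceil \log_{4/3}(n) \rceil < \phi(U_j)$.

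In the second case $m_{U_{j+1}} = m_{U_j} =: m$, the first summand of $\phi$ is unchanged, and it therefore suffices to show that $\lceil \log_{4/3}(|J_{U_{j+1}}|) \rceil < \lceil \log_{4/3}(|J_{U_j}|) \rceil$. I would establish this by first proving the containment $J_{U_{j+1}} \subseteq J_{U_j} \setminus I_j$, and then invoking Observation~\ref{obs:sizeI}, which gives $|I_j| \ge |J_{U_j}|/4$ and therefore $|J_{U_{j+1}}| \le |J_{U_j}| - |I_j| \le \tfrac{3}{4} |J_{U_j}|$. Taking $\log_{4/3}$ yields $\log_{4/3}(|J_{U_{j+1}}|) \le \log_{4/3}(|J_{U_j}|) - 1$, so the ceilings satisfy the desired strict inequality.

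The key step, and the one I expect to be the main obstacle, is verifying the containment $J_{U_{j+1}} \subseteq J_{U_j} \setminus I_j$, which requires tracking how the weight buckets $\cW_{\hat m}^{U}$ evolve when items are removed. Concretely, consider any $\hat w \in \cW_m^{U_{j+1}}$, meaning $\hat w$ occurs strictly more than $m/2$ times in $U_{j+1}$. Since multiplicities in $U_{j+1}$ are upper bounded by those in $U_j$, the weight $\hat w$ also occurs more than $m/2$ times in $U_j$; moreover, by the maximality in the choice of $m = m_{U_j}$, no weight occurs more than $m$ times in $U_j$. Hence $\hat w \in \cW_m^{U_j}$, which gives $\cW_m^{U_{j+1}} \subseteq \cW_m^{U_j}$. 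Every $i \in J_{U_{j+1}}$ therefore lies in $U_{j+1} = U_j \setminus I_j$ and satisfies $w_i \in \cW_m^{U_{j+1}} \subseteq \cW_m^{U_j}$, so $i \in J_{U_j} \setminus I_j$, completing the containment and the proof of the claim.
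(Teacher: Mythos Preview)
Your proof is correct and follows essentially the same approach as the paper: the same two-case split on whether $m_{U_{j+1}} < m_{U_j}$ or $m_{U_{j+1}} = m_{U_j}$, the same use of Observation~\ref{obs:sizeI} in the second case, and the same arithmetic on the potential. You actually supply more detail than the paper on the containment $J_{U_{j+1}} \subseteq J_{U_j} \setminus I_j$ (the paper simply asserts it ``since $U_{j+1} = U_j \setminus I_j$''), and your verification of this via $\cW_m^{U_{j+1}} \subseteq \cW_m^{U_j}$ is correct.
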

  \begin{proof}
  We use the notation $m_j := m_{U_j}$ and $J_j := J_{U_j}$ for any $j$.
  Since $U_{j+1} \subseteq U_j$ we have $m_{j+1} \le m_{j}$. We consider two cases: in one case $m_j$ decreases by a factor 2 and in the other case $m_j$ stays the same and $|J_j|$ decreases by a factor $3/4$. Thus, in both cases the potential decreases by at least 1. More details follow.
  
  \emph{Case 1: $m_{j+1} < m_{j}$.} Since $m_j, m_{j+1}$ are powers of 2 we then have $\log_2(m_{j+1}) \le \log_2(m_{j}) - 1$. Since $\log_2(m_U) \cdot 2 \lceil \log_{4/3}(n) \rceil$ is the dominant term in the definition of $\phi(U)$, we clearly have $\phi(U_{j+1}) < \phi(U_j)$. 
  More formally we can argue as follows. Since $|J_{j}| \ge 1$ and $|J_{j+1}| \le n$ we have
  \begin{align*}
    \phi(U_{j+1}) 
    &= \log_2(m_{j+1}) \cdot 2 \lceil \log_{4/3}(n) \rceil + \lceil \log_{4/3}(|J_{j+1}|) \rceil  \\
  &\le (\log_2(m_{j}) - 1) \cdot 2 \lceil \log_{4/3}(n) \rceil + \lceil \log_{4/3}(n) \rceil  \\
  &< \log_2(m_{j}) \cdot 2 \lceil \log_{4/3}(n) \rceil
  \le \log_2(m_{j}) \cdot 2 \lceil \log_{4/3}(n) \rceil + \lceil \log_{4/3}(|J_{j}|) \rceil 
  = \phi(U_j).
  \end{align*}
  
  \emph{Case 2: $m_{j+1} = m_{j}$.} In this case we observe that $J_{j+1} \subseteq J_{j} \setminus I_j$ since $U_{j+1} = U_j \setminus I_j$. 
  Using that $I_j \subseteq J_j$ and $|I_j| \ge |J_j|/4$ by Observation~\ref{obs:sizeI}, we obtain $|J_{j+1}| \le |J_j| - |I_j| \le \frac 34 |J_j|$. Thus, the potential drops:
  \begin{align*}
    \phi(U_{j+1}) 
    &= \log_2(m_{j+1}) \cdot 2 \lceil \log_{4/3}(n) \rceil + \lceil \log_{4/3}(|J_{j+1}|) \rceil  \\
  &\le \log_2(m_{j}) \cdot 2 \lceil \log_{4/3}(n) \rceil + \lceil \log_{4/3}(\tfrac 34 |J_j|) \rceil  \\
  &\le \log(m_{j}) \cdot 2 \lceil \log_{4/3}(n) \rceil + \lceil \log_{4/3}(|J_j|) \rceil - 1
  = \phi(U_j) - 1.
  \end{align*}
  In both cases we have $\phi(U_{j+1}) < \phi(U_j)$, which finishes the proof.
  \end{proof}
  Now that we proved Claim~\ref{cla:phiU} we finished the proof of Theorem~\ref{thm:partitioning}.
\end{proof}

\section{Algorithmic Ingredients}
\label{sec:algorithms}

In this section, after gathering some algorithmic ingredients in Sections~\ref{sec:prepmaxconv} and \ref{sec:greedyequal}, in Section~\ref{sec:knapsackalgo} we present an algorithm for 0-1-Knapsack that is given a partitioning and proximity bounds as in Theorem~\ref{thm:partitioning}. Combining the algorithm in Section~\ref{sec:knapsackalgo} with Theorem~\ref{thm:partitioning} then proves the main result for 0-1-Knapsack (as we will discuss in Section~\ref{sec:puttingtogether}).


\subsection{MaxPlusConv with a Concave Sequence} \label{sec:prepmaxconv}

A fundamental subroutine of all recent Knapsack algorithms is the $(\max,+)$-convolution operation, or MaxPlusConv  for short, which is defined as follows.

\begin{problem}[MaxPlusConv]
  Given two sequences $x = x[0..n]$ and $y = y[0..m]$ with entries in $\mathbb{Z} \cup \{-\infty\}$, compute the sequence $z = z[0..n+m]$ with $z[k] = \max\{ x[i] + y[k-i] : 0 \le i \le k \}$. Here out-of-bounds entries of $x$ and $y$ are interpreted as $-\infty$. We denote this operation by $z = x \star y$.
\end{problem}

MaxPlusConv is equivalent to the analogous problem MinPlusConv (with $\min$ replacing $\max$ and $\infty$ replacing $-\infty$ in the above definition).\footnote{This equivalence follows from negating all entries of $x$ and $y$, and negating the entries of the result $z$.}
MaxPlusConv and MinPlusConv are central problems in the area of fine-grained complexity theory~\cite{williams2018some}. On two sequences of length $n$ they can be solved naively in time $O(n^2)$, and a popular conjecture postulates that they cannot be solved in time $O(n^{2-\delta})$ for any constant $\delta > 0$~\cite{CyganMWW19,KunnemannPS17}. 

While in general MaxPlusConv is conjectured to require essentially quadratic time, on certain structured instances faster algorithms are known.
Here we make use of the special case where one of the sequences is concave, as has also been used, e.g., in~\cite{PolakRW21,CH2218a,AxiotisT19,KellererP04,
ChenLMZ23}.

\begin{definition}
  We say that a sequence $y \in (\mathbb{Z} \cup \{-\infty\})^m$ is \emph{concave} if there exists an offset $h \in \mathbb{N}$ and a length $\ell \in \mathbb{N}_0$ such that (1) the entries $y[0],y[h],\ldots,y[\ell \cdot h]$ are finite, (2) all other entries of $y$ are $-\infty$, (3) we have $y[i \cdot h] - y[i \cdot h - h] \ge y[i \cdot h + h] - y[i \cdot h]$ for all $1 \le i < \ell$. 
\end{definition}

\begin{lemma}[MaxPlusConv with Concave Sequence] \label{lem:maxplusconcave}
Given an arbitrary sequence $x \in (\mathbb{Z} \cup \{-\infty\})^n$ and a concave sequence $y \in (\mathbb{Z} \cup \{-\infty\})^m$, we can compute their MaxPlusConv $x \star y$ in time $O(n+m)$.
\end{lemma}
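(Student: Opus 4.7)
The plan is to reduce to the case of offset $h=1$ and then exploit a Monge-type structure of the underlying matrix. First I would split the indices of $x$ into residue classes modulo $h$: for each $r \in \{0, 1, \ldots, h-1\}$, set $x_r[j] := x[r + j h]$ (padding with $-\infty$). Because $y[k-i] = -\infty$ unless $k - i$ is a non-negative multiple of $h$ that is at most $\ell \cdot h$, the entry $z[k]$ with $k = q h + r$ only depends on entries of $x$ whose index has residue $r$ modulo $h$, so $z$ decomposes into $h$ independent MaxPlusConv subproblems. Each subproblem convolves $x_r$ with the contiguous concave sequence $y'[j] := y[j h]$ for $j = 0, \ldots, \ell$.

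Second, for each residue subproblem I would form the matrix $A[q][i] := x_r[i] + y'[q - i]$ (with out-of-range arguments treated as $-\infty$), whose row maxima are exactly the entries of $x_r \star y'$. A short calculation using the concavity of $y'$ shows that for $q_1 < q_2$ and $i_1 < i_2$ one has
\[ A[q_1][i_1] + A[q_2][i_2] \;\ge\; A[q_1][i_2] + A[q_2][i_1], \]
because the $y'$-arguments satisfy $(q_1 - i_1) + (q_2 - i_2) = (q_1 - i_2) + (q_2 - i_1)$ while the first pair is strictly closer together, and concave functions favour the closer pair. Thus $A$ is an inverse Monge matrix.

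Third, I would invoke the classical SMAWK algorithm, which computes all row maxima of a totally monotone (in particular inverse Monge) matrix with $O(1)$-time entry access in time linear in its dimensions. Applied to the $r$-th subproblem, SMAWK runs in time $O(n_r + \ell + 1)$, where $n_r = |\{i : i \equiv r \pmod h\}|$. Summing over $r$ gives $\sum_r n_r + h(\ell+1) = n + O(h \ell) = O(n+m)$, since the support of $y$ forces $m \ge \ell h$.

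The main obstacle is purely bookkeeping rather than mathematical: the $-\infty$ entries in $x$, the narrow valid range of $i$ for each $q$, and degenerate cases (e.g.\ $\ell = 0$) must be handled so that every matrix access SMAWK performs is well-defined and the inverse Monge inequality is preserved. These are standard adaptations (replace $-\infty$ by a sufficiently negative sentinel or restrict the allowed index range before running SMAWK), and they do not affect either the correctness or the $O(n+m)$ running time.
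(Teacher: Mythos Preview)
Your proposal is correct and follows essentially the same approach as the paper: split the convolution into $h$ residue classes modulo the offset $h$, show that the resulting matrix is inverse Monge via the concavity inequality, and apply SMAWK to get linear time per subproblem and $O(n+m)$ overall. Your additional remarks on handling $-\infty$ entries and degenerate cases are reasonable bookkeeping that the paper leaves implicit.
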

\begin{proof}
  This is a standard application of the SMAWK algorithm~\cite{AggarwalKMSW87}. For completeness we present the argument, following the presentation in~\cite{PolakRW21}. 
  For each remainder $r \in \{0,1,\ldots,h-1\}$ we separately compute the entries of $z = x \star y$ at indices that have remainder $r$ modulo $h$ (i.e., for a fixed $r$ our goal is to compute the entries $z[r], z[r + h], z[r + 2h], \ldots$ of $z$).
  We define the matrix $M \in (\mathbb{Z} \cup \{-\infty\})^{\lceil (n+m)/h \rceil \times \lceil n/h \rceil}$ with $M[i,j] = x[j \cdot h + r] + y[(i-j) h]$, where out-of-bounds entries of $x$ and~$y$ are interpreted as $-\infty$. We do not explicitly construct the matrix, but we can compute every entry of $M$ in constant time. Observe that the matrix $M$ is inverse-Monge, i.e.,
  \begin{align*}
    M[i,j] + M[i+1,j+1] &= x[j \cdot h + r] + y[(i-j) h] + x[(j+1) \cdot h + r] + y[(i-j) h]  \\
  &\ge x[j \cdot h + r] + y[(i-j) h + h] + x[(j+1) \cdot h + r] + y[(i-j) h - h]  \\
  &= M[i+1,j] + M[i,j+1].
  \end{align*}
  Therefore, the SMAWK algorithm~\cite{AggarwalKMSW87} is applicable and computes all row maximas of $M$ in total time $O((n+m)/h)$. Observe that the maximum of the $i$-th row of $M$ is equal to $z[i \cdot h + r]$. Running this algorithm for all $r \in \{0,1,\ldots,h-1\}$ yields all entries of $z$ and takes total time $O(n+m)$.
\end{proof}

\subsection{Greedy Algorithm for Equal Weights} \label{sec:greedyequal}

0-1-Knapsack is particularly easy to solve if all items have the same weight, because then the greedy algorithm yields an optimal solution. In this situation we can even solve the instance for every weight budget $0 \le W' \le W$ in total time $O(W)$, as shown by the following well-known lemma.

\begin{lemma}[0-1-Knapsack with Equal Weights] \label{lem:greedyequalweights}
  There is an algorithm \textsc{EqualWeights}$(w, p, W)$ that, given a 0-1-Knapsack instance $(w,p,W)$ of size $n$ with $w_1 = \ldots = w_n$ and $p_1 \ge \ldots \ge p_n$, in time $O(W)$ computes the sequence $y = y[0..W]$ with entries
  $$ y[W'] = \max\{ p^T x : w^T x = W',\, x \in \{0,1\}^n \} \quad\quad \text{for any $0 \le W' \le W$}. $$ 
\end{lemma}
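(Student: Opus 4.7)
The plan is to exploit the fact that when all items share a common weight $w$, the feasible total weights are exactly the multiples of $w$, and for each such multiple $kw$ there is a canonical optimizer. First, note that any $x \in \{0,1\}^n$ with $w^T x = W'$ must satisfy $\sum_i x_i = W'/w$, so $y[W']$ is $-\infty$ unless $W'$ is a multiple of $w$ with $W'/w \le n$. For any valid $k := W'/w$, since the profits are sorted $p_1 \ge \ldots \ge p_n$, the prefix indicator $x^{(k)} = (1,\ldots,1,0,\ldots,0)$ with exactly $k$ ones maximizes $p^T x$ subject to $\sum_i x_i = k$; this is a one-line exchange argument (if some $j>k$ is picked and some $i \le k$ is not, swapping them preserves cardinality and cannot decrease the profit because $p_i \ge p_j$). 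Hence $y[kw] = P_k$, where $P_k := p_1 + p_2 + \ldots + p_k$ is the $k$-th prefix sum of the profits.

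The algorithm \textsc{EqualWeights}$(w,p,W)$ then proceeds as follows. Initialize the output array $y[0\mathinner{..}W]$ with the value $-\infty$ in every entry, which costs $O(W)$. Let $K := \min(n, \lfloor W/w \rfloor)$. Maintaining a running sum, compute $P_0 = 0$ and $P_k = P_{k-1} + p_k$ for $k = 1,2,\ldots,K$, and after each update overwrite $y[kw] := P_k$. Since $Kw \le W$ and $w \ge 1$, we have $K \le W$, so the prefix-sum loop performs $O(K) = O(W)$ arithmetic operations and $O(K)$ array writes. Adding the $O(W)$ initialization cost, the total running time is $O(W)$ as claimed.

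Correctness of the output follows immediately: for $W' = kw$ with $0 \le k \le K$ the loop sets $y[W'] = P_k$, which by the paragraph above equals the maximum of $p^T x$ over $x \in \{0,1\}^n$ with $w^T x = kw$; for every other $0 \le W' \le W$ the entry is left as $-\infty$, which is correct because either $W'$ is not a multiple of $w$ (no feasible $x$ exists) or $W'/w > n$ (again no feasible $x$). There is no real obstacle here; the only thing to be careful about is that we must achieve $O(W)$ rather than $O(n+W)$, which is why we truncate the loop at $K \le \lfloor W/w \rfloor$ and only read the first $K$ profit values rather than scanning all of~$p$.
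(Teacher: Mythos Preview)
Your proof is correct and follows essentially the same approach as the paper: initialize $y$ to $-\infty$ everywhere, then fill in the multiples of the common weight with the profit prefix sums up to $\min\{n,\lfloor W/w\rfloor\}$. The paper merely states that this greedy strategy is optimal and ``clearly runs in time $O(W)$,'' whereas you spell out the exchange argument and the $O(W)$-versus-$O(n+W)$ subtlety, but the underlying algorithm and reasoning are identical.
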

\begin{proof}
The following simple greedy strategy is optimal in case of equal weights, and it clearly runs in time $O(W)$.


  \begin{algorithm}[H]
    \caption{\textsc{EqualWeights}$(w, p, W)$, assumes that all weights in $w$ are equal and $p$ is sorted}\label{alg:greedyequalweights}
    \begin{algorithmic}[1]
  \State Initialize $y = y[0..W]$ with $y[W'] = -\infty$ for all $0 \le W' \le W$
  \State Set $y[i \cdot w_1] := \sum_{j=1}^i p_j$ for $i=0,1,\ldots,\min\{n, \lfloor W/w_1 \rfloor \}$
  \State \Return $y$
    \end{algorithmic}
  \end{algorithm}
\end{proof}

\subsection{Algorithm for 0-1-Knapsack with given Proximity Bounds}
\label{sec:knapsackalgo}

We are now ready to present an algorithm for 0-1-Knapsack that is given a partitioning of the items and for each part a proximity bound, as in Theorem~\ref{thm:partitioning}. For more details see the following lemma.
This result for $k=1$ was proven in \cite[Lemma 2.2]{PolakRW21}. Our lemma is very similar to \cite[Lemma 6]{ChenLMZ23}, except that they partition the set of weights instead of the set of items.

\begin{lemma} \label{lem:knapsackgivenproximity}
  Suppose we are given a 0-1-Knapsack instance $(w,p,W)$ of size $n$ with distinct profit-to-weight ratios $p_1/w_1 > \ldots > p_n/w_n$, a partitioning $I_1 \cup \ldots \cup I_k = [n]$, and integers $\Delta_1 \le \ldots \le \Delta_k$. Denote the maximal prefix solution by $g$. Assume that there exists an optimal solution $x^*$ such that $\sum_{i \in I_j} w_i \cdot |g_i - x^*_i| \le \Delta_j$ holds for all $j$. 
  Then we can solve the 0-1-Knapsack instance in time $O(n + k \cdot \sum_{j=1}^k |\supp(w(I_j))| \cdot \Delta_j)$.
\end{lemma}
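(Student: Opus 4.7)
The plan is to reduce the problem to a single long sequence of $(\max,+)$-convolutions of concave sequences, exploiting both the per-part proximity bounds and the structure within each weight class. For each $j \in [k]$ and each $w \in \supp(w(I_j))$, I will build a local profit-change sequence $h_{j,w}$ indexed by the weight change $\Delta W$ relative to $g$ (necessarily a multiple of $w$), and truncated to $|\Delta W|\le \Delta_j$ so that $|h_{j,w}| = O(\Delta_j)$. Since the hypothesis guarantees \emph{some} optimal $x^*$ respecting every per-part bound, it is lossless to restrict the search to solutions obeying these bounds.

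\medskip
\noindent\emph{Constructing $h_{j,w}$.} Split the weight-$w$ items of $I_j$ into $I_j^-(w) = \{i \in I_j : w_i = w,\ g_i = 1\}$ and $I_j^+(w)$ analogously, and let $\Delta\nu$ be the signed change in the number of weight-$w$ items selected relative to $g$. The optimal profit change for a fixed $\Delta\nu$ is obtained by picking the $\max(\Delta\nu,0)$ highest-profit items of $I_j^+(w)$ and unpicking the $\max(-\Delta\nu,0)$ lowest-profit items of $I_j^-(w)$. As $\Delta\nu$ grows by $1$, the increment equals either the next profit added or the next profit kept, so the sequence of successive differences (in order of increasing $\Delta\nu$) lists the profits of $I_j^-(w)$ in increasing order followed by those of $I_j^+(w)$ in decreasing order. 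Because $g$ picks a prefix under the ordering by $p_i/w_i$ and items of equal weight are then sorted by profit, every profit in $I_j^-(w)$ is at least every profit in $I_j^+(w)$, so the concatenated difference sequence is monotone non-increasing and $h_{j,w}$ is concave with offset $w$. Both halves are read off from two calls to \textsc{EqualWeights} (Lemma~\ref{lem:greedyequalweights}) (the unpicking side with reversed profit order), in time $O(\Delta_j)$ per $(j,w)$, for a total of $O(\sum_j |\supp(w(I_j))|\Delta_j)$.

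\medskip
\noindent\emph{Combining via sequential convolution.} Shift each $h_{j,w}$ so its defined range starts at index $0$, and maintain a running sequence $R$ (arbitrary, not concave in general). Processing parts in order $j = 1,2,\ldots,k$ and weight classes in any order within a part, update $R \gets R \star \hat h_{j,w}$ via Lemma~\ref{lem:maxplusconcave} (with $R$ as the arbitrary sequence and $\hat h_{j,w}$ as the concave one), then truncate $R$ to the index window corresponding to cumulative weight change in $[-\sum_{j'\le j}\Delta_{j'},\,\sum_{j'\le j}\Delta_{j'}]$. This truncation is valid because the cumulative per-part weight change of the guaranteed $x^*$ lies in this range. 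Each update costs $O(|R|+\Delta_j) = O(\sum_{j'\le j}\Delta_{j'}+\Delta_j) = O(j\Delta_j)$ since $\Delta_1\le\cdots\le\Delta_k$, and summing over the $|\supp(w(I_j))|$ updates in part $j$ and over all parts yields total combining time $O(k\sum_j |\supp(w(I_j))|\Delta_j)$.

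\medskip
\noindent\emph{Output and main difficulty.} Translate indices of the final $R$ back to weight-change coordinates by undoing the cumulative shift, and return $p^T g + \max\{R[\Delta W] : \Delta W \le W - w^T g\}$. Correctness follows from two facts: (i) the optimal $x^*$ of the hypothesis decomposes into per-part, per-weight-class modifications that the sequential convolution explicitly maximizes over, so the returned value is at least $p^T x^*$; and (ii) every finite value of $R$ corresponds to a legitimate feasible selection of items, so the returned value is at most the optimum. Adding the $O(n)$ cost of sorting by $p_i/w_i$, computing $g$, and grouping items by $(j,w)$ gives the claimed running time. The subtle point is the per-step truncation of $R$: without the stated hypothesis that some optimum respects all per-part proximity bounds, restricting the cumulative partial sum to the $O(\sum_{j'\le j}\Delta_{j'})$-window could discard the true optimum; with the hypothesis, that window provably contains all partial sums of $x^*$, so truncation is safe and yields the desired running-time bound.
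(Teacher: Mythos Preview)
Your proposal is correct and follows the same high-level strategy as the paper---solve each weight class greedily (Lemma~\ref{lem:greedyequalweights}), exploit concavity to combine via Lemma~\ref{lem:maxplusconcave}, process parts in increasing order of $\Delta_j$, and truncate after each part to length $O(\sum_{j'\le j}\Delta_{j'})=O(j\Delta_j)$.

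The one genuine difference is that the paper maintains \emph{two} running sequences $z^+$ (weight added to $g$) and $z^-$ (weight removed from $g$), each built from separately concave pieces $y^+,y^-$, and merges them only at the very end via a prefix-maximum trick. You instead merge the add/remove sides for each $(j,w)$ into a \emph{single} concave sequence $h_{j,w}$ and maintain one running sequence $R$. Your version is slightly cleaner (no final prefix-max combination step) but requires the extra structural observation that every profit in $I_j^-(w)$ dominates every profit in $I_j^+(w)$---which is exactly what makes the two concave halves glue into one concave sequence. The paper's separation avoids needing this observation. Both yield identical running times.

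One expository slip: you write that the successive differences of $h_{j,w}$ ``list the profits of $I_j^-(w)$ in increasing order followed by those of $I_j^+(w)$ in decreasing order.'' In fact both halves appear in \emph{decreasing} profit order (the increments are $a_1\ge\cdots\ge a_s\ge b_1\ge\cdots\ge b_t$), which is precisely why the concatenation is monotone non-increasing as you then correctly assert. Your stated ordering would not give concavity, so this is just a wording error; the argument itself is sound.
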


\begin{proof}
We start by presenting the intuition. 
We split each part $I_j$ into $I^+_j := \{ i \in I_j : g_i = 0 \}$ and $I^-_j := \{ i \in I_j : g_i = 1 \}$. 
The optimal solution $x^*$ uses weight at most $\Delta_j$ among the items in $I^+_j$, and thus we can restrict our attention to weights $0 \le W' \le \Delta_j$ in $I^+_j$. Similarly, $g - x^*$ uses weight at most $\Delta_j$ among the items in $I^-_j$, and thus we can restrict our attention to weights $0 \le W' \le \Delta_j$ in~$I^-_j$. For simplicity, in this overview we focus on the set $I^+_j$.
We further split each set $I^+_j$ according to item weights; this splits $I^+_j$ into sets $I^+_{j,1},\ldots,I^+_{j,\ell_j}$ where $\ell_j := |\supp(w(I^+_j))| \le |\supp(w(I_j))|$. All items in $I^+_{j,\ell}$ have the same weight, so the greedy algorithm from Lemma~\ref{lem:greedyequalweights} computes the optimal profits for each weight $0 \le W' \le \Delta_j$ in time $O(\Delta_j)$. 
Note that the total time to run the greedy algorithm on each set $I^+_{j,\ell}$ with weight bound $\Delta_j$ is $O(\sum_{j = 1}^k \sum_{\ell = 1}^{\ell_j} \Delta_j) \le  O(\sum_{j=1}^k |\supp(w(I_j))| \cdot \Delta_j)$, plus time $O(n)$ for initialization.

It remains to combine the results of the calls to the greedy algorithm. We implement the combination operation by the MaxPlusConv algorithm for concave sequences from Lemma~\ref{lem:maxplusconcave}, and we show that the right ordering of combination operations increases the running time by only a factor $O(k)$. Hence, in total the running time is $O(n + k \cdot \sum_{j=1}^k |\supp(w(I_j))| \cdot \Delta_j)$.
In what follows we make this intuition precise.

\medskip
The following pseudocode presents our algorithm for 0-1-Knapsack.

  \begin{algorithm}[H]
    \caption{\textsc{0-1-Knapsack}$(w, p, W)$, given an instance $(w,p,W)$ of size $n$ with distinct profit-to-weight ratios $p_1/w_1 > \ldots > p_n/w_n$, a partitioning $I_1 \cup \ldots \cup I_k = [n]$ and integers $\Delta_1 \le \ldots \le \Delta_k$}\label{alg:zerooneknapsack}
    \begin{algorithmic}[1]
  \State Compute the maximal prefix solution $g$ \label{step:knapsack_one}
  \State Initialize sequences $z^+ = (0)$ and $z^- = (0)$ \label{step:knapsack_two}
  \For{$j=1,\ldots,k$} \label{step:knapsack_loopj}
    \State Compute the set $\supp(w(I_j))$ and write $\supp(w(I_j)) = \{\bar w_{j,1},\ldots,\bar w_{j,\ell_j}\}$ \label{step:knapsack_four}
    \State Compute the sets $I_{j,\ell}^+ = \{ i \in I_j : w_i = \bar w_{j,\ell},\, g_i = 0 \}$ and $I_{j,\ell}^- = \{ i \in I_j : w_i = \bar w_{j,\ell},\, g_i = 1 \}$ \\ \hspace{0.8cm} in sorted order for all $1 \le \ell \le \ell_j$.  \label{step:knapsack_six}
    \For{$\ell = 1,\ldots,\ell_j$} \label{step:knapsack_loopl}
      \State Compute $y^+ := \textsc{EqualWeights}((w_i : i \in I_{j,\ell}^+), (p_i : i \in I_{j,\ell}^+), \Delta_j)$ using Lemma~\ref{lem:greedyequalweights} \label{step:knapsack_eight}
      \State Compute $z^+ := z^+ \star y^+$ using Lemma~\ref{lem:maxplusconcave}
      \State Truncate $z^+$ to indices $0,\ldots,\sum_{j' \le j} \Delta_{j'}$
      \State Compute $y^- := \textsc{EqualWeights}((w_i : i \in I_{j,\ell}^-), (- p_i : i \in I_{j,\ell}^-), \Delta_j)$ using Lemma~\ref{lem:greedyequalweights} \label{step:knapsack_negation}
      \State Compute $z^- := z^- \star y^-$ using Lemma~\ref{lem:maxplusconcave}
      \State Truncate $z^-$ to indices $0,\ldots,\sum_{j' \le j} \Delta_{j'}$ \label{step:knapsack_endpartone}
    \EndFor
  \EndFor
  \State $s^+[0] := z^+[0]$ \label{step:knapsack_startparttwo}
  \For{$i=1,\ldots,\sum_{j \le k} \Delta_j$} $s^+[i] := \max\{ s^+[i-1], z^+[i] \}$ \EndFor
  \State \Return $p^T g + \max\{ s^+[\min\{b + W - w^T g, \sum_{j \le k} \Delta_j\}] + z^-[b] : 0 \le b \le \sum_{j \le k} \Delta_j \}$ \label{step:knapsack_endparttwo}
    \end{algorithmic}
  \end{algorithm}

In what follows we analyze correctness and running time of this algorithm.

\paragraph{Correctness}
We start by analyzing lines~\ref{step:knapsack_one}-\ref{step:knapsack_endpartone} of the algorithm, as formalized in the following claim.
Here we use the following notation:
Let $I^+ := \{i \in [n] : g_i = 0\}$ and $I^- := [n] \setminus I^+$. For the optimal solution $x^*$, denote by $x^+$ the restriction of $x^*$ to $I^+$, 
i.e., for any $i \in I^+$ we have $x^+_i = x^*_i$ and for any $i \in [n] \setminus I^+$ we have $x^+_i = 0$. Note that $x^+$ are the items that we add to the maximal prefix solution $g$ to obtain $x^*$ from $g$.
Also denote by $x^-$ the restriction of $g - x^*$ to~$I^-$. Note that $x^-$ are the items that we remove from the maximal prefix solution $g$ to obtain $x^*$ from $g$.

\begin{claim} \label{cla:algoknapsackcorrect}
  At the beginning of line~\ref{step:knapsack_startparttwo}, for any $0 \le W' \le \sum_{j \le k} \Delta_j$ we have:
  \begin{enumerate}[label=(\roman*)]
  \item $z^+[W']$ is the profit of some set of items in $I^+$ with total weight $W'$, or it is $-\infty$,
  \item $z^-[W']$ is the negated profit of some set of items in $I^-$ with total weight $W'$, or it is $-\infty$,
  \item $z^+[w^T x^+] \ge p^T x^+$, in particular $w^T x^+$ is a valid index of $z^+$,
  \item $z^-[w^T x^-] \ge - p^T x^-$, in particular $w^T x^-$ is a valid index of $z^-$.
  \end{enumerate}
\end{claim}
\begin{proof}
We start by setting up notation.
We call the $j$-th iteration of the loop in line~\ref{step:knapsack_loopj} \emph{phase $j$}, and we call the $\ell$-th iteration of the loop in line~\ref{step:knapsack_loopl} during phase $j$ \emph{iteration $(j,\ell)$}. 
For any $j \in [k], \ell \in [\ell_j]$ we define
the sets $I^+_j := I_j \cap I^+$, $I^+_{\le j} = I^+_1 \cup \ldots \cup I^+_j$, and 
$I^+_{\le (j,\ell)} = I^+_1 \cup \ldots \cup I^+_{j-1} \cup I^+_{j,1} \cup \ldots \cup I^+_{j,\ell}$. Note that $I^+_{\le (j,\ell)}$ are the items processed in $z^+$ at the end of iteration $(j,\ell)$.  

We first claim that at the end of iteration $(j,\ell)$ the number $z^+[W']$ is the profit of some set of items in $I^+_{\le (j,\ell)}$ of weight equal to $W'$, or it is $-\infty$, for every $0 \le W' \le \sum_{j' \le j} \Delta_{j'}$. By focussing on the last iteration $(k,\ell_k)$, this implies bullet (i).
The claim can be shown by induction, by using three ingredients: (1) We can inductively assume that the claim holds before iteration $(j,\ell)$. (2) By correctness of \textsc{EqualWeights}, $y^+[W']$ is the maximum profit of any subset of the items $I^+_{j,\ell}$ with total weight equal to $W'$. (3) $z^+ \star y^+$ satisfies $(z^+ \star y^+)[W'] = \max_{W_1+W_2 = W'} z^+[W_1] + y^+[W_2]$, which is equal to the maximum profit sum of any two solutions in $z^+$ and $y^+$ with weights summing to~$W'$. This proves the claim.

The proof of bullet (ii) is symmetric, up to the sign change of profits due to the negation in line~\ref{step:knapsack_negation}.

We introduce some additional notation: For the fixed optimal solution $x^*$, denote by $x^*[J]$ the restriction of $x^*$ to the items in $J$, i.e., $x^*[J]$ is a vector in $\{0,1\}^n$ with $x^*[J]_i = 1$ if $x^*_i = 1$ and $i \in J$, and $x^*[J]_i = 0$ otherwise.
Denote by $z^{+,j}$ the sequence $z^+$ at the end of phase $j$.

We claim that $z^{+,j}[w^T x^*[I^+_{\le j}]] \ge p^T x^*[I^+_{\le j}]$. By focussing on the last phase $j = k$ this implies bullet (iii), since $x^*[I^+_{\le k}] = x^*[I^+] = x^+$. The claim can be shown by induction using the proximity bounds guaranteed by the lemma statement: We can inductively assume that $z^{+,j-1}[w^T x^*[I^+_{\le j-1}]] \ge p^T x^*[I^+_{\le j-1}]$, in particular $w^T x^*[I^+_{\le j-1}]$ is a valid index in $z^{+,j-1}$, so $w^T x^*[I^+_{\le j-1}] \le \sum_{j' \le j-1} \Delta_{j'}$. By the proximity bound guaranteed by the lemma statement we have $w^T x^*[I^+_{j}] \le \Delta_j$, so $w^T x^*[I^+_{\le j}] = w^T x^*[I^+_{\le j-1}] + w^T x^*[I^+_{j}] \le \sum_{j' \le j} \Delta_{j'}$ is a valid index in $z^{+,j}$. Now observe that in iteration $(j,\ell)$ we have $y^+[w^T x^*[I^+_{j,\ell}]] \ge p^T x^*[I^+_{j,\ell}]$, since \textsc{EqualWeights} computes optimal solutions. Using the summand $y^+[w^T x^*[I^+_{j,\ell}]]$ in the MaxPlusConv in iteration $(j,\ell)$, we obtain that 
$$ z^{+,j}[w^T x^*[I^+_{\le j}]] \ge z^{+,j-1}[w^T x^*[I^+_{\le j-1}]] + \sum_{\ell=1}^{\ell_j} p^T x^*[I^+_{j,\ell}] \ge p^T x^*[I^+_{\le j-1}] + \sum_{\ell=1}^{\ell_j} p^T x^*[I^+_{j,\ell}] = p^T x^*[I^+_{\le j}], $$ 
which proves the claim.

The proof of bullet (iv) is symmetric, up to the sign change of profits in line~\ref{step:knapsack_negation}.
\end{proof}

We use the above claim to arrive at the following characterization of the optimal profit $p^T x^*$ in terms of the sequences $z^+$ and $z^-$.

\begin{claim} \label{cla:algoknapsackcorrecttwo}
At the beginning of line~\ref{step:knapsack_startparttwo}, the optimal profit $p^T x^*$ satisfies
\begin{align} \label{eq:algokanpsacaksd}
  p^T x^* = p^T g + \max\{ z^+[a] + z^-[b] : 0 \le a,b \le \sum_{j \le k} \Delta_j,\, a-b \le W - w^T g \}.
\end{align}
\end{claim}
\begin{proof}
For one direction, set $a := w^T x^+$ and $b := w^T x^-$ and use Claim~\ref{cla:algoknapsackcorrect} to obtain
$$ p^T g + z^+[a] + z^-[b] \ge p^T g + p^T x^+ - p^T x^- = p^T x^*, $$
where in last equality we used that $p^T x^+$ is the profit of all items in $I^+$ selected by $x^*$, and $p^T x^-$ is the profit of items removed from $g$ to obtain $x^*$, so $p^T g - p^T x^-$ is the profit of all items in $I^-$ selected by~$x^*$.
Claim~\ref{cla:algoknapsackcorrect} promises that $a,b$ are valid indices of $z^+,z^-$, and thus $0 \le a,b \le \sum_{j \le k} \Delta_j$. Moreover, we have $W \ge w^T x^* = w^T x^+ + w^T g - w^T x^- = a + w^T g - b$, so the constraint $a-b \le W - w^T g$ is satisfied.

For the other direction, for any $a,b$ we use that $z^+[a]$ is the profit of some set of items in $I^+$ with total weight $a$ (or $-\infty$) and $z^-[b]$ is the negated profit of some set of items in $I^-$ with total weight $b$ (or $-\infty$), so $p^T g + z^-[b]$ is the profit of some set of items in $I^-$ with total weight $w^T g - b$ (or $-\infty$). Thus, $z^+[a] + p^t g + z^-[b]$ is the profit of some set of items in $I^+ \cup I^- = [n]$ with total weight $a + w^T g - b$ (or $-\infty$). Hence, for all $a,b$ with $a-b \le W - w^T g$ the value $z^+[a] + p^t g + z^-[b]$ is the profit of some feasible solution (or $-\infty$), so we have $z^+[a] + p^t g + z^-[b] \le p^T x^*$. 
\end{proof}

It remains to argue that lines~\ref{step:knapsack_startparttwo}-\ref{step:knapsack_endparttwo} of the algorithm compute the right hand side of equation~(\ref{eq:algokanpsacaksd}).
To this end, note that the sequence $s^+$ stores the maximum of every prefix of $z^+$, i.e.,
$$ s^+[a] := \max\{ z^+[a'] : 0 \le a' \le a \}, $$
for every $0 \le a \le \sum_{j \le k} \Delta_j$.
Note that in the right hand side of equation (\ref{eq:algokanpsacaksd}) the maximum allowed value for $a$ is $\min\{b + W - w^T g, \sum_{j \le k} \Delta_j\}$. We can replace the maximization over $z^+[a]$ by $s^+$ evaluated at the maximum allowed value for $a$.  This yields
$$ p^T x^* = p^T g + \max\{ s^+[\min\{b + W - w^T g, \sum_{j \le k} \Delta_j\}] + z^-[b] : 0 \le b \le \sum_{j \le k} \Delta_j \}. $$
This shows that in line~\ref{step:knapsack_endparttwo} the algorithm computes the optimal profit $p^T x^*$, proving correctness.

\paragraph{Running time}
Lines~\ref{step:knapsack_one}-\ref{step:knapsack_two} run in time $O(n)$. Lines~\ref{step:knapsack_four}-\ref{step:knapsack_six} can be implemented by a scan over the set $I_j$, and thus run in total time $O(|I_1| + \ldots + |I_k|) = O(n)$. Lines~\ref{step:knapsack_startparttwo}-\ref{step:knapsack_endparttwo} run in time $O(\sum_{j \le k} \Delta_j)$
(since after precomputing $w^T g$ and $\sum_{j \le k} \Delta_j$ each evaluation in line~\ref{step:knapsack_endparttwo} takes constant time).
Note that in phase $j$ the length of $z^+, z^-$ is equal to $\sum_{j' \le j} \Delta_{j'} \le j \cdot \Delta_j$. Lines~\ref{step:knapsack_eight}-\ref{step:knapsack_endpartone} thus all run in time $O(j \cdot \Delta_j)$. The loop in line~\ref{step:knapsack_loopl} has $\ell_j = |\supp(w(I_j))|$ iterations. 
Hence, the total running time is 
$$ O\big(n + \sum_{j \le k} |\supp(w(I_j))| \cdot j \cdot \Delta_j\big) \le O\big(n + k \cdot \sum_{j \le k} |\supp(w(I_j))| \cdot \Delta_j\big). $$
%
This finishes the proof of Lemma~\ref{lem:knapsackgivenproximity}.
%
\end{proof}

\section{Reduction to 0-1-Knapsack}
\label{sec:reduction}

In this section we use the classic proximity bound~\cite{EisenbrandW20} to reduce Bounded Knapsack to 0-1-Knapsack on $O(\wmax^2)$ items. The techniques used in this reduction are similar to prior algorithms for Bounded Knapsack, e.g.,~\cite{EisenbrandW20,PolakRW21}, but the formulation as a reduction is new.

\begin{restatable}[Reduction]{lemma}{lemreduction}
\label{lem:reduction}
  If 0-1-Knapsack on $O(\wmax^2)$ items with distinct profit-to-weight ratios can be solved in time $\tOh(\wmax^2)$, then Bounded Knapsack can be solved in time $\tOh(n + \wmax^2)$.
\end{restatable}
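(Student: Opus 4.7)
My plan is to carry out the reduction sketched in the technical overview, using the classic proximity bound (\ref{eq:standardproximity}) to collapse a Bounded Knapsack instance to a 0-1-Knapsack instance with only $O(\wmax^2)$ items. Given $(w,p,u,W)$, I first sort the items by non-increasing profit-to-weight ratio and compute the maximal prefix solution $g$ in time $\tOh(n)$. Then, for each weight $\hat w \in [\wmax]$, I process the items of weight $\hat w$ in sorted order: since within a weight class sorting by ratio is the same as sorting by profit, the multiplicities $u_i$ induce a conceptual linear array of $N_{\hat w} := \sum_{i:\, w_i=\hat w} u_i$ copies sorted by decreasing profit, and $g$ picks exactly the prefix of size $G_{\hat w} := \sum_{i:\, w_i=\hat w} g_i$. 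For a constant $C$ obtained from (\ref{eq:standardproximity}), I \emph{commit} the top $\max(0, G_{\hat w} - C\wmax)$ copies to the picked set and the bottom $\max(0, N_{\hat w} - G_{\hat w} - C\wmax)$ copies to the unpicked set, leaving at most $2C\wmax$ undecided copies per weight class and $O(\wmax^2)$ undecided copies in total.

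For correctness, I argue WLOG that an optimal solution $x^*$ picks, within each weight class, the highest-profit available copies; otherwise a profit-preserving or profit-improving swap yields another optimum with this property. Hence $x^*$ also picks a prefix of size $G^*_{\hat w}$ of each weight class's conceptual array, so by (\ref{eq:standardproximity}) we have $\sum_{\hat w} |G^*_{\hat w} - G_{\hat w}| = \sum_i |x^*_i - g_i| = O(\wmax)$, which implies $|G^*_{\hat w} - G_{\hat w}| \le C\wmax$ for every class. Thus $x^*$ is consistent with the committed decisions. The commitment step for class $\hat w$ is executed by a single scan over the items of that weight; we never enumerate all $u_i$ copies of a committed item, but only the $O(\wmax)$ undecided copies at the two boundaries (possibly splitting one item at each boundary into a committed part and an undecided part). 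The total preprocessing therefore runs in time $\tOh(n)$, producing a 0-1-Knapsack instance $(w',p',W')$ whose items are the undecided copies, with residual budget $W' := W - (\text{total weight of committed-picked copies})$ and a committed profit $P'$ that we record separately.

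To match the hypothesis, I ensure distinct profit-to-weight ratios on the 0-1 instance by the standard trick of replacing each profit $p'_i$ by $\widetilde{p}_i := p'_i \cdot K + t_i$ for a sufficiently large integer $K$ (polynomial in $N, \wmax, \max p'_i$) and carefully chosen integer tiebreakers $t_i$; for $K$ large enough the original strict ordering on ratios is preserved (so optimal solutions do not change), and the $t_i$ can be chosen to make the new ratios pairwise distinct. Applying the hypothesised algorithm to the perturbed instance in time $\tOh(\wmax^2)$ yields $\widetilde{\OPT}$, from which we recover the true optimum of the 0-1 instance by integer division, and then the answer to the original Bounded Knapsack instance is $P' + \OPT$. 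The overall running time is $\tOh(n) + \tOh(\wmax^2) = \tOh(n + \wmax^2)$, as claimed.

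The main obstacle I anticipate is the bookkeeping at the commitment boundaries: for each weight class I must handle the case where the cutoffs $G_{\hat w} - C\wmax$ or $N_{\hat w} - G_{\hat w} - C\wmax$ fall strictly inside some item's range of copies, so that a single item contributes partly to the committed side and partly to the undecided side. Both the description of the commitments and the consistency argument with (\ref{eq:standardproximity}) have to accommodate this splitting. By comparison, the distinctness perturbation and the recovery of $\OPT$ from $\widetilde{\OPT}$ are routine, as is verifying that $w'_i \le \wmax$ so the resulting instance satisfies the input condition of the hypothesis.
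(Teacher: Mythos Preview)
Your proposal is correct and follows essentially the same route as the paper: commit, per weight class, all copies except the $O(\wmax)$ closest to the boundary between $g$-picked and $g$-unpicked; justify this via the classic proximity bound (\ref{eq:standardproximity}); then perturb profits to obtain distinct ratios. The paper's correctness argument is phrased differently (it picks an optimal $x^*$ minimizing $\sum_i |g_i-x^*_i|$ and then a secondary objective, and does a two-case swap analysis), whereas you first normalize $x^*$ to pick a prefix within each weight class and then invoke (\ref{eq:standardproximity}) directly; both arguments work and yours is arguably cleaner.

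Two small imprecisions are worth tightening. First, (\ref{eq:standardproximity}) is an existential statement about \emph{some} optimal solution, so after you swap an arbitrary optimum to a prefix-within-each-class form you cannot simply cite (\ref{eq:standardproximity}) for the resulting $x^*$; the missing line is to start from the $x^*$ guaranteed by (\ref{eq:standardproximity}) and observe that swapping to prefixes within a weight class can only decrease $\sum_i|g_i-x^*_i|$ (since $g$ itself picks a prefix there), so the bound is preserved. Second, your claim that ``the total preprocessing runs in time $\tOh(n)$'' undercounts: writing down the $O(\wmax^2)$ undecided copies already costs $\Theta(\wmax^2)$, so the reduction runs in $\tOh(n+\wmax^2)$, which is exactly what you need and what the paper states. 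Your description of the perturbation step is also slightly loose (there is no ``original strict ordering'' to preserve when ties exist); the paper's concrete choice $\widetilde p_i = M\,p'_i + (\bar n - i)\,w'_i$ after sorting makes this transparent.
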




\begin{proof}
We start with a proof overview.
We first present a reduction from 0-1-Knapsack to 0-1-Knapsack with $O(\wmax^2)$ items.
To this end, consider a given 0-1-Knapsack instance $(w,p,W)$ and its maximal prefix solution $g$.
For any weight $\hat w \in [\wmax]$, among the items of weight $\hat w$ that are picked by~$g$, we greedily pick all but the $2 \wmax$ least profitable items. 
Similarly, among the items of weight~$\hat w$ that are not picked by $g$, we remove all but the $2 \wmax$ most profitable items. 
The classic proximity bound~\cite{EisenbrandW20} shows that there exists an optimal solution that is consistent with these choices.
The remaining instance has at most $4\wmax$ items of each weight in $[\wmax]$, so we have a reduction to $O(\wmax)$ items. 

A Bounded Knapsack instance can be converted to a 0-1-Knapsack instance by replacing each item $i$ with multiplicity $u_i$ by $u_i$ copies of item $i$. Therefore, the same reduction as above also works starting from Bounded Knapsack. We show that this reduction can be implemented efficiently without explicitly blowing up the original Bounded Knapsack instance to a 0-1-Knapsack instance.
The result of the reduction then is a Bounded Knapsack instance with total multiplicity of all items $O(\wmax^2)$. Now we can afford to explicitly list all copies of items to arrive at 0-1-Knapsack.

Finally, we scale up all profits by a large factor and then add small noise terms. This maintains the optimal solution and makes all profit-to-weight ratios distinct, showing the lemma.

%
In what follows we give the details of this reduction.

\paragraph{Reduction from 0-1-Knapsack}
We first give a reduction from 0-1-Knapsack to 0-1-Knapsack with $O(\wmax^2)$ items. 
Consider a 0-1-Knapsack instance $(w,p,W)$ of size $n$, and sort it by non-increasing profit-to-weight ratio $p_1/w_1 \ge \ldots \ge p_n/w_n$.
We iterate over all weights $\hat w \in [\wmax]$, so in what follows fix the weight $\hat w$.
Denote by $i_1 < \ldots < i_{\ell}$ all indices of items of weight $\hat w$. Note that $p_{i_1} \ge \ldots \ge p_{i_{\ell}}$. 
The maximal prefix solution $g$ picks a prefix of these items, i.e., for some index $t$ we have $g_{i_j} = 1$ for all $1 \le j \le t$ and $g_{i_j} = 0$ for all $t < j \le \ell$. 
Among the items of weight $\hat w$ not picked by $g$ we want to remove all but the $2 \wmax$ most profitable items, i.e., we want to remove the items $i_{t+2\wmax+1}, i_{t+2\wmax+2},\ldots, i_\ell$. Formally, we add all items $i_{t+2\wmax+1}, i_{t+2\wmax+2},\ldots, i_\ell$ to an initially empty set $I^{(0)}$ that stores all removed items.
Similarly, among the items of weight $\hat w$ picked by $g$ we want to pick all but the $2 \wmax$ least profitable items, i.e., we want to pick the items $i_1,\ldots,i_{t-2\wmax}$. Formally, we add all items $i_1,\ldots,i_{t-2\wmax}$ to an initially empty set $I^{(1)}$ that stores all picked items. 
After performing this process for each weight $\hat w \in [\wmax]$, we arrive at some set of items $I^{(0)}$ that are to be removed and some set of items $I^{(1)}$ that are to be picked. We remember the total profit of the picked  items $P := \sum_{i \in I^{(1)}} p_i$ and reduce the weight bound accordingly to $\bar W := W - \sum_{i \in I^{(1)}} w_i$. Then we remove the coordinates in $I^{(0)} \cup I^{(1)}$ from $w$ and $p$. Call the resulting 0-1-Knapsack instance $(\bar w, \bar p, \bar W)$, and denote its size by $\bar n := n - |I^{(0)}| - |I^{(1)}|$. Note that at most $4 \wmax$ items of any weight in $[\wmax]$ remain, so we have $\bar n \le 4 \wmax^2$. 

We claim that the choices we have made are without loss of generality, i.e., there is an optimal solution $x^* \in \{0,1\}^n$ of the original 0-1-Knapsack instance $(w,p,W)$ that selects the items in $I^{(1)}$ and does not select the items in $I^{(0)}$. This implies,
$$ \max\{ \bar p^T x : \bar w^T x \le \bar W, x \in \{0,1\}^{\bar n} \} + P = \max\{ p^T x : w^T x \le W, x \in \{0,1\}^n \}. $$
To prove the claim, consider all optimal solutions $x^*$ that minimize $\sum_{i=1}^n |g_i - x^*_i|$, and among all such solutions pick an optimal solution $x^*$ minimizing $\sum_{i \in I^{(0)}} x^*_i - \sum_{i \in I^{(1)}} x^*_i$.
Suppose for the sake of contradiction that $x^*$ selects one of the items $i \in I^{(0)}$ that we have removed. We consider two cases.

\emph{Case 1: There is an item $j \in [n] \setminus (I^{(0)} \cup I^{(1)})$ of weight $w_j = w_i$ that is not selected by $x^*$.} Then we can exchange item $i$ by item~$j$ in $x^*$, i.e., we set $x^*_i = 0$ and $x^*_j = 1$. Since both items have the same weight, this exchange maintains the weight $w^T x^*$. Since we removed items of weight $w_i$ with the smallest profits, we have $p_j \ge p_i$. Thus, after this exchange $x^*$ is still an optimal solution. Since this change does not increase $\sum_{i=1}^n |g_i - x^*_i|$ and decreases $\sum_{i \in I^{(0)}} x^*_i - \sum_{i \in I^{(1)}} x^*_i$, we obtain a contradiction to the choice of $x^*$. 

\emph{Case 2: All items $j \in [n] \setminus (I^{(0)} \cup I^{(1)})$ of weight $w_j = w_i$ are selected by $x^*$.}
Let $i_1 < \ldots < i_\ell$ be all indices of items with weight $w_i$, and let $t$ be such that $g$ picks items $i_1,\ldots,i_t$ but not $i_{t+1},\ldots,i_\ell$. Since $i_{t+1},\ldots,i_{t+2\wmax} \in [n] \setminus (I^{(0)} \cup I^{(1)})$, it follows that $x^*$ selects all items  $i_{t+1},\ldots,i_{t+2\wmax}$, and thus $\sum_{i=1}^n |g_i - x^*_i| \ge 2\wmax$. 
However, this contradicts the following classic proximity bound.
\begin{lemma}[\cite{EisenbrandW20,PolakRW21}]
  We have $\sum_{i=1}^n |g_i - x^*_i| \le 2\wmax - 1$.
\end{lemma}
\begin{proof}
For completeness, we repeat the proof of this proximity bound as presented in~\cite{PolakRW21}. 
Note that $x^*$ is maximal in the sense that no item can be added to it, and $g$ is maximal in the sense that it selects a maximal prefix of the items. It follows that either $w^T x^* = w^T g = \sum_{i \in [n]} w_i$ or $w^T x^*, w^T g \in (W-\wmax, W]$. Both yield 
\begin{align} \label{eq:difference}
  -\wmax < w^T (x^* - g) < \wmax.
\end{align}
Now consider the following process. Start with the vector $x^* - g$. We will move its entries to 0, while maintaining the inequalities (\ref{eq:difference}). That is, in each step of the process, if the current sum $w^T (x^* - g)$ is positive we reduce an arbitrary positive entry of $x^* - g$ by 1, otherwise we increase an arbitrary negative entry by~1. During this process in no two steps we can have the same sum, as otherwise we could apply to~$x^*$ the additions and removals performed between these two steps, obtaining another solution~$x'$. Because at both steps we had the same sum, the weight does not change, i.e., $w^T x' = w^T x^*$, so $x'$ is feasible. Since every item selected by $g$ but not $x^*$ has no lower profit than any item selected by $x^*$ but not $g$, the additions performed to obtain $x'$ from $x^*$ have a higher average profit-to-weight ratio than the removals. Since both have the same total weight, the total profit does not decrease, i.e., $p^T x' \ge p^T x^*$, so $x'$ is also an optimal solution. 
Thus, the new solution $x'$ is closer to $g$ and still optimal, contradicting the choice of $x^*$. 

Hence, the number of steps of this process is at most $2\wmax-1$. Observing that the number of steps is equal to $\sum_{i \in [n]} |x^*_i - g_i|$ finishes the proof.
\end{proof}

In both cases we arrived at a contradiction, so $x^*$ cannot select any item in $I^{(0)}$. The proof that $x^*$ selects all items in $I^{(1)}$ is symmetric. This finishes the reduction from 0-1-Knapsack to 0-1-Knapsack with $O(\wmax^2)$ items.

\paragraph{Reduction from Bounded Knapsack}
Note that there is a trivial reduction from a Bounded Knapsack instance $(w,p,u,W)$ to an equivalent 0-1-Knapsack instance $(w',p',W)$ that simply replaces every item $w_i,p_w,u_i$ by $u_i$ copies of the item $w_i,p_i$. Combining this reduction with the reduction from the last paragraph (from 0-1-Knapsack to 0-1-Knapsack with $O(\wmax^2)$ items) yields a reduction from Bounded Knapsack to 0-1-Knapsack with $O(\wmax^2)$ items. However, a naive implementation of this reduction would require time $\Omega(\sum_i u_i)$, which is not efficient enough. Therefore, we now describe how to implement the same reduction in an implicit way directly on Bounded Knapsack, running in time $\tOh(n + \wmax^2)$. This is essentially straightforward but we describe the details for completeness. We do not need to reprove correctness, as we merely present a different implementation of the same reduction.

Consider a Bounded Knapsack instance $(w,p,u,W)$ of size $n$, and sort it by non-increasing profit-to-weight ratio $p_1/w_1 \ge \ldots \ge p_n/w_n$. Initialize $u^{(0)} = u^{(1)} = (0,\ldots,0) \in \mathbb{N}_0^n$.
We iterate over all weights $\hat w \in [\wmax]$, so in what follows fix the weight $\hat w$.
Denote by $i_1 < \ldots < i_{\ell}$ all indices $i$ such that $w_i = \hat w$. Note that $p_{i_1} \ge \ldots \ge p_{i_{\ell}}$. 
The maximal prefix solution $g$ picks a prefix of these items, i.e., for some index $t$ we have $g_{i_j} = u_{i_j}$ for all $1 \le j < t$ and $g_{i_j} = 0$ for all $t < j \le \ell$. 

Among the items of weight $\hat w$ not picked by $g$ we want to remove all but the $2\wmax$ most profitable items, as described in more detail in what follows. If $u_{i_t} - g_{i_t} + \sum_{j=t+1}^{\ell} u_{i_j} \le 2\wmax$ there are at most $2\wmax$ items of weight $\hat w$ not picked by $g$, so there is nothing to do.
Otherwise, while $\ell > t$ and $u_{i_t} - g_{i_t} + \sum_{j=t+1}^{\ell-1} u_{i_j} \ge 2\wmax$, we set $u^{(0)}_{i_\ell} := u_{i_\ell}$, thereby marking all copies of the item $i_\ell$ for removal, and we set $\ell := \ell-1$. 
This process stops with $\ell = t$ or $u_{i_t} - g_{i_t} + \sum_{j=t+1}^{\ell-1} u_{i_j} < 2\wmax$, and furthermore $u_{i_t} - g_{i_t} + \sum_{j=t+1}^{\ell} u_{i_j} \ge 2\wmax$. We then set $u^{(0)}_{i_\ell} := u_{i_t} - g_{i_t} + (\sum_{j=t+1}^{\ell} u_{i_j}) - 2\wmax$, marking this many copies of item $i_\ell$ for removal, which leaves us with exactly $2\wmax$ items of weight $\hat w$ that are not picked by $g$.

Similarly, among the items of weight $\hat w$ picked by $g$ we want to pick all but the $2\wmax$ least profitable items, as described in more detail in what follows. If $\sum_{j=1}^{t} g_{i_j} \le 2\wmax$ there are at most $2\wmax$ items of weight $\hat w$ picked by $g$, so there is nothing to do.
Otherwise, set $a := 1$. While $a < t$ and $\sum_{j=a+1}^{t} g_{i_j} \ge 2\wmax$, we set $u^{(1)}_{i_a} := u_{i_a}$, thereby marking all copies of the item $i_a$ for picking, and we set $a := a+1$. 
This process stops with $a = t$ or $\sum_{j=a+1}^{t} g_{i_j} < 2\wmax$, and furthermore $\sum_{j=a}^{t} g_{i_j} \ge 2\wmax$. We then set $u^{(1)}_{i_a} := (\sum_{j=a}^{t} g_{i_j}) - 2\wmax$, marking this many copies of item $i_a$ for picking, which leaves us with exactly $2\wmax$ items of weight $\hat w$ that are picked by $g$.

After performing this process for each weight $\hat w \in [\wmax]$, we arrive at some multiplicity vectors $u^{(0)}$ marking items to be removed and $u^{(1)}$ marking items to be picked. We remember the total profit of the picked items $P := \sum_{i \in [n]} u^{(1)}_i \cdot p_i$ and reduce the weight bound accordingly to $\bar W := W - \sum_{i \in [n]} u^{(1)}_i \cdot w_i$. The we replace $u_i$ by $u'_i := u_i - u^{(0)}_i - u^{(1)}_i$ for all $i$. The remaining instance has total multiplicity $\sum_{i \in [n]} u'_i \le 4 \wmax^2$, since we have at most $4\wmax$ items left for each weight in $[\wmax]$. 
Therefore, we can explicitly write down all copies of all these items to form an equivalent 0-1-Knapsack instance $(\bar w, \bar p, \bar W)$ of size $\bar n \le 4 \wmax^2$. 

It remains to observe that this reduction computes exactly the same 0-1-Knapsack instance as the combination of the trivial reduction from Bounded Knapsack to 0-1-Knapsack and the reduction from 0-1-Knapsack to 0-1-Knapsack with $O(\wmax^2)$ items from the last paragraph.
This implies that we get an analogous correctness guarantee:
$$ \max\{ \bar p^T x : \bar w^T x \le \bar W, x \in \{0,1\}^{\bar n} \} + P = \max\{ p^T x : w^T x \le W, 0 \le x \le u, x \in \mathbb{N}_0^n \}. $$

Finally, note that the running time of this reduction is $\tOh(n + \wmax^2)$, as the main process runs in time $\tOh(n)$ and writing down the 0-1-Knapsack instance with $O(\wmax^2)$ items takes time $O(\wmax^2)$.

\paragraph{Distinct Profit-to-Weight Ratios}
It remains to make the profit-to-weight ratios distinct. After sorting we can assume that $\bar p_1/\bar w_1 \ge \ldots \ge \bar p_{\bar n}/\bar w_{\bar n}$. We replace $\bar p$ by $\bar {\bar p} \in \mathbb{N}^{\bar n}$ with $\bar{\bar p}_i = M \cdot \bar p_i + (\bar n-i) \cdot \bar w_i$ for a sufficiently large integer $M$. 
Note that for any $i < j$ we have
$$ \bar{\bar p}_i / \bar w_i = M \bar p_i / \bar w_i + (\bar n - i) > M \bar p_j / \bar w_j + (\bar n - j) = \bar{\bar p}_j / \bar w_j, $$
where we used $\bar p_i / \bar w_i \ge \bar p_j / \bar w_j$ and $i < j$. Therefore, all new profit-to-weight ratios are distinct.


It remains to show that any optimal solution $x \in \{0,1\}^{\bar n}$ for the 0-1-Knapsack instance $(\bar w, \bar{\bar p}, \bar W)$ is also an optimal solution for the instance $(\bar w, \bar p, \bar W)$. To see this, note that for any $x \in \{0,1\}^{\bar n}$ we have 
$$ \bar{\bar p}^T x = M \cdot \bar p^T x + \sum_{i \in [\bar n], x_i = 1} (\bar n - i) \cdot \bar w_i. $$
Since profits are integers and $M$ is sufficiently large, any increase in $\bar p^T x$ weighs more than any change in the second summand. Therefore, a maximizer for $\bar{\bar p}^T x$ is also a maximizer for $\bar p^T x$. (It suffices to set $M > \bar n^2 \wmax$ for this argument.) 

\medskip
Note that the reduction runs in time $\tOh(n + \wmax^2)$. Hence, if 0-1-Knapsack on $O(\wmax^2)$ items with distinct profit-to-weight ratios can be solved in time $\tOh(\wmax^2)$ then Bounded Knapsack can be solved in time $\tOh(n + \wmax^2)$.
This finishes the proof of Lemma~\ref{lem:reduction}.
\end{proof}

\section{Putting Everything Together}
\label{sec:puttingtogether}

In this section we put together the tools that we gathered in the previous sections to prove our main result.
We start by showing that 0-1-Knapsack can be solved in time $\tOh(n + \wmax^2)$.

\begin{theorem} \label{thm:zerooneknapsack}
  0-1-Knapsack on $n$ items with distinct profit-to-weight ratios can be solved in time $\tOh(n + \wmax^2)$.
\end{theorem}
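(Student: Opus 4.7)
The plan is to combine the partitioning from Theorem~\ref{thm:partitioning} with the algorithm from Lemma~\ref{lem:knapsackgivenproximity} essentially as a black box. Given a 0-1-Knapsack instance $(w,p,W)$ of size $n$ with distinct profit-to-weight ratios, I would first sort the items by non-increasing $p_i/w_i$ in time $\tOh(n)$ (this assumption is already in force as a hypothesis of both Theorem~\ref{thm:partitioning} and Lemma~\ref{lem:knapsackgivenproximity}). Then I invoke Theorem~\ref{thm:partitioning} to compute, in time $\tOh(n)$, a partitioning $I_1 \cup \cdots \cup I_k = [n]$ with $k = O(\log^2 n)$ together with proximity bounds $\Delta_1, \ldots, \Delta_k$ such that for some optimal solution $x^*$ and the maximal prefix solution $g$ we have $\sum_{i \in I_j} w_i |g_i - x^*_i| \le \Delta_j$ and $|\supp(w(I_j))| \cdot \Delta_j \le \tOh(\wmax^2)$ for each~$j$.

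Before invoking Lemma~\ref{lem:knapsackgivenproximity}, I need the bounds to be sorted non-decreasingly, so I permute the parts so that $\Delta_1 \le \cdots \le \Delta_k$; this only relabels the partitioning and costs at most $O(k \log k) = \tOh(1)$ extra time, and both properties guaranteed by Theorem~\ref{thm:partitioning} are preserved under permutation. I then run the algorithm of Lemma~\ref{lem:knapsackgivenproximity} on $(w,p,W)$ with this permuted partitioning and bounds, which outputs the optimal profit in time
\[
O\!\left(n + k \cdot \sum_{j=1}^{k} |\supp(w(I_j))| \cdot \Delta_j\right).
\]

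To finish, I bound the above running time. Using $|\supp(w(I_j))| \cdot \Delta_j \le \tOh(\wmax^2)$ from Theorem~\ref{thm:partitioning} termwise and summing over the $k = O(\log^2 n)$ parts gives $\sum_{j=1}^{k} |\supp(w(I_j))| \cdot \Delta_j \le k \cdot \tOh(\wmax^2) = \tOh(\wmax^2)$, and multiplying by the outer factor $k$ still yields $\tOh(\wmax^2)$. Combined with the $O(n)$ additive term and the $\tOh(n)$ time for the preprocessing and for Theorem~\ref{thm:partitioning}, the total time is $\tOh(n + \wmax^2)$, as claimed.

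There is essentially no obstacle here, since all the real work has been done: Theorem~\ref{thm:partitioning} supplies the nontrivial proximity/partitioning structure, and Lemma~\ref{lem:knapsackgivenproximity} provides the matching algorithm. The only small thing to notice is that Lemma~\ref{lem:knapsackgivenproximity} requires sorted $\Delta_j$'s, which is handled by a cheap permutation, and that the $\log^2 n$ factors from $k$ together with the $\log^3(2n)$ factor hidden in the bound on $|\supp(w(I_j))| \cdot \Delta_j$ are all absorbed into the $\tOh$ notation.
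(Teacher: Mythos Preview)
Your proposal is correct and follows essentially the same approach as the paper: combine Theorem~\ref{thm:partitioning} with Lemma~\ref{lem:knapsackgivenproximity} and bound the resulting running time using $k=O(\log^2 n)$ and $|\supp(w(I_j))|\cdot\Delta_j = \tOh(\wmax^2)$. You are in fact slightly more careful than the paper, which glosses over the need to permute the parts so that $\Delta_1\le\cdots\le\Delta_k$ before applying Lemma~\ref{lem:knapsackgivenproximity}.
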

\begin{proof}
  By combining Theorem~\ref{thm:partitioning} and Lemma~\ref{lem:knapsackgivenproximity}, we can solve 0-1-Knapsack in time 
  $$ \tOh\big(n + k \cdot \sum_{j \le k} |\supp(w(I_j))| \cdot \Delta_j\big) \le \tOh(n + k^2 \log^2(n) \wmax^2) \le \tOh(n + \wmax^2), $$
  where we used $|\supp(w(I_j))| \cdot \Delta_j \le O(\log^3(n) \wmax^2)$ and $k \le O(\log^2 n)$ from Theorem~\ref{thm:partitioning}.
\end{proof}

Now we can prove our main theorem, restated here for convenience.

\thmmain*
\begin{proof}
  This is immediate from combining Theorem~\ref{thm:zerooneknapsack} and Lemma~\ref{lem:reduction}.
\end{proof}

Alternatively, we can parametrize by the maximum profit $\pmax$.

\thmmainprofit*
\begin{proof}
  This result follows from Theorem~\ref{thm:main} by a simple reduction described in \cite[Section 4]{PolakRW21}.
\end{proof}

\section{Open Problems}

In this paper we presented an algorithm that solves Bounded Knapsack in time $\tOh(n + \wmax^2)$, which matches a conditional lower bound. This raises several open problems.

\begin{itemize}
\item Our algorithm has an impractical number of logarithmic factors. Indeed, the $\tOh$ in our running time hides at least a factor $\log^7 n$, where a factor $k^2 = \log^4 n$ comes from the number of parts in our partitioning, and a factor $\log^3 n$ comes from the additive combinatorics bounds. The independent work by Jin~\cite{JinArxiv23} has a factor $\log^4 \wmax$. Can these factors be improved? 
\item Our algorithm has impractical constant factors. Indeed, in Theorem~\ref{thm:partitioning} we have a factor 300000000. As this is only slightly larger than the constant $\approx$17000000 in Theorem~\ref{thm:BWtwo}, the main bottleneck are the constant factors in~\cite{BringmannW21}. Reducing the constant to a practical value might require a significant change in the proof approach of~\cite{BringmannW21}.
\item MaxPlusConv is conjectured to require essentially quadratic time in the worst case, but there are lower order improvements to time $n^2/2^{\Omega(\sqrt{\log n})}$ (by combining a reduction to All-Pairs Shortest Paths~\cite{BremnerCDEHILPT14} with an algorithm for the latter~\cite{Williams18}). Are similar improvements possible for Bounded Knapsack or 0-1-Knapsack? That is, can these problems be solved in time $\tOh(n + \wmax^2 / 2^{\Omega(\sqrt{\log \wmax})})$?
\item Can Bounded Knapsack or 0-1-Knapsack be solved in time $\tOh(n \wmax)$? Our approach does not seem able to give such a running time bound.
\item Can Bounded Knapsack or 0-1-Knapsack be solved in time $\tOh(n + (\wmax + \pmax)^{2-\delta})$ for some $\delta > 0$? Time $\tOh(n + (\wmax + \pmax)^{1.5})$ was recently shown for Unbounded Knapsack~\cite{BringmannC22}.
\item For Subset Sum the same parametrization by the number of items $n$ and the largest item $\wmax$ is well studied. Recent algorithms run in time $\tOh(n + \wmax^{1.5})$~\cite{Jin23,ChenLMZ23}, and a conditional lower bound rules out time $\wmax^{1-\delta} \cdot 2^{o(n)}$ for any $\delta > 0$~\cite{AbboudBHS22}, which leaves a gap between $\wmax$ and $\wmax^{1.5}$. Can the ideas in this paper help to close this gap?
\end{itemize}


\bibliography{main}

\appendix

\section{More Details for the Proof of Lemma~\ref{lem:constronestepguarantee}}
\label{sec:appproofproximity}

In the proof of Lemma~\ref{lem:constronestepguarantee} we argued Case 2.2 away by claiming that it is symmetric to Case 2.1. Here we spell out the details of the proof in that case, for completeness.

\begin{proof}[Proof of Lemma~\ref{lem:constronestepguarantee}]
  See Section~\ref{sec:proximity} for the majority of the proof. Here we provide the full proof of Case 2.2, for which previously we only argued that it is symmetric to Case 2.1.
  Recall that the assumption of Case 2 is $|I| > 6000 (\log^3(2n) \cdot m \, \wmax)^{1/2}$.
  
  \medskip
  \emph{Case 2.2: $I = I^-$.} We further split this case into two subcases.
  
  \medskip
  \emph{Case 2.2.1: $\sum_{i \in J^- \setminus I^-} |g_i - x^*_i| \le |J^- \setminus I^-|/2$.} 
  Let $I_X = \{ i \in J^- \setminus I^- : x^*_i = 1 \}$ and consider its multi-set of weights $X = w(I_X)$. 
  We claim that $X$ satisfies the assumption of Lemma~\ref{lem:addcomb}, i.e., $|X| \ge 1500 (\log^3(2 |X|) \cdot \mu(X) \, \wmax)^{1/2}$. We prove this claim in the remainder of this paragraph. Note that $|X| = |I_X| \ge |J^- \setminus I^-|/2$, since all items $i \in J^- \setminus I^-$ have $g_i = 1$ and thus by the assumption of Case 2.2.1 at least half of these items have $x^*_i = 1$. Since $|I^-| = \lceil |J^-|/2 \rceil$ we have $|J^- \setminus I^-| \ge |I^-|-1 = |I|-1 \ge |I|/2$, where we used $|I| > 1$ (by Case 2). Together with the assumption of Case 2 we obtain $|X| \ge |I|/4 \ge 1500 (\log^3(2n) \cdot m \, \wmax)^{1/2} \ge 1500 (\log^3(2 |X|) \cdot \mu(X) \, \wmax)^{1/2}$, since $|X| \le n$ and $\mu(X) \le n$. Thus, $X$ satisfies the assumption of Lemma~\ref{lem:addcomb}.
  
  Let $I_Y := \{ i \in I^- : x^*_i = 0 \}$ and consider its multi-set of weights $Y = w(I_Y)$. 
  
  Suppose that $\Sigma(Y)$ satisfies the assumption of Lemma~\ref{lem:addcomb}. Then we can apply Lemma~\ref{lem:addcomb} to $X$ and $Y$, which yields $X' \subseteq X$ and $Y' \subseteq Y$ with equal sum $\Sigma(X') = \Sigma(Y')$. 
  The corresponding subsets $I_{X'} \subseteq I_X$ and $I_{Y'} \subseteq I_Y$ have equal total weight $\Sigma(w(I_{X'})) = \Sigma(w(I_{Y'}))$. Since $\max(I_Y) < \min(I_X)$, this contradicts Lemma~\ref{lem:noequalsubsets}. 
  
  Therefore, $\Sigma(Y)$ cannot satisfy the assumption of Lemma~\ref{lem:addcomb}, i.e., we have 
  $$ \Sigma(Y) \le 340000 \log(2|X|) \mu(X) \wmax^2 / |X|. $$
  Now we again use $|I|/4 \le |X| \le n$ and $\mu(X) \le m$ to obtain
  $$ \Sigma(Y) \le 1360000 \log(2n) \cdot m \, \wmax^2 / |I| \le \Delta. $$
  Note that since $I = I^-$, all items $i \in I$ have $g_i = 1$ and thus $\Sigma(Y) = \Sigma(w(I_Y)) = \sum_{i \in I} w_i |g_i - x^*_i|$.
  We thus obtain $\sum_{i \in I} w_i |g_i - x^*_i| \le \Delta$, as desired. 
  
  \medskip
  \emph{Case 2.2.2: $\sum_{i \in J^- \setminus I^-} |g_i - x^*_i| > |J^- \setminus I^-|/2$.} 
  Let $I_X = \{ i \in J^- \setminus I^- : x^*_i = 0 \}$ and consider its multi-set of weights $X = w(I_X)$. 
  We claim that $X$ satisfies the assumption of Lemma~\ref{lem:addcomb}, i.e., $|X| \ge 1500 (\log^3(2 |X|) \cdot \mu(X) \, \wmax)^{1/2}$. We prove this claim in the remainder of this paragraph. Note that $|X| = |I_X| \ge |J^- \setminus I^-|/2$, since all items $i \in J^- \setminus I^-$ have $g_i = 1$ and thus by the assumption of Case 2.2.2 at least half of these items have $x^*_i = 0$. 
  The rest of the argument is exactly as in the previous case, and we arrive at $|X| \ge 1500 (\log^3(2 |X|) \cdot \mu(X) \, \wmax)^{1/2}$, so $X$ satisfies the assumption of Lemma~\ref{lem:addcomb}.
  
  Let $I_Y := \{ i \in [n] : g_i = 0, x^*_i = 1 \}$ and consider its multi-set of weights $Y = w(I_Y)$. 
  
  Suppose that $\Sigma(Y)$ satisfies the assumption of Lemma~\ref{lem:addcomb}. Then we can apply Lemma~\ref{lem:addcomb} to $X$ and $Y$, which yields $X' \subseteq X$ and $Y' \subseteq Y$ with equal sum $\Sigma(X') = \Sigma(Y')$. 
  The corresponding subsets $I_{X'} \subseteq I_X$ and $I_{Y'} \subseteq I_Y$ have equal total weight $\Sigma(w(I_{X'})) = \Sigma(w(I_{Y'}))$. Since $\max(I_X) < \min(I_Y)$, this contradicts Lemma~\ref{lem:noequalsubsets}. 
  
  Therefore, $\Sigma(Y)$ cannot satisfy the assumption of Lemma~\ref{lem:addcomb}, i.e., we have 
  $$ \Sigma(Y) \le 340000 \log(2|X|) \mu(X) \wmax^2 / |X|. $$
  Now we again use $|I|/4 \le |X| \le n$ and $\mu(X) \le m$ to obtain
  $$ \Sigma(Y) \le 1360000 \log(2n) \cdot m \, \wmax^2 / |I| \le \Delta. $$
  It remains to relate $\Sigma(Y)$ to $\sum_{i \in I} w_i |g_i - x^*_i|$. 
  Note that $x^*$ is maximal in the sense that no item can be added to it, and $g$ is maximal in the sense that it selects a maximal prefix of the items. It follows that either $w^T x^* = w^T g = \sum_{i \in [n]} w_i$ or
  $w^T x^*, w^T g \in (W-\wmax, W]$. Both yield 
  $|w^T x^* - w^T g| < \wmax$. Moreover, we have
  $$ w^T x^* - w^T g = \sum_{i \in [n], g_i = 0} w_i |g_i - x^*_i| - \sum_{i \in [n], g_i = 1} w_i |g_i - x^*_i|,  $$
  and thus
  $$ \sum_{i \in I} w_i |g_i - x^*_i| \le \sum_{i \in [n], g_i = 1} w_i |g_i - x^*_i| < \sum_{i \in [n], g_i = 0} w_i |g_i - x^*_i| + \wmax. $$
  Finally, we note that
  $$ \Sigma(Y) = \Sigma(w(I_Y)) = \sum_{i \in [n], g_i = 0} w_i |g_i - x^*_i|, $$
  which implies the desired
  $\sum_{i \in I} w_i |g_i - x^*_i| \le \Sigma(Y) + \wmax \le \Delta$.
\end{proof}

\end{document}